\numberwithin{equation}{section}
\title{Rank Bounds and PIT for $\Sigma^3 \Pi \Sigma \Pi^d$ circuits
via a non-linear Edelstein-Kelly theorem
}
\author{
Abhibhav Garg
\thanks{University of Waterloo. \ \email{\{a65garg, rafael , aksengup\}@uwaterloo.ca}}
\and
Rafael Oliveira
\samethanks
\and
Akash Kumar Sengupta
\samethanks
}
\date{}
\begin{document}

\maketitle

\begin{abstract}
We prove a non-linear Edelstein-Kelly theorem for polynomials of constant degree, fully settling a stronger form of Conjecture 30 in Gupta (2014), and generalizing the main result of Peleg and Shpilka (STOC 2021) from quadratic polynomials to polynomials of any constant degree.

As a consequence of our result, we obtain constant rank bounds for depth-4 circuits with top fanin 3 and constant bottom fanin (denoted $\Sigma^{3}\Pi\Sigma\Pi^{d}$ circuits) which compute the zero polynomial.
This settles a stronger form of Conjecture 1 in Gupta (2014) when $k=3$, for any constant degree bound; additionally this also makes progress on Conjecture 28 in Beecken, Mittmann, and Saxena (Information \& Computation, 2013).
Our rank bounds, when combined with Theorem 2 in Beecken, Mittmann, and Saxena (Information \& Computation, 2013) yield the first deterministic, \emph{polynomial time} PIT algorithm for $\Sigma^{3}\Pi\Sigma\Pi^{d}$ circuits.
\end{abstract}
\newpage
\microtypesetup{protrusion=false}
\tableofcontents
\microtypesetup{protrusion=true}
\newpage

\ifdefined\DRAFTSUBMISSION
    \linenumbers
\else
\fi

\section{Introduction}\label{section: introduction}

Polynomial Identity Testing (PIT) is a fundamental problem in algebraic complexity theory, with far ranging applications in theoretical computer science as well as in mathematics.
The PIT problem asks whether a given algebraic circuit over a field $\bF$ formally computes the zero polynomial.
When the field $\bF$ is sufficiently large, the PIT problem equivalently asks whether the given algebraic circuit evaluates to zero on all inputs.
This latter perspective is the source for a very simple randomized algorithm for the PIT problem, via the polynomial identity lemma.
When the field is large enough, to test whether the given circuit computes the zero polynomial, one simply evaluates the given circuit at a random point.
This simple algorithm places PIT in the class $\coRP$.

The task of derandomizing the PIT problem is closely tied to the quest of proving explicit lower bounds for algebraic circuits \cite{HS80, Agr05, KI04}. 
Additionally, derandomizing the PIT problem, even for special classes of circuits, has resulted in the derandomization of key problems in mathematics and computer science \cite{AKS04, FS13, Mul17, FGT19}. 
For an overview on PIT, see \cite{Sax09, SY10, Sax14}.

Given the strong consequences arising from the derandomization of the PIT problem, even for special classes of circuits, there has been significant research devoted to its solution.
Due to the difficulty of the general problem, attention has turned to more tractable and natural classes of circuits, such as the class of sparse polynomials (also known as depth-2 circuits) \cite{klivans2001randomness}, and depth-3 circuits \cite{DS07, KS09, SS13}.
The recent breakthroughs in depth reduction \cite{AV08,GKKS13, T15} have demonstrated that the general PIT problem can be reduced to the special cases of the PIT problem for unrestricted depth-3 circuits or homogeneous depth-4 circuits.
These results have rekindled interest in these classes, leading to a concentrated research effort on the PIT problem for these classes, such as the works \cite{KS09, beecken2013algebraic, SS13, gupta2014algebraic, guo2021, DDS21}.

Since the depth-3 and depth-4 cases are equivalent to the general PIT problem, an additional restriction has been considered in the aforementioned works: in the depth-3 case the top fan-in of the circuit is assumed to be constant, and in the depth-4 case both top and bottom fan-ins of the circuits have been assumed to be constant. 
Henceforth, we denote by $\SPS{k}$ the set of all depth-3 circuits with top fan-in $k$ and by $\SPSP{k}{d}$ the set of all depth-4 circuits with top fan-in bounded by $k$ and bottom fan-in bounded by $d$ (where one should think of $k$ and $d$ as being constants).

The work \cite{DS07}, which initiated the study of the PIT problem for $\SPS{k}$ circuits, gave a deterministic, quasipolynomial time algorithm for this problem.
Moreover, in~\cite[Section 7]{DS07}, the authors raised the connection between $\SPS{3}$ identities and a generalization of the famous Sylvester-Gallai problem, thereby initiating a connection between PIT and discrete geometry.
Such a connection was built in the hopes of obtaining deterministic, \emph{polynomial time} algorithms for the PIT problem for $\SPS{k}$ circuits.
As it turns out, the conjecture posed in \cite[Section 7]{DS07} had already been settled (when the base field is $\bR$) by Edelstein and Kelly in \cite{edelstein1966bisecants}.

The above connection was further studied and generalized to $\SPS{k}$ circuits by the works \cite{KS09,SS13}, culminating with a strong and elegant relationship between higher dimensional Sylvester-Gallai configurations and $\SPS{k}$ identities (\cite[Theorem 1.4]{SS13}).

The work of \cite{gupta2014algebraic} proposed non-linear generalizations of several relevant Sylvester-Gallai type conjectures that would be needed in order to generalize to the $\SPSP{k}{d}$ case the connections between PIT and discrete geometry that were forged in the $\SPS{k}$ case.
Following the breakthrough of \cite{S20}, which established the first non-linear Sylvester-Gallai theorem for quadratic polynomials, a sequence of works has confirmed several of Gupta's conjectures \cite{PS20a, PS20b, OS22, OS24, GOPS23, GOS24}.
Among these works, the paper of Peleg and Shpilka \cite{PS20b} has been the only of such works to establish a deterministic PIT algorithm for $\SPSP{3}{2}$ circuits, via a generalization of the Edelstein-Kelly theorem for quadratic polynomials.

In this work, we prove a generalized version of the Edelstein-Kelly theorem for polynomials of bounded degree, thereby fully settling a stronger form of \cite[Conjecture 30]{gupta2014algebraic} and settling Gupta's main conjecture (\cite[Conjecture 1]{gupta2014algebraic}) when $k=3$, for any constant degree $d$.
Our main result, when combined with \cite[Theorem 2]{beecken2013algebraic}, yields the first polynomial-time, deterministic PIT algorithm for $\SPSP{3}{d}$ circuits.

Before formally stating our main result, and its consequences to the PIT problem, we explain how Edelstein-Kelly configurations (and their generalizations) naturally appear in the PIT problem for $\SPSP{3}{d}$ circuits.

\subsection{PIT and generalized Edelstein-Kelly configurations}

We begin by introducing the classical linear Edelstein-Kelly configurations \cite{edelstein1966bisecants}:

\begin{definition}[Edelstein-Kelly configurations]\label{definition: linear EK}
Let $\bK$ be a field, $\cA := \{u_1, \dots, u_a\}, \cB := \{v_1, \dots, v_b\}$ and $\cC := \{w_1, \dots, w_c\}$ be pairwise disjoint subsets of $\bP(\bK^N)$.
We say that $(\cA, \cB, \cC)$ forms a linear Edelstein-Kelly configuration if given any pair of vectors in distinct sets, their linear span contains a vector in the third set.
The rank of an Edelstein-Kelly configuration $(\cA, \cB, \cC)$ is given by $\dim \Kspan{\cA \cup \cB \cup \cC}$. 
\end{definition}

When $\bK = \bR$, \cite[Theorem 3]{edelstein1966bisecants} proves that the rank of any Edelstein-Kelly configuration is at most $3$.
When $\bK = \bC$, the fractional Sylvester-Gallai theorem of \cite{BDWY11} implies that the rank of any Edelstein-Kelly configuration is bounded from above by a universal constant.
The fact that such configurations must be constant dimensional is known as an Edelstein-Kelly type theorem.

With the above definition and rank bound at hand, we explore how Edelstein-Kelly configurations naturally arise in $\SPS{3}$ identities. 
Let $\vx = (x_1, \dots, x_N)$ be a tuple of variables, and let $S := \mathbb{C}[\vx]$ denote the polynomial ring over $\mathbb{C}$. 
In the introduction, we will assume that our given polynomials and circuits are homogeneous, and we follow standard notation and refer to homogeneous polynomials as \emph{forms}.
In \cref{sec: circuits PIT} we will formally show our rank bounds without this assumption.

Suppose a $\SPS{3}$ circuit computes a form $P$, which takes the following form:

\begin{equation}\label{equation: intro depth 3 polynomial}
P = \prod_{i=1}^m \ell_{i}(\vx) + \prod_{j=1}^m g_{j}(\vx) + \prod_{k=1}^m h_{k}(\vx),    
\end{equation}
where each $\ell_i, g_j, h_k$ is a linear form.\footnote{Note that, since we are only given the circuit computing $P$, we do not explicitly know its coefficients and monomials.}

If $P \equiv 0$, that is, the circuit forms an identity (equivalently, computes the zero polynomial), and if this identity is \emph{efficiently represented} -- meaning no subset of the summands add to zero, and the summands share no common factor\footnote{This corresponds to the circuit being \emph{simple} and \emph{minimal} as defined in \cite{DS07}.} -- we are led to ask whether the involved linear forms must necessarily lie in a low-dimensional space, i.e., depend on only a few variables.
To capture the "true number of variables" of the circuit given by \cref{equation: intro depth 3 polynomial}, Dvir and Shpilka~\cite{DS07} introduced the \emph{rank} of a depth-3 circuit as the dimension of the linear span of all linear forms appearing in it. 
For the circuit in \cref{equation: intro depth 3 polynomial}, the rank is given by
$\dim \Cspan{\ell_i, g_j, h_k}_{i,j,k \in [m]}.$

Consider a linear form $\ell_{i}$ from the first gate and a second linear form $g_{j}$ from the second gate.
Since $P \equiv 0$, we have that for any $\va \in \bC^N$ such that  $\ell_i(\va) = g_j(\va) = 0$, it must be the case that  $\prod h_{k}(\va) = 0$.
In other words, we have that $V(\ell_i, g_j) \subset V\br{\prod h_k} = \bigcup_k V\br{h_k}$.
Since the algebraic set $V\br{\ell_i, g_j}$ is irreducible, we must have $V\br{\ell_{i}, g_{j}} \subset V\br{h_{a}}$ for some $a \in [m]$.
This last condition (combined with the symmetry among the gates) is exactly the local constraint arising from the dual formulation of Edelstein-Kelly configurations in \cref{definition: linear EK}.
By the Edelstein-Kelly theorems above we deduce that $\Cspan{\ell_{i}, g_{j}, h_{k}} = O(1)$.
This shows that any $\Sigma^3\Pi\Sigma$ identity essentially depends on constantly many variables.
Combined with \cite{KS08}, this gives a black box deterministic PIT algorithm for $\SPS{3}$ circuits.

Let us now see how a natural non-linear generalization of \cref{definition: linear EK} arises in the study of $\SPSP{3}{d}$ PIT.
Consider a form $Q$ computed by a $\Sigma^{3} \Pi \Sigma \Pi^{d}$ circuit.
It has the form
$$ Q = \prod_{i=1}^{m_1} A_{i}(\vx) + \prod_{j=1}^{m_2} B_{j}(\vx) + \prod_{k=1}^{m_3} C_{k}(\vx),$$
where $A_{i}, B_{j}, C_{k}$ are forms of degree at most $d$.
If $Q \equiv 0$ and the representation is efficient, as in the previous case, we have $V(A_i, B_j) \subseteq V\br{\prod C_k}$.\footnote{By symmetry among the gates, we also have $V(A_i, C_k) \subseteq V\br{\prod B_j}$ and $V(B_j, C_k) \subseteq V\br{\prod A_i}$.}
However, as the forms are not necessarily linear, we have that $V\br{A_i, B_j}$ is not necessarily irreducible, and thus we cannot guarantee the existence of $k \in [m]$ such that $V\br{A_i, B_j} \subset V\br{C_k}$.
Nevertheless, the above relations led Gupta \cite{gupta2014algebraic} to generalize the definition of Edelstein-Kelly configurations in the following way:

\begin{definition}[Non-linear Edelstein-Kelly configurations]\label{definition: non-linear EK simple intro}
    Let $\bK$ be a field, $\cA, \cB$ and $\cC$ be finite sets of irreducible forms in $\bK[\vx]$ of degree at most $d$. 
    We say that $(\cA, \cB, \cC)$ forms a $d$-Edelstein-Kelly configuration over $\bK$ if the following conditions hold:
    \begin{enumerate}
        \item any two forms are pairwise non-associate
        \item for any $A \in \cA, B \in \cB$, we have that:
        $$ \prod_{C \in \cC} C \in \radideal{A, B}. $$
        Moreover, such relation works for any permutation of the sets $\cA, \cB$ and $\cC$.
    \end{enumerate}
\end{definition}

\noindent Note that when $d=1$, the above definition, together with the fact that ideals generated by linear forms are prime, becomes the same as the usual (linear) Edelstein-Kelly configuration, recovering \cref{definition: linear EK}.
Moreover, by the algebra-geometry correspondence given by the Nullstellensatz, whenever the field $\bK$ is algebraically closed, the above algebraic condition becomes $V\br{A, B} \subseteq V\br{\prod_{C \in \cC} C}$, thereby recovering the geometric condition from $\SPSP{3}{d}$ identities. 

Similarly to the linear case, Gupta proposed the following conjecture:\footnote{\cref{conjecture: gupta main simple intro} is a stronger form of \cite[Conjecture 1]{gupta2014algebraic} when $k=3$, where transcendence degree is replaced by the dimension of the span of the polynomials in the configuration.} 

\begin{conjecture}[Non-Linear Edelstein-Kelly conjecture]\label{conjecture: gupta main simple intro}
    There exists a function $\lambda : \bN \to \bN$ such that, if $(\cA, \cB, \cC)$ forms a $d$-Edelstein-Kelly configuration over a field $\bK$ of characteristic zero, then 
    $$ \dim \Kspan{\cA \cup \cB \cup \cC} \leq \lambda(d).$$
\end{conjecture}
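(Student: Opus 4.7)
The plan is to prove \cref{conjecture: gupta main simple intro} by induction on the degree bound $d$, taking as base cases $d=1$ (the classical Edelstein-Kelly theorem of \cite{edelstein1966bisecants}, extended to characteristic zero via \cite{BDWY11}) and $d=2$ (the quadratic Edelstein-Kelly theorem of \cite{PS20b}). Assuming the bound $\lambda(d-1)$ is known for configurations of forms of degree at most $d-1$, we derive $\lambda(d)$.

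I would first extend scalars to the algebraic closure (which only enlarges $\dim \Kspan{\cA \cup \cB \cup \cC}$), so that the radical membership $\prod_{C \in \cC} C \in \radideal{A, B}$ translates via the Nullstellensatz into the geometric condition $V(A, B) \subseteq \bigcup_{C \in \cC} V(C)$, together with its two symmetric counterparts. By Bezout, $V(A, B)$ decomposes into at most $d^{2}$ irreducible components, each of which must sit inside some $V(C)$; this assigns to every pair $(A, B)$ a bounded combinatorial "incidence pattern" with the elements of $\cC$, and likewise for the other two kinds of pairs. This is the non-linear analogue of the local constraint that drove the linear proof of \cite{edelstein1966bisecants, SS13}, but now each pair activates many elements of $\cC$ at once.

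Next, I would invoke a structural Sylvester-Gallai theorem of the type developed in \cite{OS22, OS24, GOPS23, GOS24} to dichotomize the configuration. If the irreducible forms in $\cA \cup \cB \cup \cC$ share a substantial low-rank linear structure — for instance, many share a common linear factor, or the linear parts of their factorizations span a low-dimensional space — then one projects modulo a suitable linear subspace, thereby strictly reducing the effective degree while preserving (a finite union of) $d$-Edelstein-Kelly configurations, and applies the inductive hypothesis. In the complementary "robustly non-linear" case, one exploits the radical condition directly: each $C \in \cC$ vanishes on a prescribed union of codimension-two varieties $V(A,B)$, and counting components via Bezout together with Krull's height theorem produces enough algebraic dependencies among the forms to bound $\dim \Kspan{\cA \cup \cB \cup \cC}$ purely in terms of $d$.

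The main obstacle will be carrying out this dichotomy in a degree-uniform way. In the quadratic case of \cite{PS20b}, each $V(A,B)$ has very few components of small degree, so the projective/structural step is directly manageable. For general $d$, a variety $V(A,B)$ can have up to $d^{2}$ irreducible components of various degrees distributed arbitrarily across the $V(C)$'s, and distinct pairs $(A,B)$ may exhibit mutually incompatible incidence patterns. The crux is to extract from this combinatorial data either a genuine low-rank linear subspace onto which one can project \emph{while preserving} both the pairwise non-associate condition and the Edelstein-Kelly property, or a rigid algebraic relation that directly collapses $\dim \Kspan{\cA \cup \cB \cup \cC}$. Ensuring that the resulting $\lambda(d)$ depends only on $d$ — and not on $|\cA|, |\cB|, |\cC|$ or on the ambient dimension $N$ — is exactly what separates this non-linear Edelstein-Kelly statement from the much weaker bounds one could obtain by naive component-counting.
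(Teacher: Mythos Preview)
Your outline has the right coarse shape (induct on $d$, control the top-degree piece, quotient, recurse), but the dichotomy you set up is not the one that actually works, and the ``robustly non-linear'' branch as you describe it does not close.

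First, projecting modulo a \emph{linear} subspace is too weak. A generic irreducible degree-$d$ form shares no linear factor and has no useful ``linear part'' in common with the others; the controlling space in the inductive step is generically a space of \emph{higher-degree} forms (a strong Ananyan--Hochster space), and the reduction map is a graded quotient sending these forms to powers of a new variable. One then has to prove that the EK property survives such a quotient, and since the target is no longer a polynomial ring, the induction must be formulated from the outset over quotients $R=S/(U)$ by strong spaces. Your statement of the inductive hypothesis (over polynomial rings, with linear projections) is not strong enough to feed back into itself.

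Second, the dichotomy is not ``low linear rank versus robustly non-linear.'' The operative split on the degree-$d$ forms is: either for most pairs $A\in\cA_d$, $B\in\cB_d$ the ideal $(A,B)$ is already prime --- in which case the EK condition forces some $C\in\cC_d$ into $\Kspan{A,B}$, so $\cA_d,\cB_d,\cC_d$ form a partial \emph{linear} EK configuration and the linear theorem bounds their span --- or many such ideals are not prime. In the second case you do not count B\'ezout components; the engine is a \emph{prime bound}: if $P$ is absolutely irreducible over $K(V)$, then only $O_d(1)$ pairwise non-associate irreducible $Q\in\bK[V]$ can make $(P,Q)$ non-prime. This forces a constant fraction of the degree-$d$ forms to be absolutely reducible over a small algebra $\bK[V]$, and a graded quotient by $V$ genuinely drops their degree. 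Going from ``a constant fraction of one set is absolutely reducible over $V$'' to ``all of $\cA_d\cup\cB_d\cup\cC_d$ is'' is itself nontrivial: it needs an iterated construction with a potential function measuring how much each form factors under successive compatible quotients, together with a refined bound on how often $\Kspan{P,Q}$ can contain an absolutely reducible element. Your B\'ezout/Krull counting does not produce the required linear dependencies among the forms, and this iteration is where the actual content of the argument lives.
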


Since every $\SPSP{3}{d}$ identity gives rise to a non-linear Edelstein-Kelly configuration (the three sets are simply the irreducible factors of each of the product gates), the above conjecture, when combined with \cite[Theorem 2]{beecken2013algebraic}, would yield the first \emph{polynomial-time}, deterministic PIT algorithm for this circuit class.

The work of Peleg and Shpilka~\cite[Theorem 1.6]{PS20b} confirmed \cref{conjecture: gupta main simple intro} for the case when $d = 2$.
Our main theorem is to confirm the above conjecture for any value of $d$.

\subsection{Our Results}\label{subsection: results}

Now that we have discussed the connections between Edelstein-Kelly configurations and PIT, we are ready to state our main result: non-linear Edelstein-Kelly configurations have bounded rank.

\begin{restatable}[Rank bound for EK-configurations]{theorem}{ekmain}\label{theorem: EK main}
    There exists a function $\lambda: \bN \rightarrow \bN$ such that for any $d$-Edelstein–Kelly configuration $(\cA, \cB, \cC)$ over a field $\bK$ of characteristic $0$, we have 
    \[\dim(\Kspan{\cA \cup \cB \cup \cC})\leq \lambda(d).\]
\end{restatable}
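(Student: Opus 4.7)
The plan is to proceed by induction on the degree bound $d$. The base case $d=1$ is the classical Edelstein--Kelly theorem over $\bR$ together with the fractional Sylvester--Gallai theorem over $\bC$ mentioned after \cref{definition: linear EK}, and the case $d=2$ is the Peleg--Shpilka theorem cited in the introduction. For the inductive step, I assume a bound $\lambda(d-1)$ and aim to produce $\lambda(d)$. First I pass to the algebraic closure of $\bK$ (this does not affect the rank) and replace each form by its irreducible factors; then I reduce to a \emph{minimal} $d$-Edelstein--Kelly configuration $(\cA, \cB, \cC)$, in the sense that no proper sub-triple still satisfies the EK property. This lets me fix a distinguished $A_{0}\in\cA$ whose EK relations actually control the rest.

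The central technical device is a \emph{strength/regularity reduction} in the spirit of the Ananyan--Hochster structure theorem, of the kind exploited in the recent non-linear Sylvester--Gallai results (OS22, GOPS23, GOS24). Concretely, I would extract a set $\mathcal{G}$ of forms of degree at most $d$, with $|\mathcal{G}|$ bounded by a function $\lambda_{0}(d)$, such that modulo the ideal $(\mathcal{G})$ every pair of residues $(\overline{A},\overline{B})$ with $A\in\cA$, $B\in\cB$ forms a regular sequence and $\radideal{\overline{A},\overline{B}}$ becomes unmixed of pure codimension $2$ with well-controlled primary decomposition. The forms in $\mathcal{G}$ contribute only $\lambda_{0}(d)$ to the rank, so it suffices to bound the span of the images of $\cA\cup\cB\cup\cC$ in the quotient.

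In this quotient, the EK relation $\prod_{C\in\cC}\overline{C}\in\radideal{\overline{A},\overline{B}}$ becomes tractable: since $\overline{C}$ is irreducible, it must lie in one of the now well-behaved primary components of $\radideal{\overline{A},\overline{B}}$. Combining this with the Nullstellensatz and the regularity of $(\overline{A},\overline{B})$, I would extract, for each such $C$, an identity of the form $C \equiv p\cdot A + q\cdot B \pmod{(\mathcal{G})}$ with $p,q$ of degree strictly less than $d$. Fixing $A_{0}$ and letting $B$ vary over $\cB$ produces a large family of such relations; grouping the ``coefficient data'' $\{p,q\}$ and using the symmetry among the three sets, I would either reduce directly to forms spanning a controlled subspace or exhibit an induced $(d-1)$-Edelstein--Kelly configuration. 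Applying the inductive hypothesis then yields a bound of the form $\lambda(d) = \lambda_{0}(d) + \text{poly}(\lambda(d-1))$.

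The main obstacle is precisely the interaction between the three sets and the primary decomposition of $\radideal{A,B}$. Unlike the linear case, where $(A,B)$ is automatically prime, and unlike the two-set Sylvester--Gallai setting already resolved, a given $C$ may sit in only one component of $\radideal{A,B}$, and as $B$ varies over $\cB$ the components can shuffle in a non-uniform manner, so no single $C$ admits a clean $C=pA+qB$ expression without the strength reduction. Carrying out Step~2 with $|\mathcal{G}|$ bounded by a function of $d$ alone, while forcing enough regularity to suppress stray components, and matching the constraints across $\cA,\cB,\cC$ to maintain a strict degree drop into the induction, will be the delicate core of the argument.
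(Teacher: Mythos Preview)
Your proposal shares the high-level scaffold of the paper's proof---induction on $d$, an Ananyan--Hochster style regularity reduction, and a degree drop---but the actual mechanism you sketch for the inductive step is different from the paper's and, as written, has a genuine gap.

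The central difficulty is your Step~2. You want a set $\cG$ with $|\cG|$ bounded in terms of $d$ alone such that \emph{every} pair $(\overline{A},\overline{B})$ with $A\in\cA,\ B\in\cB$ becomes a regular sequence modulo $(\cG)$ with well-controlled primary decomposition. Ananyan--Hochster regularizes a \emph{fixed, bounded} collection of forms; it does not directly give you uniform control over all $|\cA|\cdot|\cB|$ ideals $\radideal{A,B}$ at once with $|\cG|$ independent of $|\cA|,|\cB|,|\cC|$. You acknowledge this is ``the delicate core,'' but no concrete device is offered, and this is precisely where the real work lies. Secondly, even granting such a $\cG$, your degree reduction via the ``coefficient data'' $\{p,q\}$ in $C\equiv pA+qB$ is not justified: the coefficients $p,q$ are neither irreducible nor pairwise non-associate, there is no reason they satisfy EK-type radical membership conditions among themselves, and a single $C$ will sit in different expressions as $A,B$ vary with no evident compatibility. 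The analogous step in the quadratic case (Peleg--Shpilka) relied on a very fine structure theorem for ideals generated by two quadratics that has no higher-degree analogue.

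The paper avoids both issues by a different target for the reduction. It does \emph{not} try to make the ideals $\radideal{A,B}$ prime or regular. Instead it constructs a bounded-dimensional strong vector space $W$ such that every degree-$d$ form in $\cA\cup\cB\cup\cC$ is either in the ideal $(W)$ or \emph{absolutely reducible over $W$}. The key structural input is a ``prime bound'' (\cref{lem:primebound}, \cref{lem:absredcount}): a form absolutely irreducible over $W$ can fail to generate a prime ideal with only boundedly many forms from $\bK[W]$. This, together with a potential function tracking how much forms factor under successive quotients, drives an iterative process that terminates in boundedly many steps. Once $W$ is found, a \emph{graded quotient} $\varphi_{W,\alpha}$ sends $W$ to powers of a fresh variable $y$; under this map every degree-$d$ form becomes genuinely reducible, so the irreducible factors of the image form a $(d-1,y,R')$-EK configuration in a quotient ring $R'$, and the inductive hypothesis (proved in the more general setting of \cref{theorem: general ek main}) applies. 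The rank bound is then lifted back via \cref{proposition: lifting general quotient}. Working modulo $(\cG)$ as you propose would simply kill the forms in $\cG$ rather than cause the degree-$d$ forms to factor, so the degree does not drop in the quotient; the graded-quotient mechanism is what actually achieves the reduction.
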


Our main theorem generalizes the main result of \cite{PS20b} to forms of any bounded degree $d$.
Additionally, our proof technique is more general and simpler than the proof in \cite{PS20b}, as we have a simpler case analysis than their work.

Since any simple and minimal $\SPSP{3}{d}$ identity gives rise to a non-linear Edelstein-Kelly configuration, we obtain the following corollary.

\begin{restatable}{corollary}{rankbound}\label{theorem: rankbound for identities}
    There is a function $\lambda : \bN \to \bN$ such that for any simple and minimal $\Sigma^3\Pi\Sigma\Pi^d$-identity $\Phi$ over a field $\bK$ of characteristic $0$, we have $\Rank{\Phi} \leq \lambda(d).$ 
\end{restatable}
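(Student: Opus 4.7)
The plan is to extract a $d$-Edelstein-Kelly configuration directly from the identity and invoke \cref{theorem: EK main}. Write the given identity as $\Phi = T_1 + T_2 + T_3 = 0$, with each $T_s = \prod_i P_{s,i}$ a product of polynomials of degree at most $d$, and let $\cA, \cB, \cC$ denote the sets of distinct irreducible factors (up to associates) of $T_1, T_2, T_3$, respectively. Minimality of $\Phi$ ensures that no $T_s$ vanishes (otherwise the two remaining gates would sum to zero), so all three sets are well-defined and nonempty, and irreducibility together with the degree-$d$ bound are immediate. Pairwise non-associateness within a single set is automatic from taking distinct factors.

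The next two properties to verify are non-associateness across distinct sets and the ideal containment in \cref{definition: non-linear EK simple intro}. For the former, if some $A \in \cA$ and $B \in \cB$ were associate, then $A$ would divide both $T_1$ and $T_2$, and the identity $T_3 = -(T_1 + T_2)$ would force $A \mid T_3$, contradicting the simplicity hypothesis; the other two pairs are symmetric. For the latter, fix $A \in \cA$ and $B \in \cB$. Since $A \mid T_1$ and $B \mid T_2$, the identity gives $T_3 \in (A,B)$. Writing $T_3 = \prod_{C \in \cC} C^{m_C}$ with $m_C \geq 1$ and taking $M = \max_C m_C$, the polynomial $(\prod_{C \in \cC} C)^M$ is a multiple of $T_3$ and hence lies in $(A,B)$, so $\prod_{C \in \cC} C \in \radideal{A,B}$. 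Permuting the roles of the three gates gives the remaining containments.

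At this point $(\cA, \cB, \cC)$ is a $d$-Edelstein-Kelly configuration, so \cref{theorem: EK main} directly bounds $\dim \Kspan{\cA \cup \cB \cup \cC}$ by $\lambda(d)$, which equals $\Rank{\Phi}$ by the standard definition of rank for depth-4 circuits. There is no substantial obstacle at the level of this reduction: all of the depth is concentrated in \cref{theorem: EK main}, and the present argument is essentially bookkeeping. The one subtlety is that \cref{theorem: EK main} is stated for homogeneous forms while the corollary allows inhomogeneous polynomials; this can be handled by homogenizing each irreducible factor with a fresh variable, which preserves irreducibility, degrees, non-associateness, and the radical membership, without altering the dimension of the span.
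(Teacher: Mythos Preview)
Your approach is the same as the paper's, but there are two imprecisions that the paper handles more carefully.

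First, the claim that homogenisation ``preserves \ldots\ the radical membership'' is not correct as stated. From $\prod_C C \in \radideal{A,B}$ in the affine ring one only obtains, after homogenising with a new variable $z$, that $z \cdot \prod_C \tilde{C} \in \radideal{\tilde{A},\tilde{B}}$: powers of $z$ can appear when you clear degrees in an equation $\bigl(\prod C\bigr)^N = gA + hB$. A concrete obstruction is when $(A,B)$ is the unit ideal in the affine ring (e.g.\ $A=x$, $B=x+1$), so every $C$ lies in $\radideal{A,B}$, yet after homogenising $(\tilde{A},\tilde{B})=(x,z)$ and a factor such as $\tilde{C}=y$ does not lie in $\radideal{x,z}$. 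This is precisely why the paper works with the $(d,z,S[z])$-EK notion of \cref{def:ek} (which builds the extra $z$ into the axiom) rather than \cref{definition: non-linear EK simple intro}, and then appeals to \cref{theorem: general ek main} via \cref{lem:tspsps is EK}; invoking \cref{theorem: EK main} directly on the homogenised data is not justified.

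Second, $\dim\Kspan{\cA\cup\cB\cup\cC}$ does not \emph{equal} $\Rank{\Phi}$: the forms in the configuration are the irreducible factors of the $P_{s,i}$, not the $P_{s,i}$ themselves, and e.g.\ $\{x^2,xy,y^2\}$ has three-dimensional span while its irreducible factors span a two-dimensional space. The paper closes this by observing that each (homogenised) $P_{s,i}$ is a polynomial of degree at most $d$ in a basis of the configuration, giving $\Rank{\Phi}\le \lambda(d)^d$. If one takes rank to mean transcendence degree (as in \cite{beecken2013algebraic}) then the inequality $\Rank{\Phi}\le \dim\Kspan{\cA\cup\cB\cup\cC}$ does hold and your bound is even tighter than the paper's, but ``equals'' is still wrong and should be replaced by the appropriate inequality with a one-line justification.
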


The above rank bound, when combined with \cite[Theorem 2]{beecken2013algebraic}, implies the following derandomization result for PIT.

\begin{restatable}{corollary}{mainPIT}\label{corollary: PIT}
    There is a deterministic polynomial-time algorithm for identity testing of $\Sigma^3\Pi\Sigma\Pi^d$-circuits.
\end{restatable}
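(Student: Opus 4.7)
The plan is to derive \cref{corollary: PIT} by combining the rank bound of \cref{theorem: rankbound for identities} with the black-box reduction of \cite[Theorem 2]{beecken2013algebraic}, which converts any constant rank bound on simple, minimal identities of a bounded-top-fanin depth-$4$ class into a deterministic polynomial-time PIT algorithm for that class. Since \cref{theorem: rankbound for identities} supplies exactly such a rank bound $\lambda(d)$ that is independent of the number $N$ of ambient variables, the corollary reduces to a standard preprocessing step followed by an invocation of \cite[Theorem 2]{beecken2013algebraic}.

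First I would perform the standard preprocessing that reduces PIT for $\SPSP{3}{d}$-circuits to PIT for simple, minimal $\SPSP{3}{d}$-circuits. Given $\Phi = T_1 + T_2 + T_3$ with $T_i = \prod_j P_{i,j}$, one tests each $T_i$ for vanishing (a trivial check since each $P_{i,j}$ has degree at most $d$ and is explicitly represented by the bottom $\Sigma\Pi^d$ layer), tests whether some two gates already sum to zero (which, after extracting common factors via deterministic multivariate factorization, reduces to comparing the lists of irreducible factors of the two gates), and strips off $\gcd(T_1,T_2,T_3)$. The resulting circuit $\Phi'$ is simple and minimal, has size at most that of $\Phi$, and satisfies $\Phi \equiv 0$ iff $\Phi' \equiv 0$. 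Under these conditions, \cref{theorem: rankbound for identities} guarantees $\Rank{\Phi'} \leq \lambda(d)$ whenever $\Phi' \equiv 0$.

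Finally I would feed this rank bound into \cite[Theorem 2]{beecken2013algebraic}: from a rank bound $r = \lambda(d)$ on simple, minimal identities the theorem constructs a deterministic, efficiently computable linear change of variables mapping the ambient $N$ variables into $O(r)$ new variables while preserving the nonzeroness of $\Phi'$. The transformed circuit has $O(r)$ variables and total degree at most $md$, so PIT on it can be carried out by evaluating on a deterministic hitting set of size $(md)^{O(r)}$, which is polynomial in the input length since $d$ and $r$ are constants. The only point requiring care is the matching between our notion of rank and that used in \cite{beecken2013algebraic}, together with the extension of the rank bound to the non-homogeneous setting (recall that \cref{theorem: EK main} was phrased for forms, whereas a generic $\SPSP{3}{d}$-circuit need not be homogeneous); both of these are handled in \cref{sec: circuits PIT}, and neither presents a serious obstacle, since the reduction of \cite{beecken2013algebraic} is entirely black-box in the rank bound.
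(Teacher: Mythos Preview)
Your proposal is correct and follows essentially the same approach as the paper: invoke the rank bound of \cref{theorem: rankbound for identities} and plug it into \cite[Theorem~2]{beecken2013algebraic} to obtain a deterministic polynomial-time algorithm. The paper's proof is terser, simply citing these two ingredients and recording the resulting running time, whereas you spell out the preprocessing to simple and minimal form and the mechanics of the variable reduction; this extra detail is fine but not needed, as the reduction to simple minimal circuits is already absorbed into \cite[Theorem~2]{beecken2013algebraic}.
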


Our proof strategy, much like the proofs in previous works on the non-linear generalizations of Sylvester-Gallai configurations, is to "construct" the function $\lambda$ by induction on the degree of Edelstein-Kelly configurations.
The main tool employed in previous works to reduce the degree of these configurations is to apply a sequence of general quotients (see definition in \cref{sec:generalquot}) to certain special vector spaces of forms that appear in our configuration.
Since these general quotients map polynomial rings into quotient rings of polynomial rings, we need to further generalize our definition of Edelstein-Kelly configurations to this more general setting.
We provide this generalization in \cref{def:ek}, and in \cref{subsection: contributions and comparison} we outline the subtleties and challenges we need to overcome to achieve our results.

Equipped with this general and more versatile type of configurations, our strategy to reduce the degree of the configuration is similar to the one employed in \cite{GOS24}: we find small vector spaces of forms that, when quotiented out via a general quotient, makes a constant fraction of the forms in our Edelstein-Kelly configuration to "factor more," and therefore drop degree.
However, due to the combinatorial restrictions imposed by Edelstein-Kelly configurations, the implementation of this step is more delicate, and requires the development of a new potential function which captures the reducibility of an Edelstein-Kelly configuration with respect to a given vector space of forms.
This connection is established in \cref{subsection: potential function}, and we heavily use it in \cref{section: EK theorem} to prove our main technical theorem, which we now state.

\begin{restatable}{theorem}{generalekmain}\label{theorem: general ek main}
    For any two positive integers $d,e$ such that $e\geq 2$ and $d\leq e$ , there exist ascending functions $\Lambda_{d,e}: \bN^{e} \to \bN^{e}$ and  $\lambda_{d, e}:\bN \rightarrow \bN$, both independent of $\bK$ and $N$, such that the following holds. 

Let $\bK$ be an algebraically closed field of characteristic $0$ and $S := \bK[x_1,\cdots,x_N]$. 
Let $U\subset S_{\leq e}$ be a $\Lambda_{d,e}$-strong graded vector space and $R=S/(U)$. 
If $(\cA, \cB, \cC)$ is a  $(d, z, R)$-EK configuration in $R$ for some $z \in S_{1}$, then we have
$$\dim (\Kspan{\cA \cup \cB \cup \cC} )\leq \lambda_{d,e}(\dim(U)).$$
i.e. the dimension of the $\bK$-linear span of $\cA, \cB, \cC$ is upper bounded by a function of $d,e,\delta$ and $\dim(U)$, which is independent of the field $\bK$, the number of variables $N$ and the cardinality of $\cA, \cB, \cC$.
\end{restatable}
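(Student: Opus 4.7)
The plan is to prove \cref{theorem: general ek main} by strong induction on the degree $d$, with the induction hypothesis giving us bounds in lower degrees and (simultaneously) in potentially larger ambient vector spaces $U$. The base case $d=1$ reduces, after passing to $R=S/(U)$, to an essentially linear Edelstein--Kelly configuration (the auxiliary form $z$ only contributes a bounded correction); here the strongness of $U$ lets us lift linear dependences from $R$ back to $S$, and one invokes the classical Edelstein--Kelly theorem over $\mathbb{R}$ together with the fractional Sylvester--Gallai theorem of Barak--Dvir--Wigderson--Yehudayoff (via field-theoretic descent) to conclude that $\dim \Kspan{\cA\cup\cB\cup\cC}$ is an absolute constant depending only on $\dim U$.

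For the inductive step, the strategy is to exhibit a graded vector space $W\subset S_{\leq e}$, of dimension bounded by some function of $d$, $e$, and $\dim U$, such that enlarging $U$ to $U' := U+W$ makes a constant fraction of the forms in $\cA \cup \cB \cup \cC$ become \emph{reducible} inside the new quotient $R' = S/(U')$. The key tool here is the potential function advertised in \cref{subsection: potential function}, which quantifies how ``rigid'' the configuration is with respect to a given vector space; an averaging/pigeonhole argument along the lines of \cite{GOS24} shows that unless a constant fraction of forms already factor nontrivially modulo $U$, one can enlarge $W$ by a bounded-dimensional piece to strictly decrease the potential, and iterating this process a bounded number of times forces factorizations. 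Throughout, one must choose $\Lambda_{d,e}$ large enough that the enlarged $U'$ remains strong with a parameter sufficient for the inductive hypothesis at degree $\leq d-1$.

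Once a constant fraction of forms factor modulo $(U')$, the nontrivial factors have degree strictly less than $d$, and the defining algebraic condition of the EK configuration (namely $\prod_{C\in\cC}C \in \sqrt{(A,B)}$ and its permutations) descends to analogous conditions on the factors inside $R'$, possibly with a new linear form $z'$ obtained from the factorization data. This produces one or more $(d', z', R')$-EK configurations with $d' < d$, each of which satisfies the hypotheses of the theorem at degree $d'$ with an ambient vector space of dimension bounded by a function of $\dim U$. Applying the inductive hypothesis bounds the spans of these sub-configurations, and combining them with $\Kspan{W}$ and with the forms not yet known to factor (which must themselves be controlled by an iteration or by a direct span argument) recovers a bound on $\dim \Kspan{\cA\cup\cB\cup\cC}$.

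The main obstacle I anticipate is twofold: first, verifying that after the general quotient the factored pieces genuinely assemble into valid lower-degree EK configurations in the sense of \cref{def:ek} --- in particular one needs the pairwise non-associate condition on the factors, which can fail generically and must be repaired by absorbing offending factors into $W$; second, carefully propagating the strongness hypothesis so that $\Lambda_{d,e}$ can be defined recursively in terms of $\Lambda_{d-1,e}, \Lambda_{d-2,e}, \ldots$ while keeping the functions ascending and independent of $\bK$ and $N$. Handling the case where a form in $\cA$ (say) factors into two irreducibles that each appear or become associate to forms in $\cB$ or $\cC$ will require a delicate case analysis, but should be tractable because the EK constraints force such coincidences to propagate across the configuration, allowing a reduction to a smaller instance by removing the offending orbit.
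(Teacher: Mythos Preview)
Your inductive skeleton is right, and your base case is essentially the paper's (though the paper invokes its own partial linear EK bound, \cref{prop:linearek}, rather than going via $\bR$ and BDWY with field descent). The real gap is in the inductive step.

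You propose to find $W$ so that a \emph{constant fraction} of forms become reducible in $R'$, and then say the remaining unfactored forms ``must themselves be controlled by an iteration or by a direct span argument.'' That remaining clause is precisely the hard part of the paper, and it does not fall out of a generic averaging/pigeonhole argument in the style of \cite{GOS24}. To apply induction at degree $d-1$, the image configuration after the quotient must have \emph{no} degree-$d$ forms at all; one needs $m_a^W=m_b^W=m_c^W=0$, not merely a positive fraction gone. The paper achieves this via a rather intricate sequence (\cref{lem: ek algebra control}, \cref{lemma: breaking condition 2}, \cref{lem: EK low degree control}, \cref{lem:EK control one to all}, assembled in \cref{lem:induction step main}) whose driving inputs are the prime bound \cref{lem:primebound} and the new absolute-irreducibility bound \cref{lem:absredcount} (packaged in \cref{cor:strongprimebound}). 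These bounds are what make the iteration terminate: if a form is absolutely irreducible over the current $V$, then it is prime with all but boundedly many forms in $\bK[V]$, and one uses this together with the notion of \emph{unbreakable pencils} and a potential that tracks factorization under graded quotients to force termination. A pure potential-decrease argument without these bounds will stall, because the combinatorics of EK configurations (three sets of possibly very different sizes, and no control over which set you can attack first) prevent you from always making uniform progress --- this is discussed explicitly in the paper's proof overview.

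There is also a technical mismatch: you propose enlarging $U$ to $U'=U+W$ and working in $S/(U')$. The paper does not do this. It instead builds $W\subset R$ and applies a \emph{general graded quotient} $R[y]\to R[y]/I_\alpha$ (\cref{definition: quotients}) with respect to $W$, introducing a new variable. This is essential: \cref{prop: graded quot of ek} shows the EK structure is preserved under general graded quotients, and \cref{proposition: lifting general quotient} lets you lift the resulting rank bound back up. Simply quotienting by $(U+W)$ would kill forms in $(W)$ rather than factor them, and there is no obvious mechanism to recover the original span from $S/(U+W)$. Your worry about factors becoming associate is legitimate but is handled automatically by the general quotient machinery (item 3 of \cref{proposition: gcd after projection}), not by absorbing factors into $W$.
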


In \cref{section: EK theorem}, we show how \cref{theorem: EK main} follows as an easy corollary of the above theorem.

\subsection{Proof overview}

We now describe our proof strategy and its implementation, giving an overview of the new ideas needed and the new technical challenges that we overcome to implement our inductive approach and prove our generalized Edelstein-Kelly theorem.

At a high level, our approach to prove rank bounds on Edelstein-Kelly (EK) configurations follows the inductive approaches from previous works, in particular the works \cite{PS20b, OS24, GOS24}.
Their main idea is to prove the existence of small vector spaces of forms which are in some sense "present in many forms" in the given configuration.

Let $R := \bC\bs{x_{1}, \dots, x_N}$, and $S := R\bs{y_{1}, \dots, y_n}$ be two polynomial rings and $\cF := (\cA, \cB, \cC) \subset S$ be a $d$-Edelstein-Kelly configuration.
Thus, we can write $\cF = \cF_1 \cup \cdots \cup \cF_d$, where $\cF_e := (\cA_e, \cB_e, \cC_e)$ are the forms in $\cF$ of degree $e$ (with their place in the partition).
We start with the assumption that each form only depends on constantly many variables in $S$.\footnote{The set of variables may be different across the forms, otherwise the main theorem is trivially true.
Also, note that the variables of $S$ are both the $x$ and $y$ variables.}
A number of key ideas can already be highlighted in this easier setting.
 
We want to control the highest degree forms $\cF_{d}$ in our configuration, with the goal of reducing to the case when the highest degree is $d-1$, where we can proceed inductively.
From now on whenever we talk about conditions on two of the sets, we assume that the symmetric conditions with the permuted sets also hold, unless stated otherwise.

If many ideals of the form $(A_i, B_j)$, where $A_i \in \cA_d, B_j \in \cB_d$ are prime, then the Edelstein-Kelly condition implies that the set $\cF_d$ is essentially a (robust) linear Edelstein-Kelly configuration.
In this case, the linear Sylvester-Gallai theorems imply that constantly many forms $F_{1}, \dots, F_{a}$ are a basis for $\Cspan{\cF_{d}}$, and if $z_{1}, \dots, z_{r}$ is the union of the set of variables of $F_{1}, \dots, F_{a}$, then $\cF_{d} \subset \bC\bs{z_{1}, \dots, z_{r}}$. 
This is sufficient to control the forms of $\cF_{d}$ and apply our inductive step.
The interesting case is thus when there are many ideals generated by forms in $\cF_d$ that are not prime. 
In this case, the goal is to show that the forms in $\cF_d$ must share many variables in common.
The steps listed below show how previous works have dealt with this case.

\paragraph{Step 1 - Structure theorems:} In this work, the main structure theorem we use is \cite[Theorem~4.16]{GOS24}.
In fact we require a stronger form of this theorem, which we prove in \cref{lem:absredcount} and \cref{cor:strongprimebound}.
We briefly explain how such a structure theorem is helpful in our setting.

Suppose we have an irreducible form $P \in S \setminus \ideal{x_{1}, \dots, x_{N}}$.
Suppose $Q_{1}, \dots, Q_{r} \in R$ are irreducible forms.
We are interested in bounding the number of $Q_{i}$ such that $\ideal{P, Q_{i}}$ is not prime.
In \cite{OS24}, a bound was proved for the number of $Q_{i}$ such that $\ideal{P, Q_{i}}$ is not radical, and this bound only depends on the degree of $P$.
In the case of prime ideals however, things are more complicated.
Consider for example $P = y_{1}^{4} - x_{1} x_{2} y_{2}^{2}$.
The form $P$ is irreducible, since it is linear in $x_{1}$.
However, for any $Q \in R$, the ideal $\ideal{P, x_{1} x_{2} - Q^{2}}$ is not prime, since $y_{1}^{2} - Qy_{2} \in \ideal{P, x_{1} x_{2} - Q^{2}}$.
Therefore, such a bound cannot exist for every $P$.

The structure theorem \cite[Theorem~4.16]{GOS24} states that the above only happened because $P$ is reducible as a polynomial in the ring $\overline{\bC\br{x_{1}, \dots, x_N}}\bs{y_{1}, \dots, y_{n}}$, since it has factorisation $P = \br{y_{1}^{2} - \sqrt{x_{1} x_{2}} y_{2}} \br{y_{1}^{2} + \sqrt{x_{1} x_{2}} y_{2}}$.
In particular, the theorem states that as long as $P$ is irreducible not just as a polynomial in $S$, but also as a polynomial in $\overline{\bC\br{x_{1}, \dots, x_N}}\bs{y_{1}, \dots, y_{n}}$, then the number of $Q_{i}$ such that $\ideal{P, Q_{i}}$ is not prime is bounded by a function of only $\deg{P}$.
Such a $P$ is said to be absolutely irreducible over the variables $x_{1}, \dots, x_{N}$.

The reason why this type of structure theorem is useful is the following: forms that are absolutely reducible will factor when a graded quotient is applied.
Therefore, we want to try and find a \emph{small vector space} of linear forms such that many forms in our configuration are absolutely reducible with respect to this vector space.
Due to the additional constraints coming from the EK configurations, we need the more robust form of \cite[Theorem 4.16]{GOS24} for pencils of forms, which we prove in \cref{lem:absredcount}.

The proof of \cref{lem:absredcount} proceeds by studying the Noether equations that define reducibility over algebraically closed fields.
If a form $P$ is absolutely irreducible, then it does not satisfy some Noether equation, and therefore neither does $P + Q_{i}$ for all except constantly many $Q_{i}$.
As in the proof of \cite[Theorem~4.16]{GOS24}, the non trivial step here is to show that the bound is independent of $n, N$.
We obtain this using a Bertini type theorem.

\paragraph{Step 2 - Finding a set of variables which makes several forms "factor more":}

We start with the configuration $\cF$ and vector space $V := 0$.
The vector space $V$ will play the role of the small set of variables that control $\cF$ (i.e., make many forms in $\cF$ "factor more").
As discussed before, if for every $A \in \cA_{d}$, many of the ideals $\ideal{A, B}$ with $B \in \cB_{d}$ are prime, then the set of degree $d$ forms $\cF_{d}$ is a robust linear EK configuration, therefore it has small linear rank.

Therefore, let us assume that this does not happen.
In other words, for a constant $p$ we can essentially assume that $A_{1}, \dots, A_{p} \in \cA$ are such that $\ideal{A_{i}, B_{j}}$ is not prime for every $1 \leq i \leq p$ and $1 \leq j \leq \abs{\cB} / 2$.
As discussed above, the structure theorem will imply that the forms $B_{j}$ are all absolutely reducible over the variables of $A_{1}, \dots, A_{p}$.
Since $p$ is constant, and we are assuming each form in $\cF$ depends on constantly many variables, the total number of variables of $A_{1}, \dots, A_{p}$ is a constant.
Therefore when we update $V$ to include the space spanned by these variables we make a some progress, since the $B_j$'s are factoring under a general quotient.

Now that half the forms in $\cB_{d}$ are absolutely reducible, we try to control $\cA_{d}$, in particular the forms that are still absolutely irreducible over the new $V$.
This forces us to look at the EK relationships between $A_{i}$ and $B_{j}$ for $1 \leq j \leq \abs{\cB} / 2$.
Recall that these forms of $\cB$ are already absolutely reducible over $V$.
This makes it a challenge to study the ideals $\ideal{A_{i}, B_{j}}$, since we cannot directly apply the prime bound.
Hence, we have to apply the structure to the factors of $B_{j}$ as a polynomial with coefficients in $\overline{\bC\br{V}}$, which poses some algebraic challenges (and will force us to work over the general quotient).

In order to control $\cA_d$, we look for long sequences $A_{1}, \dots, A_{r}$ of forms that are not only absolutely irreducible over $V$, but also their pairwise spans $\Cspan{A_i, A_j}$ does not contain any absolutely reducible form (we call such sequences unbreakable pencils \cref{def: unbreakable pencil}).
This is where our strengthened structure theorem applies: if no such long sequences exist, then the forms in $\cA_{d}$ pairwise span absolutely reducible forms, and by adding a few more variables to $V$, we can deduce that every form in $\cA_{d}$ is absolutely reducible.
We are thus left with the case that long sequences $A_{1}, \dots, A_{r}$ do exist (here, long means bigger than the bounds of the structure theorem, therefore $r$ is still a constant).
In this case let $V'$ be the vector space obtained by adding the variables of $A_{1}, \dots, A_{r}$ to $V$.

With the above long sequence at hand, we are able to deduce that a large fraction of the forms in $\cB$ that are already absolutely reducible over $V$, will be be \emph{even more absolutely reducible} over $V'$.
This notion of being even more absolutely reducible is captured by a more refined potential function than used in previous works, in particular because the potential function will depend on the general quotient used. 
This poses some algebraic challenges, as this restricts the generality of the graded quotients that we can apply.
We explain how we overcome this in Step 3 below.

If $\cB$ is not the smallest among the three sets, then after constantly many iterations of this step, we can argue that the potential function reaches its maximum, which will imply that all the forms in $\cB$ are absolutely reducible.
However, we face the following combinatorial challenge: we cannot force a particular set to be controlled.
The reason being that the change in the potential function at each step depends on the relative sizes of $\cA_{d}, \cB_{d}, \cC_{d}$.
If $\cB$ is the smallest of the three sets, then we need to do some more combinatorial work to be able to control one of the other sets from the fact that we control $\cB$.

If we manage to overcome all of these challenges, then we reach a stage where we have a vector space $V$, and every form in one of the sets $\cA_{d}, \cB_{d}, \cC_{d}$ is absolutely reducible over $V$.
Another application of the structure theorem will allows us to find a slightly bigger vector space $V'$ such that all of $\cF_{d}$ is absolutely reducible over $V'$, and we can apply induction.
We now summarise the above challenges, and discuss how we overcome them.

\paragraph{Combinatorial challenges and solutions:}
As discussed above, the outline of our method is to first show that half the forms in some set are absolutely reducible, and then extend this to control all the forms in some large enough set.
This is difficult to do when the sets are all of \emph{different sizes}.
The main challenge comes from the fact that we have no way of controlling how $\cF_{d}$ fails to be a robust linear EK configuration.
It might be the case that this failure only allows us to control the smallest of the sets.

This is overcome by noting that if the sets are of vastly different sizes, then the failure of configuration to be a linear EK configuration will be more structured, and we will be able to control one of the bigger two sets.
However if the sets are not that different in size, then control of the smallest gate will be enough to make the above arguments work, with slight modifications in parameters.
Our intermediate lemma become slightly more involved since they will now depend on these relative sizes.

\paragraph{Algebraic challenges and solutions:} 
On the algebraic side, our structure theorem is about absolutely irreducible form, but as we increase the base vector space of linear forms, we need to be able to apply the structure theorem to argue about forms that are absolutely reducible.
We are able to still use the structure theorem by arguing about the factors of the images of reducible forms under the graded quotients.
This in turn will require us to compare images of forms under sequences of graded quotients, so we will always have to ensure that our quotients are \emph{compatible} with previous quotients.
To this end, we introduce the notion of twisted graded quotients, which allow us to \emph{track the progress on the factorization of the forms in our original configuration}.

Further, we will have to define a \emph{potential function} that captures \underline{fine grained} notions of "how close a form is to an algebra," and also how much it factorises with respect to that vector space.
We have to also study how this potential changes as the vector space is iteratively modified.
This potential function will also be defined using graded quotients, and tracking the change in potential as the vector space is modified will require the quotients to be compatible as described above.

\paragraph{Step 3 - Graded quotients and potential function:}
We can now finally describe our algebraic tools and our technical contributions on the algebraic side.
The notion of graded quotients was introduced in \cite{OS24}, and further developed in \cite{GOS24}, as a generalization of projection maps used in \cite{S20, PS20a, PS20b, PS22, GOPS23}.
In the simplest case, graded quotients can be described as follows. Let $S:=\bC[x_1,\cdots,x_m,y_1,\cdots,y_n]$, and let $z$ be a new variable. For $(\alpha_1,\cdots,\alpha_m)\in \bC^m$, consider the map $\varphi_\alpha:S[z]\rightarrow \bC[z,y_1,\cdots,y_n]$ given by $x_i\mapsto \alpha_i z$. Given a finite set of forms $\cF\subseteq S$, one can apply such a map $\varphi_\alpha$ and obtain a corresponding set of forms $\varphi_\alpha(\cF)$ in the new polynomial ring. If $\alpha$ is chosen to be general enough, then the map $\varphi_\alpha$ preserves several useful properties of forms, such as linear independence. Moreover, if the image $\varphi_\alpha(\cF)$ is low-dimensional for general choices of $\alpha$, then the span of $\cF$ itself must be low-dimensional. 
These properties played a key role in establishing uniform bounds for quadratic Edelstein-Kelly and Sylvester-Gallai-type configurations in \cite{S20, PS20a, PS20b, PS22, GOS22, GOPS23}.

In \cite{OS24}, a more general tool of graded quotients was developed which can deal with higher degree forms instead of variables $x_i$. In particular, for a $\bC$-algebra $R$ and a finite dimensional vector space $V\subseteq R$, with basis $F_1,\cdots,F_m$, a graded quotient is given by a map of the form $F_i\mapsto \alpha_iz^{\deg(F_i)}$. In particular, we have a map $\varphi_{V,\alpha}:R[z]\rightarrow R'$, where $R'$ is a quotient ring $R[z]/I$. Moreover, if the forms $F_i$ are ``strong" enough, i.e. they behave as variables, then useful properties of $\varphi_\alpha$ also hold in this general setting as shown in \cite{OS24}. In \cite{GOS24}, it was shown that such graded quotients can also capture absolute irreducibility. In the situation of polynomial rings, this property shows that a form $F\in S$ is absolutely irreducible with respect to $\bC(x_1,\cdot,x_m)$, iff $\varphi_\alpha(F)$ is irreducible for general choices of $\alpha$. 

Recall that in our setting of EK-configurations, we want to successively build vector spaces $V$, such as $V=\Cspan{x_1,\cdots,x_m}$, which make more number of forms absolutely reducible with respect to the function field $\bC(V)$. Note that absolute irreducibility over $\bC(x_1,\cdots,x_m)$ is significantly more challenging to handle than irreducibility in the polynomial ring $\bC[z,y_1,\cdots,y_m]$. Therefore, these graded quotients provide an effective tool to handle  absolute irreducibility in our case. However, this approach leads us beyond the realm of polynomial rings, as the graded quotient rings are not necessarily polynomial rings. Thus, we consider EK-configurations more generally in quotient rings, and prove rank bounds in this general setting in \cref{theorem: general ek main}.

Moreover, in \cref{prop: graded quot of ek}, we show that the Edelstein-Kelly property is preserved under such general quotients. This property enables us to inductively reduce the degree of EK-configurations by controlling the highest degree forms in $\cF$. As mentioned before, several new technical challenges appear in our setting, which were not present in the previous works that employed graded quotients.

\emph{Composition.} Given a vector space $V$, we would like to increase it to another vector space $Y$, such that more forms of our EK-configuration become absolutely reducible with respect to the corresponding fields of rational functions. Therefore, by the equivalent criterion discussed above, we want to find new forms $F$ such that $\varphi_{Y,\beta}(F)$ is reducible while $\varphi_{V,\alpha}(F)$ is irreducible. Although $V\subseteq Y$, in general these two quotient maps might not be compatible. In other words, the quotient map $\varphi_{Y,\beta}$ might not be a composition of two quotients of the form $\varphi_{V,\alpha}$ and $\varphi_{W,\gamma}$. In \cref{sec:generalquot}, we define the notion twisted quotients and overcome this technical obstacle by showing that we can still decompose $\varphi_{Y,\beta}$ as a composition of $\varphi_{V,\alpha}$ and an appropriate twisted quotient (up to isomorphism).

\emph{Potential function.} In the setting of EK configurations, it is not enough to increase the number of forms that factor under a quotient. We also need a more refined measure of how much a given form factors. Although previous works such as \cite{OS24, GOS24} had employed graded quotients, this finer requirement is a novel that we face due to EK-configurations. In \cref{subsection: potential function}, we quantify this by defining a potential function using the graded quotients. In particular, for a graded quotient map $\varphi_{V,\alpha}$ we have an associated potential function $\Psi$ such that the following two key properties hold. For two forms $P,Q$, if $\Psi(P)>\Psi(Q)$, then $P$ depends more on the variables in $V$ than $Q$ does. Moreover, $\Psi(P)>\Psi(Q)$ also implies that $P$ factors more with respect to $V$ than $Q$. For a given EK-configuration $\cF$, we quantify the total amount of factoring and closeness with respect to $V$ as the group potential, which is a sum of the individual potentials of forms in $\cF$.

Given these technical tools, we employ the strategy discussed earlier to prove our rank bound on EK-configurations. Moreover, our proof works without the simplifying assumptions made in the beginning. We discuss the key ideas below. 

\paragraph{Step 4 - Removing assumptions:} 
We started with the assumption that every form in $\cF$ depended only on constantly many variables.
In general, it is possible that the forms depend on many or even all the variables.
This issue was overcome in \cite{OS24}, building on the seminal work of \cite{AH20}.
They show that forms $H_{1}, \dots, H_{a}$ of high enough strength (a notion we define in \cref{sec:strongalg}) behave essentially like variables, in the sense that $\bC\bs{H_{1}, \dots, H_{a}}$ is isomorphic to a polynomial ring.
Further, the extension $\bC\bs{H_{1}, \dots, H_{a}} \subset S$ has many of the properties that the extension $\bC\bs{x_{1}, \dots, x_{a}} \subset S$ has, the most useful of them being that this extension preserves arbitrary intersection of ideals.
The work of \cite{OS24} is able to generalize the notion of general projections of linear forms from \cite{S20} to the setting of strong vector spaces, and in our work we build on \cite{OS24} to add the extra flexibility to work with \emph{compatible quotients}, which is done in \cref{sec:generalquot}.

The notion of absolute irreducibility for strong algebras is done in \cite{GOS24}, where they also define the notion of absolute reducibility with respect to strong vector spaces (see \cref{subsec:absirredstrong}).
In that work, they show that general graded quotients send forms that are absolutely reducible with respect to strong vector spaces to reducible forms.
We are able to use the generalization done in \cite{GOS24} and combine it with our approach to solve the Edelstein-Kelly theorem.

\subsection{Summary of contributions \& comparison with previous works}\label{subsection: contributions and comparison}

We are now able to provide a summary of our contributions, along with a more in-depth comparison between the most relevant previous works to ours: \cite{PS20b, GOS24}.
We begin by summarising our contributions:

\begin{enumerate}
    \item We prove that non-linear Edelstein-Kelly configurations have bounded rank, therefore proving that the blackbox PIT algorithm of \cite{beecken2013algebraic} runs in polynomial-time for $\SPSP{3}{d}$ circuits
    \item On the algebro-geometic size, we are able to \emph{significantly enhance} the general graded quotients developed in \cite{OS24} via the compatibility and twists to be able to more carefully track the reducibility of any form with respect to a sequence of compatible quotients, as well as to lift the bounds from the quotiented spaces to the original space.
    This ability, along with our potential function, allows us to get quantitative control over how our configurations factor with respect to a given sequence of strong vector spaces.

    Another important point, albeit less central than the above, is the refinement of the prime bounds of \cite{GOS24} to be able to capture irreducibility inside of certain pencils of forms.
    \item Combinatorially, we provide a more streamlined method to iteratively construct the desired vector spaces, as well as a simpler and more general procedure to find a suitable small vector space to increase the strong vector space under which our configuration factors more.
    This allows us to handle the more restrictive combinatorial constraints imposed by the Edelstein-Kelly configurations. 
\end{enumerate}

\paragraph{Comparison between our work and \cite{PS20b}.} 
A key technical ingredient in the proof of \cite{PS20b} is a structure theorem for ideals generated by pairs of quadratics, that was introduced in \cite{S20} and then further refined in \cite{PS20a} and \cite{PS20b}.
The structure theorem essentially states that the ideal $I = \ideal{Q_{1}, Q_{2}}$ (where $Q_{1}, Q_{2}$ are quadratic forms) is either prime, or has a minimal prime of the form $\ideal{Q_{1}, a}$ for a linear form $a$, or has a linear minimal prime.
This is an extremely fine grained classification of the structure of ideals generated by two quadratics, which one cannot hope to develop in more general settings.

With this theorem at hand, the proof of \cite{PS20b} proceeds by considering all ideals $\ideal{Q_{1}, Q_{2}}$ with $Q_{1}, Q_{2}$ in different sets, and classifying them based on which of the three structural properties they have.
The simpler case is when only the first and third of the above conditions is satisfied.
If every form $Q_{i}$ satisfies the first structural condition with a constant fraction of the other forms, then the configuration is essentially a partial linear EK configuration.
If every form $Q_{i}$ satisfies the third structural condition with a constant fraction of the other forms, then there exists a small collection of linear forms $v_{1}, \dots, v_{r}$ such that every form in the configuration is in the ideal $\ideal{v_{1}, \dots, v_{r}}$, and a general projection can be applied.
In general, it might be the case that other forms satisfy the first condition with constant fraction of the other forms, and some forms satisfy the second condition.
In this case, a careful case analysis is performed to combine the above two arguments.

The hard case is when there are forms that satisfy the second property with most other forms in the configuration.
This boils down to the condition that most forms in the configuration are of the form $Q_{0} + a_{i} b_{i}$ for some quadratic form $Q_{0}$.
When this happens, much work is done to show that the forms $a_{i}, b_{i}$ themselves form a partial EK configuration.
The remaining small fraction of forms that are not of this type also have to be dealt with, leading to a further case analysis.

In comparison, our method gets rid of the fine-grained structure theorem, in favour of what we call the prime bounds \cref{lem:primebound,}, \cref{lem:absredcount}.
We note that these bounds are not new, \cref{lem:primebound} was proved and used in \cite{GOS24}, and a similar bound for radical ideals was proved and used in \cite{OS24}.
There is almost no hope of generalising the earlier structure theorem beyond the case of quadratics or cubics, since the degree of the ideals grow fast (this is what controls the number of cases), as do the complexity of prime ideals of low degree (this is what controls the complexity of each case).
Our method of using the prime bounds not only easily generalises, but also does so without requiring as much case analysis.

Extending the result from quadratics to forms of higher degree also required a stronger inductive hypothesis, and requires us to generalise the notion of EK configurations to rings that are quotients of polynomial rings.
This idea was introduced in \cite{OS24}, and was also used in \cite{GOS24}.

In addition to the generalization to higher degree forms, another advantage of our approach is in the simplification and streamlining of some of the combinatorial aspects of non-linear EK configurations, which allows us to do less case analysis in the more general setting.

\paragraph{Comparison between our work and \cite{GOS24}.}
In this work, we use the main structure theorem from \cite{GOS24}, along with a technical observation which allows us to account for slightly more general primality phenomena. 
This appears in \cref{lem:absredcount}, which informally states that being absolutely irreducible is a fairly robust notion, and is almost invariant under perturbation by change of constant term.
At a high level, this allows us to isolate the highest degree forms in our EK configuration.
This in turn plays a crucial role in the combinatorial arguments used to control the sets in the configuration.

Since the combinatorics of the Edestein-Kelly problem is more involved than product Sylvester-Gallai configurations, our contributions (compared to \cite{GOS24}) are more on the development of a more refined potential function, along with a more refined general graded quotient.
In the analysis of \cite{GOS24}, it was fairly straightforward to lower bound the change in potential in each step of the argument.
Here, getting this bound is more challenging, and requires us to analyze not just one graded quotient, but two graded quotients at a time.
Comparing two graded quotients introduces its own challenges, since the composition of two graded quotients is a slightly different object that just one larger graded quotient.
This requires the introduction of a slightly more general quotient (a twisted graded quotient).

\subsection{Related work}

We now elaborate on previous and related works on generalizations of Sylvester-Gallai configurations and the PIT problem for depth-4 circuits.

\subsubsection{Sylvester-Gallai and Edelstein-Kelly configurations}

In \cite{gupta2014algebraic}, as a first step of his programme to give deterministic, poly-time algorithms for the PIT problem for $\SPSP{k}{d}$ circuits, Gupta proposed the study of an algebro-geometric generalization of the linear Sylvester-Gallai configurations, later termed radical Sylvester-Gallai configurations.

The first progress on Gupta's radical Sylvester-Gallai conjecture (\cite[Conjecture 2]{gupta2014algebraic}) was made by \cite{S20} in the special case when $d = 2$.
Subsequently, \cite{OS22} proved \cite[Conjecture 2]{gupta2014algebraic} for $d = 3$, and \cite{OS24} fully resolved this conjecture in the affirmative for any degree.

However, as we have seen in the connections to the PIT problem, two weaker requirements appear: one is only required that the product of certain forms lies in the radical of any pair of forms, and also the forms are divided into sets.
This led \cite{PS20a} to consider product Sylvester-Gallai configurations, where the first of the two extra conditions is now included. 
In the same paper, the authors prove a product Sylvester-Gallai theorem for quadratic forms.
Recently, the work of \cite{GOS24} has fully resolved the product Sylvester-Gallai conjecture for any degree $d$.

Finally, in \cite{PS20b}, building on their previous works, Peleg and Shpilka settled \cref{conjecture: gupta main simple intro} for the case when $d = 2$, thereby achieving a deterministic, polynomial-time algorithm for the PIT problem for $\Sigma^{3} \Pi \Sigma \Pi^{2}$ circuits.
Our work now resolves \cref{conjecture: gupta main simple intro} for any constant $d$.
In the next subsection, after we discuss our proof overview, we will provide a more in-depth comparison between our work and the work of \cite{PS20b}.

Recent works \cite{PS22, GOS22, GOPS23} have explored robust and higher-codimensional extensions of the radical Sylvester-Gallai theorems. 
While these results address intriguing problems in extremal combinatorial geometry, they are also driven by motivations from the PIT problem for depth-4 circuits. 
The hope is that these broader generalizations could lead to new insights -- or perhaps a more accessible proof -- of Gupta's conjectures.

\subsubsection{PIT for depth-4 circuits}

Depth-4 circuits with bounded top and bottom fan-in are the simplest circuit classes for which we still lack deterministic polynomial-time algorithms for the PIT problem. 
Thus, this model has been attacked via approaches that go beyond the Sylvester-Gallai-based techniques discussed earlier.

In \cite{DDS21}, the authors present a quasipolynomial-time PIT algorithm for $\SPSP{k}{d}$ circuits, leveraging the Jacobian method introduced in \cite{asss16}. 
Their key insight involves applying the logarithmic derivative and its associated power series expansion to transform the top summation gate of the circuit into a powering gate. 
While this transformation technically violates the bounded top fan-in constraint, circuits with powering gates are well studied, and efficient PIT algorithms are known for such models. 
This reduction enables the use of existing tools for PIT in powering-gate circuits, ultimately leading to their quasipolynomial-time result.

The breakthrough work of \cite{LimayeST21} on proving lower bounds for bounded-depth arithmetic circuits offers another route toward derandomizing PIT for this model. 
The \emph{hardness-versus-randomness paradigm} has been a powerful tool for establishing connections between circuit lower bounds and derandomization~\cite{Agr05, KI04}. 
Notably, the results of~\cite{CKS19} extended these tradeoffs to the bounded-depth regime, showing that analogous principles continue to hold even in this restricted setting.
By leveraging these developments, one obtains a \emph{subexponential-time} PIT algorithm for depth-4 circuits. 
Building further on the lower bound techniques of~\cite{LimayeST21}, \cite{AF22} constructed another \emph{hitting set generator} tailored to bounded-depth circuits, yielding an alternative subexponential-time PIT algorithm with improved parameters. 
This construction exploits the fact that such circuits are inherently unable to detect low-rank matrices, due to the algebraic hardness of computing the determinant in low-depth settings.
It is worth emphasizing, however, that neither of these approaches appears capable of yielding a \emph{fully polynomial-time} PIT algorithm for $\Sigma^k \Pi \Sigma \Pi^d$ circuits. 

\subsubsection{Concurrent Work}

In an exciting article concurrent and independent to this one, Guo and Wang \cite{guo2025deterministic} have given a polynomial time deterministic PIT algorithm for $\tspsp$ circuits over fields of any characteristic, under the additional assumption that one of the three gates computes a squarefree polynomial.
The techniques in the two articles are different, and the release of the two articles was coordinated after completion.

\section*{Acknowledgments}

The authors would like to thank Shir Peleg and Amir Shpilka for several helpful conversations throughout the course of this work.

\section{Preliminaries}

In this section we establish notation and preliminary facts we will need for the rest of the paper.
Let $S = \bC\bs{x_{1}, \ldots, x_{N}}$ denote the polynomial ring, graded by degree $S = \bigoplus_{i \geq 0} S_{i}$.
Given a vector space $V \subset S$, we use $V_{i}$ to denote the degree $i$ piece, that is, $V_{i} = V \cap S_{i}$.
We say that a vector space is graded if $V = \oplus V_{i}$.

We use \emph{form} to refer to a homogeneous polynomial.
Given two forms $A, B$ we say that $A, B$ are non-associate if $A \not \in \ideal{B}$ and $B \not \in \ideal{A}$.
If $A, B$ are of the same degree, this is equivalent to them being linearly independent.

\subsection{Algebraic geometry preliminaries}

We begin by recalling the notion of absolute irreducibility.

\begin{definition}
 A polynomial $P\in \bF[x_1,\cdots,x_N]$ is called \emph{absolutely irreducible} if $P$ is irreducible in the polynomial ring $\overline{\bF}[x_1,\cdots,x_n]$, over an algebraic closure $\overline{\bF}$ of $\bF$.
\end{definition}

Let $S = \bK[x_{1}, \dots, x_{n}, y_{1}, \dots, y_{m}]$ be a graded polynomial ring with $\deg(x_i)=d_i\geq 1$ and  $\deg(y_j)=e_j\geq 1$. 
Let $A=\bK[x_1,\cdots,x_n]$ and $K(A)$ the fraction field of $A$. 
Let $d\geq 1$ and $P \in S_d \setminus \ideal{x_{1}, \dots, x_{n}}$ such that $P$ is absolutely irreducible over $K(A)$.

The following two propositions bound the number of forms $Q_{i} \in A$ such that $\ideal{P, Q_{i}}$ is not prime, and the number of such forms such that $P + Q_{i}$ is absolutely reducible respectively.
The first of these was proved in \cite{GOS24}.
The novelty in the below statements is the fact that the bounds are independent of $n, m$.

\begin{theorem}[{\cite[Theorem~4.16]{GOS24}}]
\label{lem:primebound}
There are at most $\primebound{d}$ pairwise non-associate irreducible homogeneous elements $Q_{i} \in A$ such that $\ideal{P, Q_{i}}$ is not prime. 
\end{theorem}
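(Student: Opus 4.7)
The plan is to encode the failure of primality of $\ideal{P, Q_i}$ as a divisibility constraint on a single fixed element of $A$, via a Noether-type irreducibility criterion, and then to make the resulting bound uniform in $n$ and $m$ by a Bertini-style reduction in the $y$-variables. The $n$-dependence will drop out for free, since the number of pairwise non-associate irreducible factors of any element of $A$ is bounded by its total degree, regardless of how many $x$-variables there are.

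First I would carry out the algebraic reduction. Because $Q_i$ is irreducible, $A/(Q_i)$ is a graded integral domain with fraction field $K_i$, and $S/(Q_i)$ identifies with the polynomial ring $(A/(Q_i))[y_1, \ldots, y_m]$, itself a domain. Writing $\bar P$ for the image of $P$ modulo $Q_i$, the ideal $\ideal{P, Q_i}$ is prime in $S$ iff $\bar P$ is a prime element of $(A/(Q_i))[y]$. The hypothesis $P \notin \ideal{x_1,\ldots,x_n}$ forces some pure-$y$ monomial of $P$ to have a non-zero scalar coefficient, so some coefficient of $\bar P$ is a unit in $A/(Q_i)$. A Gauss-type content argument then shows that irreducibility of $\bar P$ in $K_i[y]$ already implies primality in $(A/(Q_i))[y]$; in particular, absolute irreducibility of $\bar P$ over $K_i$ is sufficient. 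Equivalently, every ``bad'' $Q_i$ must make $\bar P$ absolutely reducible over $K_i$.

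Next I would apply a Noether-type irreducibility criterion to the coefficient vector of $P$, viewed as a form of degree $d$ in the $y$-variables over $K(A)$: there is a universal polynomial $\mathcal{N}$ on the coefficient space whose non-vanishing certifies absolute irreducibility. Since $P$ is absolutely irreducible over $K(A)$, we obtain $\mathcal{N}(P) \neq 0$ in $A$. Universality of $\mathcal{N}$ means evaluation commutes with ring homomorphisms on the coefficient ring, so $\mathcal{N}(\bar P) = \overline{\mathcal{N}(P)}$ in $A/(Q_i)$, and absolute reducibility of $\bar P$ over $K_i$ forces $Q_i \mid \mathcal{N}(P)$. Hence every bad $Q_i$ is, up to associates, an irreducible homogeneous factor of the fixed non-zero element $\mathcal{N}(P) \in A$, and their number is at most $\deg_A \mathcal{N}(P)$.

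The main obstacle is making $\deg_A \mathcal{N}(P)$ depend only on $d$. The naive Noether polynomial for a degree-$d$ form in $m$ variables has degree growing with $m$, so this degree has a spurious $m$-dependence. To eliminate it I would apply a Bertini-style substitution $y_j \mapsto L_j(z_1,\ldots,z_r)$ for a bounded $r=r(d)$ and generic linear forms $L_j$, handled weight by weight to respect the grading of $S$. For generic choices, the substituted form $P'$ remains absolutely irreducible over $K(A)$ by openness of the absolutely irreducible locus, while for each bad $Q_i$ the absolute reducibility of $\bar P$ descends to that of $\bar{P'}$ since factorizations survive substitution. Applying the Noether argument to $P'$ instead of $P$ then yields $\mathcal{N}(P') \in A$ whose degree depends only on $d$, giving the bound $\primebound{d}$. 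The delicate point I expect is arranging a \emph{single} generic substitution that works uniformly for all bad $Q_i$ at once rather than one at a time; this should follow from combining openness of the absolutely irreducible locus with a single further Noether-type detection statement applied to the one-parameter family indexed by the $Q_i$.
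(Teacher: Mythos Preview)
The paper does not give its own proof of this statement; it is quoted verbatim from \cite{GOS24}. Your outline is precisely the strategy that the present paper uses for the companion statement (\cref{lem:absredcount}): a Noether-form argument over the coefficient ring $A$ followed by a Bertini reduction in the $y$-variables to make the bound depend only on $d$. So at the level of ideas your proposal is on target and matches the intended approach.

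Two small comments. First, the step you call a ``Gauss-type content argument'' is correct but for a slightly different reason than Gauss's lemma: $A/(Q_i)$ need not be a UFD (or even normal), so content multiplicativity is not available. What actually makes the implication ``$\bar P$ irreducible in $K_i[y]\Rightarrow (\bar P)$ prime in $(A/(Q_i))[y]$'' go through is that $(P,Q_i)$ is a complete intersection in the Cohen--Macaulay ring $S$, so $S/(P,Q_i)$ is CM; combined with the fact that the unit $y$-coefficient forces every irreducible component of $V(\bar P)$ to dominate $\operatorname{Spec}(A/(Q_i))$, one gets that integrality of the generic fibre $K_i[y]/(\bar P)$ lifts to integrality of $(A/(Q_i))[y]/(\bar P)$. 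Second, the ``delicate point'' you flag about choosing a single generic substitution valid for all bad $Q_i$ is not actually needed: argue by contradiction with a fixed finite set of $C+1$ bad $Q_i$, as the paper does in the proof of \cref{lem:absredcount}, and then a single generic hyperplane section handles these finitely many simultaneously.
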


\begin{proposition}
\label{lem:absredcount}
There are at most $\irredprimebound{d}$ elements $Q_{i} \in A$ such that $P + Q_{i}$ is absolutely reducible over $K(A)$.
\end{proposition}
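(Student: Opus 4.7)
The plan is to combine Noether's reducibility forms with a Bertini-type reduction to bound, uniformly in $n$ and $m$, the number of $Q_i$ making $P + Q_i$ absolutely reducible. Viewing $P$ as a polynomial in $y_1, \dots, y_m$ with coefficients in $L := K(A)$, the hypothesis $P \notin (x_1, \dots, x_n)$ forces $P$ to contain a pure $y$-monomial, so $\deg_y P \geq 1$, and by assumption $P$ is irreducible in $\overline{L}[y_1, \dots, y_m]$. Each $Q_i \in A \subset L$ is independent of the $y_j$'s, so $P + Q_i$ differs from $P$ only in its $y$-constant coefficient, that is, by a scalar shift in $\overline{L}[y_1, \dots, y_m]$.

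I first reduce to the bivariate case via Bertini. Choosing generic linear forms $\ell_1, \dots, \ell_m \in \bK[z_1, z_2]$ (possible since $\bK$ has characteristic zero), I consider the substitution $\phi: y_j \mapsto \ell_j$, giving a $\overline{L}$-algebra map $\phi: \overline{L}[y_1, \dots, y_m] \to \overline{L}[z_1, z_2]$. Bertini's theorem implies that for generic $\ell_j$'s, $\tilde P := \phi(P)$ remains absolutely irreducible in $\overline{L}[z_1, z_2]$, of degree $d' := \deg_y P \leq d$. Since the coefficients of $\ell_j$ lie in $\bK$, $\phi$ fixes every $Q_i \in A$; hence absolute reducibility of $P + Q_i$ descends to absolute reducibility of $\tilde P + Q_i$, and it suffices to bound the number of $Q_i$ for which $\tilde P + Q_i$ is reducible in $\overline{L}[z_1, z_2]$.

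Next I invoke the classical Noether reducibility theory for bivariate polynomials: there is a finite system of polynomial forms $R_1, \dots, R_N$ in the coefficients of a degree-$d'$ bivariate polynomial whose simultaneous vanishing characterizes absolute reducibility, and whose cardinality and degrees are bounded purely in terms of $d'$. Since $\tilde P$ is absolutely irreducible, some $R_j$ satisfies $R_j(\tilde P) \neq 0$. Treating $R_j$ as a function of the constant coefficient and evaluating on $\tilde P + \lambda$, I obtain a nonzero univariate polynomial in $\lambda$ of degree at most $\deg R_j$; the number of its roots in $\overline{L}$, and therefore the number of specializations $\lambda = Q_i$ satisfying $R_j(\tilde P + Q_i) = 0$, is bounded by a function of $d$ alone, which yields $\irredprimebound{d}$.

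The principal obstacle is precisely the uniformity in $n$ and $m$: a direct Noether-form argument in $m$ variables would produce an $m$-dependent bound, which would be fatal in the inductive framework. The Bertini step is designed to eliminate this dependence, but requires care, since we need a generic linear substitution whose coefficients lie in $\bK$ (so that $\phi$ fixes $A$ and does not perturb the $Q_i$'s), and we need absolute irreducibility to be preserved over the transcendental algebraic closure $\overline{L}$ rather than only over $\overline{\bK}$. One must additionally verify that the chosen $R_j$ depends nontrivially on the constant coefficient, so that after the shift by $\lambda$ the resulting univariate polynomial is genuinely nonzero.
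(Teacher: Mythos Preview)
Your proof is correct and follows essentially the same route as the paper's: Noether reducibility forms to bound the bad constant-term shifts by a function of the number of $y$-variables and $d$, combined with a Bertini reduction to remove the dependence on $m$. The paper applies these in the opposite order (first bounding by a function $\widetilde{\cC}_2(m,d)$ via the reducibility variety, then reducing $m$ to at most $5$ by generic hyperplane sections and a contradiction argument), and your flagged worry about $R_j$ not depending on the constant term is a non-issue---evaluating your polynomial in $\lambda$ at $\lambda = 0$ gives $R_j(\tilde P) \neq 0$, which is exactly how the paper shows $F'$ is nonzero via $F'(P_0) = F(P) \neq 0$.
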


\begin{proof}
    The proof proceeds in two parts.
    We first show the existence of a function $\widetilde{\cC}_{2}: \bN \times \bN \to \bN$ such that the number of $Q_{i}$ is at most $\widetilde{\cC}_{2}(m, d)$.
    Let $\bF := \overline{K(A)}$, the algebraic closure of $K(A)$.
    Let $S' := \bF\bs{y_{1}, \dots, y_{m}}$.
    Let $M := \binom{m+d+1}{d}$, so $\bP_{S'}^{M-1}$ can be identified as the space of $m$-variate forms of degree $d$ with coefficients in $S'$.
    By \cite[Corollary~14.3]{Eis95}, the set of reducible forms are a closed subvariety of $\bP_{S'}^{N-1}$, say $V_{m, d}$.
    Moreover, the degree of $V_{m, d}$ depends only on $m, d$.
    The ideal $I(V_{m, d})$ can be generated by forms whose degree only depends on $m, d$, let $\widetilde{C}_{2}(m, n)$ be the maximum degree of a generating set for $I(V_{m, d})$.

    By assumption, $P$ is irreducible in $S'$, therefore $P \not \in V_{m, d}$.
    Let $F$ be a form in the ideal of $V_{m, d}$ of degree at most $\widetilde{\cC}_{2}(m, d)$ such that $F(P) \neq 0$.
    Let $F'$ be the univariate obtained by specialising all the variables of $F$ to the coefficients of $P$, except the constant term.
    Let $P_{0}$ be the constant term of $P$.
    We know that $F'$ is nonzero, since $F'(P_{0}) = F(P) \neq 0$.
    Every $Q_{i} \in \bF$ such that $P + H_{i}$ is absolutely reducible must satisfy $F'(P_{0} + Q_{i}) = 0$.
    Since $\deg F' \leq \widetilde{\cC}_{2}(m, d)$, there are at most $\widetilde{\cC}_{2}(m, d)$ many $Q_{i}$.

    We now claim that $\irredprimebound{d} = \max_{c \leq d, 1 \leq m' \leq 5} \widetilde{\cC}_{2}(m', c)$ satisfies the required properties.
    If $m \leq 5$, then the claim is true by definition of $\cC_{2}$.
    Assume $m \geq 6$, and towards a contradiction assume that there are more than $\irredprimebound{d}$ forms such that $P + Q_{i}$ is absolutely reducible.
    Apply $m-5$ random hyperplane sections to $P$, to obtain a degree $c$ polynomial $P'$.
    By \cite[Lemma~4.7]{OS24}, the form $P'$ is irreducible in $S'$.
    Since the hyperplane sections do not act on $\bF$, each of the forms $P' + Q_{i}$ is absolutely reducible, which contradicts the definition of $\widetilde{\cC}_{2}(5, c)$.
\end{proof}

\begin{corollary}
    \label{cor:absredspan}
    There are at most $\irredprimebound{d}$ non-associate homogeneous degree $d$ forms $Q_{i} \in A$ such that $\Kspan{P, Q_{i}} \setminus \bc{Q_{i}}$ contains a form that is absolutely reducible over $K(A)$.
\end{corollary}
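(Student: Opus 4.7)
The plan is to reduce the corollary to \cref{lem:absredcount} by a rescaling argument on each pencil. Given a form $Q_i$ satisfying the hypothesis, choose an absolutely reducible element $R_i = \alpha_i P + \beta_i Q_i \in \Kspan{P, Q_i} \setminus \bc{Q_i}$. The first step is to observe that one may take $\alpha_i \neq 0$: the elements of $\Kspan{P, Q_i}$ with $\alpha_i = 0$ are nonzero scalar multiples of $Q_i$, hence absolutely reducible precisely when $Q_i$ itself is, and they are all proportional to $Q_i$. Reading $\bc{Q_i}$ as excluding the line through $Q_i$ (equivalently, requiring the $P$-coefficient in $R_i$ to be nonzero), this case is ruled out, so we may assume $\alpha_i \neq 0$.

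Dividing $R_i$ by $\alpha_i$ yields the absolutely reducible form $P + c_i Q_i$, where $c_i := \beta_i / \alpha_i$. The scalar $c_i$ must be nonzero, since otherwise $P$ itself would be absolutely reducible, contradicting the hypothesis on $P$. Set $Q_i' := c_i Q_i \in A_d$; then $P + Q_i'$ is absolutely reducible over $K(A)$, so $Q_i'$ fits the hypothesis of \cref{lem:absredcount}. That proposition then bounds the number of such elements $Q_i' \in A_d$ by $\irredprimebound{d}$.

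Finally, if $Q_i$ and $Q_j$ are non-associate elements of $A_d$ counted by the corollary, then $Q_i' = c_i Q_i$ and $Q_j' = c_j Q_j$ are also non-associate (multiplication by a nonzero scalar preserves associate classes), hence distinct elements of $A_d$. Therefore the assignment $Q_i \mapsto Q_i'$ is injective on non-associate classes, so the number of non-associate $Q_i$'s counted by the corollary is at most the number of corresponding $Q_i'$'s, namely at most $\irredprimebound{d}$. The only delicate point in the argument is the interpretation of the set-difference $\Kspan{P, Q_i} \setminus \bc{Q_i}$ as removing the line through $Q_i$ rather than a single point (which would otherwise admit arbitrarily many absolutely reducible $Q_i$'s themselves); modulo this convention, the corollary is a clean rescaling reduction to \cref{lem:absredcount}, with no new ideas required.
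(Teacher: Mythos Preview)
Your proof is correct and follows exactly the same route as the paper's: normalize the absolutely reducible element in each pencil to the form $P + \lambda_i Q_i$, observe that non-associate $Q_i$'s give distinct $\lambda_i Q_i$'s, and invoke \cref{lem:absredcount}. The paper's proof is terser (it simply asserts the existence of such a $\lambda_i$ and the distinctness of the $\lambda_i Q_i$), whereas you spell out why $\alpha_i \neq 0$ and $c_i \neq 0$; but the content is identical.

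Your remark about the interpretation of $\bc{Q_i}$ is a fair observation: the paper's proof silently assumes the absolutely reducible element has a nonzero $P$-coefficient, which is exactly the reading you adopt. The statement as literally written (removing only the single point $Q_i$) would indeed be false whenever $A_d$ contains many non-associate absolutely reducible forms, so the intended meaning must be the one you use.
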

\begin{proof}
    Suppose $\lambda_{i} \in \bK$ is such that $P + \lambda_{i} Q_{i}$ is absolutely reducible.
    Since the forms $Q_{i}$ are non-associate, we have $\lambda_{i} Q_{i} \neq \lambda_{j} Q_{j}$ for any $i \neq j$.
    Applying \cref{lem:absredcount} completes the proof.
\end{proof}

\section{Sylvester-Gallai \& Edelstein-Kelly Configurations}\label{section: SG and EK configurations}

In this section, we formally define Sylvester–Gallai (SG) configurations, and Edelstein–Kelly (EK) configurations.
We start by defining linear versions of these configurations, and stating known bounds on them.
We then define higher degree EK configurations, which are the main object of study in this article.

\subsection{Linear Sylvester–Gallai configurations}

We start with the basic definition of robust linear Sylvester–Gallai (SG) configurations.

\begin{definition}[Robust linear Sylvester-Gallai configurations]
\label{def:linearsg}
Let $r \in \bN$ , $0 < \delta \leq 1$ and $V$ be a $\bK$-vector space. Let $\cF := \{v_1, \ldots, v_m\} \subset V$ be a finite set of \emph{pairwise linearly independent} vectors. 
We say that $\cF$ is a $\lrsg{r}{\delta}$ configuration over $\bK$ if there exists a $\bK$-vector subspace $U \subset V$ of dimension at most $r$ such that the following condition holds: 
\begin{itemize}
    \item for any $v_i \in \cF \setminus U$, there exist at least $\delta (m-1)$ indices $j \in [m] \setminus \{i\}$ such that $v_j \not\in U$ and
    $$\abs{ \Kspan{v_i, v_j} \cap \cF } \geq 3 \ \ \ \text{ or } \ \ \  \Kspan{v_i, v_j} \cap U \neq (0).$$
\end{itemize} 

We will say that $\cF$ is a $\lrsg{r}{\delta}$ configuration over the vector space $U$.
\end{definition}

The following bound on such configurations is proved in \cite[Proposition 3.5]{OS24}, which is a generalization of the result \cite[Corollary 16]{S20}, using the sharper bounds from \cite{DGOS18}.

\begin{proposition}
    \label{prop:linearsg}
    If $\cF$ is a $\lrsg{r}{\delta}$ then $\dim \Kspan{\cF} \leq r + 12 / \delta$.
\end{proposition}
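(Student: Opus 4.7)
The plan is to reduce to a classical fractional Sylvester--Gallai configuration in the quotient space $V/U$ and invoke the sharp linear dimension bound from \cite{DGOS18}. Let $\pi\colon V \to V/U$ be the canonical quotient map. Since $\dim U \leq r$, we have $\dim \Kspan{\cF} \leq r + \dim \pi(\Kspan{\cF})$, so it suffices to show $\dim \pi(\Kspan{\cF}) \leq 12/\delta$. Vectors in $\cF \cap U$ are killed by $\pi$ and can be discarded. For the remaining vectors in $\cF \setminus U$, group them into equivalence classes under parallelism in $V/U$; let $w_1, \dots, w_k \in \cF$ be chosen representatives with class sizes $\mu_1, \dots, \mu_k$, and set $\cF' := \{\pi(w_1), \dots, \pi(w_k)\}$. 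Then $\cF'$ is a set of pairwise linearly independent vectors with $\Kspan{\cF'} = \pi(\Kspan{\cF})$, so the goal becomes $\dim \Kspan{\cF'} \leq 12/\delta$.

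The core case analysis is as follows. For a good pair $(v_i, v_j)$ with $v_i$ in class $t$: if option~(b) of \cref{def:linearsg} holds, then $\Kspan{v_i, v_j}$ contains a nonzero element of $U$, which forces $\pi(v_i), \pi(v_j)$ to be parallel, so $v_j$ lies in class $t$ as well. If instead option~(a) holds but option~(b) fails, then $v_j$ must lie in a distinct class $s \neq t$, and the third vector $v_\ell \in \Kspan{v_i, v_j} \cap \cF$ must project to a \emph{new} class $u \neq t, s$: otherwise, writing $v_\ell = \alpha v_i + \beta v_j$ and inspecting the subcases $\pi(v_\ell) = 0$, or $\pi(v_\ell)$ parallel to $\pi(v_i)$, or $\pi(v_\ell)$ parallel to $\pi(v_j)$, each produces a nonzero element of $U$ inside $\Kspan{v_i, v_j}$ (using that $\cF$ is pairwise linearly independent to rule out the trivial degeneration $v_\ell \in \bK v_i \cup \bK v_j$), contradicting the failure of option~(b). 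Hence each option~(a)-only good pair for $v_i$ contributes a genuine collinear triple $\{\pi(w_t), \pi(w_s), \pi(w_u)\}$ in $\cF'$.

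Since $v_i$ has at least $\delta(m-1)$ good indices and at most $\mu_t - 1$ of them can be option~(b)-only (those indices are confined to class $t$), at least $\delta(m-1) - (\mu_t - 1)$ are option~(a)-only. A careful averaging over representatives, together with a separate treatment of the degenerate regime where one class already dominates $\cF \setminus U$ (which forces $k$ to be small directly), shows that $\cF'$ is a $\delta$-fractional linear Sylvester--Gallai configuration in $V/U$. The sharp bound from \cite{DGOS18} then yields $\dim \Kspan{\cF'} \leq 12/\delta$, completing the proof. The principal obstacle is the bookkeeping in this final step: passing from good indices in the original configuration to good partner-classes in $\cF'$ is potentially lossy when many good indices lie in a single class $s$ (they count only once at the level of $\cF'$), and one must verify that this loss does not degrade the fractional parameter $\delta$ by more than an absolute constant.
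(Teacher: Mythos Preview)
The paper does not prove this proposition; it simply cites \cite[Proposition~3.5]{OS24} (itself building on \cite[Corollary~16]{S20} with the constants from \cite{DGOS18}). So there is no proof in the paper to compare against line-by-line, only the cited literature.

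Your reduction by quotienting out $U$ is the natural move, and your case analysis showing that an option-(a)-only good pair projects to a genuine three-point line in $V/U$ is correct. The gap is exactly where you flag it: the passage from ``$\delta(m-1)$ good indices for $v_i$'' to ``$\delta k$ good partner \emph{classes} for $w_t$'' is not justified. Your proposal collapses parallel images into single representatives and then asserts that ``careful averaging'' recovers a $\delta$-fractional SG condition on $\cF'$; but you yourself note that many good indices concentrated in one class $s$ produce only a single good partner class, and you give no mechanism to rule this out or to compensate for it. The sentence about a ``degenerate regime where one class dominates'' is a plausible heuristic, not an argument, and even if one large class can be handled, the intermediate regime (several moderately large classes) is untouched. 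As written, this step is a genuine hole, and it is not obvious that it can be patched without degrading the constant.

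The route taken in the cited works avoids this obstacle by \emph{not} collapsing classes at all. One keeps all $m' = |\cF\setminus U|$ projected vectors $\pi(v_1),\dots,\pi(v_{m'})$ (allowing repetitions as projective points) and feeds the linear dependencies directly into the design-matrix / diagonal-dominance machinery underlying \cite{DGOS18}. An option-(b) pair $(v_i,v_j)$ gives a two-term dependency $\pi(v_i)=\lambda\pi(v_j)$, and an option-(a) triple gives the usual three-term dependency; both fit into the same row construction. The $\delta(m-1)$ good indices for each $v_i$ then translate directly into the required row-support condition without any averaging over classes, and the rank bound $\dim\pi(\Kspan{\cF}) \le 12/\delta$ drops out with no loss in the parameter. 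In short: quotient by $U$, but skip the collapsing step and invoke the matrix argument on the multiset of images.
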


\subsection{Edelstein–Kelly configurations}

We now define partial linear Edelstein–Kelly (EK) configurations.
These generalise the classical notion of EK configurations in three ways.
The first is that some forms are allowed to not satisfy any relationship.
The second is that a form in a given set is only required to have span dependencies with forms in the bigger of the remaining two sets.
The third is that we only require dependencies with a constant fraction of this bigger set.

\begin{definition}[Partial linear Edelstein–Kelly configurations]
\label{def:linearek}
Let $c \in \bN$ , $0 < \delta \leq 1$ and $V$ be a $\bK$-vector space.
Let $\cF := \{v_1, \ldots, v_m\} \subset V$ be a finite set of pairwise linearly independent vectors.
We say that $\cF$ is a $\lrek{r}{\delta}$ configuration over $\bK$ if there exists a subset $\cG \subset \cF$ of size at most $r$, and a partition $\cF_{1} \cup \cF_{2} \cup \cF_{3}$ of the remaining forms $\cF \setminus \cG$ (with $\abs{\cF_{1}} \geq \abs{\cF_{2}} \geq \abs{\cF_{3}}$) such that the following condition holds:
\begin{itemize}
    \item for any $v \in \cF_{1}$ there are at least $\delta \abs{\cF_{2}}$ forms $u \in \cF_{2}$ such that $\abs{\Kspan{v, u} \cap \br{\cF_{3} \cup \cG}} \geq 1$. 
    \item for any $v \in \cF_{2}$ there are at least $\delta \abs{\cF_{1}}$ forms $u \in \cF_{1}$ such that $\abs{\Kspan{v, u} \cap \br{\cF_{3} \cup \cG}} \geq 1$. 
    \item for any $v \in \cF_{3}$ there are at least $\delta \abs{\cF_{1}}$ forms $u \in \cF_{1}$ such that $\abs{\Kspan{v, u} \cap \br{\cF_{2} \cup \cG}} \geq 1$. 
\end{itemize} 
\end{definition}

Such configurations were introduced by \cite{S20} and further studied by \cite{PS20b}.
The following is an improved bound on the dimensions of such configurations, and follows by a more careful analysis of the proof of \cite{S20}, and the above improved bound on the dimension of $\lrsg{r}{\delta}$ configurations.
For completeness we provide a proof here.

\begin{proposition}
    \label{prop:linearek}
    If $\cF$ is a $\lrek{r}{\delta}$ configuration then $\dim \Kspan{\cF} \leq r + c_{\mathrm{ek}} \cdot \delta^{-1} \log \br{\delta^{-1}}$ for a universal constant $c_{\mathrm{ek}}$.
\end{proposition}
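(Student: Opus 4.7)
The plan is to reduce the statement to a bound on a robust linear Sylvester-Gallai configuration and then apply \cref{prop:linearsg}. First I would handle the exceptional set by setting $U_0 := \Kspan{\cG}$, which has dimension at most $r$, and projecting to the quotient $V/U_0$ via $\pi$. Under $\pi$, any EK-triple $(v, u, w)$ from \cref{def:linearek} with $w \in \cG$ collapses to a proportionality relation between $\pi(v)$ and $\pi(u)$, while triples with $w \in \cF_3$ (or $\cF_2$) remain genuine collinear triples in the quotient. After absorbing newly proportional pairs into the exceptional data, it suffices to bound $\dim \Kspan{\pi(\cF_1 \cup \cF_2 \cup \cF_3)}$ by $O(\delta^{-1} \log \delta^{-1})$, and then add the $r$ contribution from $U_0$ at the end.

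Second, I would use a double-counting argument to show the three sets are comparable in size. For any $w \in \cF_3$ and $v \in \cF_1$, there is at most one $u \in \cF_2$ with $w \in \Kspan{v, u}$, since $u$ must lie on the line joining $v$ and $w$ and is non-associate to both. Hence the number of collinear triples in $\cF_1 \times \cF_2 \times \cF_3$ is at most $|\cF_1| \cdot |\cF_3|$. Comparing this with the lower bound $\delta |\cF_1| \cdot |\cF_2|$ from the first condition in \cref{def:linearek} yields $|\cF_3| \geq \delta |\cF_2|$; a symmetric count using the third condition gives $|\cF_2| \geq \delta |\cF_1|$. Consequently every $\cF_i$ has size $\Omega(\delta^2) \cdot |\cF|$, and each $v \in \cF$ participates in $\Omega(\delta^2) \cdot |\cF|$ collinear SG-triples, so $\cF$ forms an $\lrsg{0}{\Omega(\delta^2)}$ configuration. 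Applying \cref{prop:linearsg} already produces a bound of $O(\delta^{-2})$, which falls short of the desired claim by a factor of $\delta \log(\delta^{-1})$.

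To sharpen the bound from $O(\delta^{-2})$ to $O(\delta^{-1} \log \delta^{-1})$, I would iterate a peeling argument in the spirit of \cite{S20}. At each stage we check whether a constant fraction of the smallest set $\cF_3$ lies in a subspace $W$ of dimension $O(\delta^{-1})$ spanned by a few ``heavy'' incidences; if so, absorb $W$ into the exceptional subspace and recurse on a strictly smaller configuration with the same robustness parameter $\delta$. Otherwise, the absence of such a concentration forces a quantitative balance among the three sets strong enough to upgrade the SG robustness parameter from $\Omega(\delta^2)$ to $\Omega(\delta)$, and \cref{prop:linearsg} then terminates the induction with an $O(\delta^{-1})$ contribution. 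Each peeling step reduces $|\cF_3|$ by a constant factor while adding $O(\delta^{-1})$ to the dimension budget, so the process halts in at most $O(\log \delta^{-1})$ rounds.

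The main obstacle is precisely this sharpening step: identifying at each stage either a small subspace that captures a constant fraction of some $\cF_i$ or a balance condition that upgrades the SG robustness from $\Omega(\delta^2)$ to $\Omega(\delta)$. This requires a careful refinement of the double-counting inequalities of the previous step, one that is sensitive to the asymmetric roles of $\cF_1, \cF_2, \cF_3$ in \cref{def:linearek} and that tracks how the incidence counts degrade through the iteration.
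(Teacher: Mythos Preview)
Your double-counting step is incorrect. The claim that for fixed $w \in \cF_3$ and $v \in \cF_1$ there is \emph{at most one} $u \in \cF_2$ with $w \in \Kspan{v,u}$ is false: every $u$ on the line $\Kspan{v,w}$ qualifies, and pairwise linear independence of $\cF$ forbids only associates, not collinearity. In fact no inequality of the form $|\cF_3| \geq \delta|\cF_2|$ can hold in general: take $\cG=\emptyset$, $\cF_3=\{w\}$, and put arbitrarily many points of $\cF_1,\cF_2$ on a single two-dimensional subspace containing $w$. Every pair $(v,u)$ then spans this whole line, so the configuration is a $(0,1)$-linear EK configuration with $|\cF_3|=1$. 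This kills both the inequality $|\cF_3|\ge\delta|\cF_2|$ and the intermediate $O(\delta^{-2})$ bound you derive from it. Your sharpening step is then only a plan whose ``main obstacle'' you explicitly leave open, so the proposal is not a proof.

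The paper takes a different and cleaner route that sidesteps size comparisons entirely. It keeps $\cG$ as the exceptional subspace $U$ in the robust-SG definition (no projection needed) and \emph{randomly subsamples} $\cF_1$ down to a subset $\cF'_1$ of size between $|\cF_2|$ and $2|\cF_2|$. A Chernoff bound shows that with positive probability every $v\in\cF_2\cup\cF_3$ retains at least $\tfrac{\delta}{2}|\cF_2|$ partners in $\cF'_1$, provided $|\cF_2|\ge c\,\delta^{-1}\log(\delta^{-1})$. Then $\cF'_1\cup\cF_2\cup\cF_3\cup\cG$ has all three parts of comparable size and is a $(r,\Omega(\delta))$-robust SG configuration, so \cref{prop:linearsg} gives a bound of $r+O(\delta^{-1})$ directly; the leftover forms in $\cF_1\setminus\cF'_1$ lie in $\Kspan{\cF_2\cup\cF_3\cup\cG}$ because each has at least one EK relation. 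The $\log(\delta^{-1})$ factor in the statement arises only from the threshold on $|\cF_2|$: when $|\cF_2|< c\,\delta^{-1}\log(\delta^{-1})$ one bounds $\dim\Kspan{\cF_2\cup\cF_3\cup\cG}$ trivially and again absorbs $\cF_1$ via its EK relations.
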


\begin{proof}
    Let $\cF_{1}, \cF_{2}, \cF_{3}, \cG$ be a partition of $\cF$ that satisfies the properties of the definition.
    Let $\delta' := \delta / 2$.
    The main step is to find a subset $\cF'_{1} \subset \cF_{1}$ such that $\abs{\cF_{2}} \leq \abs{\cF'_{1}} \leq 2 \abs{\cF_{2}}$ and such that $\cF'_{1} \cup \cF_{2} \cup \cF_{3} \cup \cG$ is itself a $\lrek{r}{\delta'/2}$ configuration, with partition $\cF'_{1}, \cF_{2}, \cF_{3}$.
    If such a set $\cF'_{1}$ exists, then the above configuration is actually an $\lrsg{r}{\delta'/8}$, since each form satisfies as span relation with at least $\delta'/4$ fraction of the total forms.
    Therefore by \cref{prop:linearsg} we can deduce that $\dim \Kspan{\cF'_{1} \cup \cF_{2} \cup \cF_{3} \cup \cG} \leq r + 96 \cdot \delta'^{-1}$.
    Every form $F \in \cF_{1} \setminus \cF'_{1}$ is contained in $\Kspan{\cF_{2} \cup \cF_{3} \cup \cG}$, since this form satisfies a linear condition with at least one form in $\cF_{2}$.
    Therefore we can also deduce $\dim \Kspan{\cF} \leq r + 96 \cdot \delta'^{-1} = r + 192 \cdot \delta^{-1}$, which will complete the proof.

    It suffices therefore to show that such a set $\cF'_{1}$ exists.
    We show that it exists as long as $\abs{\cF_{2}}$ is bigger than $c \cdot \delta^{-1} \log \br{\delta^{-1}}$ for some fixed constant $c$.
    If $\abs{\cF_{2}}$ is smaller than $c \cdot \delta^{-1} \log \br{\delta^{-1}}$ then so is $\abs{\cF_{3}}$.
    In this case we have $\dim \Kspan{\cF_{2} \cup \cF_{3} \cup \cG} \leq r + 2 c \cdot \delta^{-1} \log \br{\delta^{-1}}$, and therefore $\dim \Kspan{\cF} \leq 2 + c \cdot \delta^{-1} \log \br{\delta^{-1}}$, which will also complete the proof.
    We show $\cF'_{1}$ exists by a standard probabilistic method argument.

    Let $m_{i} := \abs{\cF_{i}}$ and let $p := m_{2} / m_{1}$.
    If $m_{3} \leq 2 m_{2}$ then $\cF'_{1} = \cF_{1}$ satisfies the required properties and we are done, so we can assume $p \leq 1/2$.
    Let $\cF'_{1} \subset \cF_{1}$ be a subset constructed by picking each element of $\cF_{1}$ independently with probability $p$.

    For each $v \in \cF_{2}$, let $\Fspan[v] \subset \cF_{1}$ be the set of forms $u$ such that $\abs{\Kspan{v, u} \cap \br{\cF_{3} \cup \cG}} \geq 1$.
    We have $\abs{\Fspan[v]} \geq \delta m_{1}$.
    The size of the subset $\cF'_{1} \cap \Fspan[v]$ is a random variable with expectation $\abs{\Fspan[v]} m_{2} / m_{1} \geq \delta m_{2}$, and by a Chernoff bound, the probability that it has size less than $\delta' m_{2}$ is at most $2^{-a \delta m_{2}}$ for a fixed constant $a$.
    Similarly, for $v \in \cF_{3}$ we can also define $\Fspan[v]$, and the probability that $\Fspan[v]$ has size less than $\delta ' m_{2}$ is bounded by $2^{-a \delta m_{2}}$.
    Finally, we note that $\abs{\cF'_{1}}$ is itself a random variable with expectation $m_{2}$, and the probability that $\abs{\cF'_{1}} \geq 2m_{2}$ is bounded by $2^{-bm_{2}}$.
    By a union bound, as long as $\br{m_{2} + m_{3}} 2^{-a \delta m_{2}} + 2^{-bm_{2}} < 1$, there is a set $\cF'_{1}$ such that $\abs{\cF'_{1}} \leq 2m_{2}$ and such that $\abs{\Fspan[v] \cap \cF'_{1}} \geq \delta' m_{2}$ for every $v \in \cF_{2} \cup \cF_{3}$.
    Since $m_{3} \leq m_{2}$, the above condition holds as long as $m_{2} \geq c \cdot \delta^{-1} \log \br{\delta^{-1}}$ for a big enough constant $c$ (this constant only depends on $a, b$).

    Fix a set $\cF'_{1}$ that satisfies the above conditions.
    If $\abs{\cF'_{1}} < m_{2}$, then we arbitrarily add elements to $\cF'_{1}$ to ensure that $\abs{\cF'_{1}} = m_{2}$, this does not break any of the conditions.
    These conditions exactly guarantee that $\cF'_{1} \cup \cF_{2} \cup \cF_{3} \cup \cG$ is a $\lrek{r}{\delta'/2}$ configuration.
    This completes the proof.
\end{proof}

We now define higher degree Edelstein–Kelly configurations, which are our main object of study.

\begin{definition}[High degree EK configurations]
\label{def:ek}
Let $U \subset S$ be a graded finitely generated vector space such that $R := S / \ideal{U}$ is a UFD, and let $z \in R_{1}$. 
Let $\cA, \cB, \cC \subset R$ be pairwise disjoint finite sets of irreducible forms of degree at most $d$.
We say that $\cA, \cB, \cC$ is a $(d, z, R)$-EK configuration if the following conditions hold: 
\begin{enumerate}
    \item $z \not\in \cA \cup \cB \cup \cC$, and the union $\{z\}\cup \cA \cup \cB \cup \cC$ consists of pairwise non-associate forms.
    \item For every $A \in \cA, B \in \cB$ we have
    $$ z \cdot \prod_{C \in \cC} C \in \radideal{A, B}.$$
    \item For every $A \in \cA, C \in \cC$ we have
    $$ z \cdot \prod_{B \in \cB} B \in \radideal{A, C}.$$
    \item For every $B \in \cB, C \in \cC$ we have
    $$ z \cdot \prod_{A \in \cA} A \in \radideal{B, C}.$$
\end{enumerate}
\end{definition}

In the case when the ideals occurring are prime, for example if $\ideal{A, B}$ is prime for some $A \in \cA, B \in \cB$, then the condition above implies that either $z \in \ideal{A, B}$ or $C \in \ideal{A, B}$ for some $C \in \cC$.
We then say that $C$ is the EK-image of $A, B$.
The image $C$ might not be unique, we arbitrarily fix a choice for every pair.

\section{Strong Algebras}\label{sec:strongalg}

In this section we recall the definitions and results that we will need about strong algebras.
Most of the definitions and results from this section are taken from \cite[Section 5]{OS24}, and we refer the reader to this paper for motivation and further discussions about these definitions.

\subsection{Strength}

Let $R=\oplus_{d\geq 0} R_d$ be a finitely generated graded $\bK$-algebra,  generated by $R_1$. 
In \cite{AH20} the notions of collapse and strength were defined for a polynomial ring. 
We will extend those definitions to finitely generated graded $\bK$-algebras and prove the necessary properties below.
Henceforth, we refer to a homogeneous element of $R$ as a \emph{form}, adopting the same notation for polynomial rings.

\begin{definition}[Collapse]\label{def:collapse} 
Given a non-zero form $F \in R_d$, we say that $F$ has a \emph{$k$-collapse} if there exist $k$ forms $G_1, \ldots, G_k$ such that $1 \leq \deg(G_i) < d$ and $F \in (G_1, \ldots, G_k)$.
\end{definition}

\begin{definition}[Strength]\label{def:strength} 
Given a non-zero form $F \in R_d$, the \emph{strength} of $F$, denoted by $s(F)$, is the least positive integer such that $F$ has a $(s(F)+1)$-collapse but it has no $s(F)$-collapse.
We say that $s(F) \geq t$ whenever $F$ does not have a $t$-collapse.
\end{definition}

\begin{remark}
By the definitions above, a form $x\in R_1$ does not have a $k$-collapse for any $k \in \bN$. 
Thus, we say that for any $x \in R_1$, $s(x) = \infty$. In particular, linear forms in the polynomial ring $S$ have infinite strength.
We will make the convention that $s(0) = -1$.
\end{remark}

\begin{definition}[Minimum collapse]\label{definition: minimum collapse}
Given a non-zero form $F \in R_d$ and $s \in \bN^*$ such that $s(F) = s-1$, a \emph{minimum collapse} of $F$ is any identity of the form $F = G_1 H_1 + \cdots + G_s H_s$, where $G_i, H_i$ are forms of degree in $[d-1]$.
\end{definition}

It is useful to define the min and max strength of a linear system of forms of the same degree.

\begin{definition}[Min and max strength]\label{definition: smin smax}
Given a set of forms $F_1, \ldots, F_r \in R_d$,  define 
$\smin{F_1, \ldots, F_r}$ as the \emph{minimum} strength of a \emph{non-zero} form in $\Kspan{F_1, \ldots, F_r}$ and $\smax{F_1, \ldots, F_r}$ as the \emph{maximum} strength of a form in $\Kspan{F_1, \ldots, F_r}$. 

In particular, given any non-zero finite dimensional vector space $V \subset R_d$, define $\smin{V}$ ($\smax{V}$) as the minimum (maximum) strength of any non-zero form in $V$. If $V=(0)$, then there are no non-zero forms in $V$. In this case, by convention we define $\smin{(0)}=\smax{(0)}=\infty$.
We will say that a vector space $V$ is $k$-strong if $\smin{V}\geq k$. Note that the zero vector space is infinitely strong.
\end{definition}

\subsection{Strong Ananyan-Hochster Vector Spaces}\label{section: AH}

Let $R=\oplus_{d\geq 0} R_d$ be a finitely generated graded $\bK$-algebra, generated by $R_1$. 
Given a graded $\bK$-vector space $V = \bigoplus_{i=1}^d V_i \subset R$, where $\delta_i := \dim V_i$, we denote its dimension sequence by $\delta := (\delta_1, \ldots, \delta_d)$.

\begin{definition}[Strong Ananyan-Hochster vector spaces]\label{def:strong-AH-algebra} 
Let $R=\oplus_{d\geq 0} R_d$ be a finitely generated graded $\bK$-algebra, generated by $R_1$. 
For any function $B=(B_1,\cdots,B_d):\bN^d\rightarrow \bN^d$, we say that a non-zero graded vector subspace $V=\oplus_{i=1}^dV_i\subset R$, with dimension sequence $\delta$, is a $B$-strong AH vector space if $V_i$ is $B_i(\delta)$-strong for all $i$, i.e. $\smin{V_i}\geq B_i(\delta)$.  
The subalgebra $\bK[V]\subset R$ generated by a $B$-strong AH vector space $V$ is called a $B$-strong AH algebra.
\end{definition}

Note that if $V=(0)$, then $V$ is $B$-strong for any function $B$, since $\smin{(0)}=\infty$.
The following result is a corollary of \cite[Theorem A]{AH20}, and a proof can be found in \cite[Corollary 5.9]{OS24}.
In the following lemma and the rest of this article, the function $A(\eta, d): \bN^{2} \to \bN$ is the function defined in \cite[Theorem A]{AH20}.

\begin{corollary}\label{corollary: strong sequence R_eta}
Let $V=\oplus_{i=1}^d V_i\subset S$ be a $B$-strong AH vector space for some $B:\bN^d\rightarrow \bN^d$.
Suppose $B_i(\delta)\geq A(\eta,i)+3(\sum_i\delta_i-1)$ for some $\eta\in \bN$. Then any sequence of $\bK$-linearly independent forms in $V$ is an $\mathcal{R}_\eta$-sequence. If $\eta\geq 3$, then $S/(V)$ is a Cohen-Macaulay, unique factorization domain.
\end{corollary}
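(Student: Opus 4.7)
The plan is to deduce the statement from Ananyan--Hochster's Theorem A of~\cite{AH20}, combined with standard properties of strength under graded quotients. First I would reduce to the case of a homogeneous basis of $V$: since $V = \oplus_{i=1}^d V_i$ is graded, any $\bK$-linearly independent sequence of forms in $V$ can be extended to a basis of its span, and after a unimodular change of basis that basis can be taken to be homogeneous. The property of being an $\mathcal{R}_\eta$-sequence is preserved under such changes of basis and passes to initial segments, so it suffices to prove that a homogeneous basis $G_1,\ldots,G_m$ of $V$, ordered by nondecreasing degree, is an $\mathcal{R}_\eta$-sequence.

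For each $G_j \in V_{i_j}$, the $B$-strong hypothesis gives $s(G_j) \geq B_{i_j}(\delta) \geq A(\eta,i_j) + 3(m-1)$, where $m = \sum_i \delta_i$. The threshold $A(\eta,\deg G_j)$ is exactly the Ananyan--Hochster quantity guaranteeing that a single form of that strength and degree behaves as an $\mathcal{R}_\eta$-element, while the slack $3(m-1)$ is there to absorb the strength drops incurred when successively quotienting by $G_1,\ldots,G_{j-1}$. The key technical input is that each quotient by one form of the sequence decreases the strength of the remaining forms by at most a bounded universal amount (captured by the constant $3$ in the formulation), so after at most $j-1 \leq m-1$ such quotients, the image of $G_j$ in $S/(G_1,\ldots,G_{j-1})$ still has strength at least $A(\eta,\deg G_j)$. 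Applying AH's Theorem A inductively then shows that $G_1,\ldots,G_m$ is an $\mathcal{R}_\eta$-sequence. This strength-drop bookkeeping is the main obstacle, and is essentially the content of the Ananyan--Hochster theorem itself; once it is in place, the reduction is formal.

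Finally, when $\eta \geq 3$, the $\mathcal{R}_3$-condition carries the remaining structural consequences. Every $\mathcal{R}_\eta$-sequence with $\eta \geq 1$ is in particular a regular sequence, so $S/(V) = S/(G_1,\ldots,G_m)$ is Cohen--Macaulay. To promote this to the UFD property, one invokes the fact established in~\cite{AH20} that a quotient of $S$ by a sufficiently strong regular sequence preserves being a UFD: the condition $\eta \geq 3$ ensures Serre's normality condition $R_1$ on $S/(V)$, and a Bertini-type argument on strong forms then upgrades normality to the UFD property.
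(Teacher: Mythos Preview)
The paper does not provide a self-contained proof; it records that the statement is a corollary of \cite[Theorem~A]{AH20} and defers to \cite[Corollary~5.9]{OS24} for the argument. Your high-level plan --- reduce to a homogeneous basis of $V$ and invoke Ananyan--Hochster --- is consistent with this.

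There is, however, a gap in your proposed mechanism. You want to quotient by $G_1, \ldots, G_{j-1}$ one at a time, asserting that each quotient decreases the strength of the remaining forms by at most $3$, and then apply Theorem~A to the image of $G_j$ inside $S/(G_1, \ldots, G_{j-1})$. Two problems arise. First, the strength-drop bound of $3$ per quotient is asserted without justification; the elementary estimate (a $k$-collapse of $F$ modulo one lower-degree form lifts to a $(k{+}1)$-collapse in $S$) gives a drop of at most $1$, not $3$, so this is not a convincing explanation of the slack $3(\sum_i \delta_i - 1)$. You then say this bookkeeping ``is essentially the content of the Ananyan--Hochster theorem itself,'' but that makes the argument circular: you are invoking Theorem~A to justify the very strength estimate you need in order to apply Theorem~A inductively. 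Second, and more structurally, \cite[Theorem~A]{AH20} is formulated for polynomial rings, and once some $G_\ell$ with $\ell < j$ has degree at least $2$, the ring $S/(G_1, \ldots, G_{j-1})$ is no longer a polynomial ring, so Theorem~A cannot be re-applied there as stated. The intended argument applies the Ananyan--Hochster result to the full sequence in $S$ at once; the additive slack is absorbed into the hypothesis of that single application rather than tracked step by step.

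Your outline for the Cohen--Macaulay and UFD consequences is essentially correct: an $\mathcal{R}_\eta$-sequence with $\eta \geq 1$ is in particular regular, so $S/(V)$ is a complete intersection and hence Cohen--Macaulay; for $\eta \geq 3$ the quotient is regular in codimension $\leq 3$, and Grothendieck's factoriality theorem for complete intersections then yields the UFD property directly --- no separate Bertini argument is needed.
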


\subsection{Lifted strength}

\begin{definition}[Lifted strength] 
Let $U\subset S$ be a graded vector space and $R=S/(U)$. 
Let $F\in R_d$ be a non-zero form. 
We define the lifted strength of $F$ with respect to $U$ as $$\lsmin{U,F}:=\min\{\smin{U_d+\Kspan{\widetilde{F}}}\}$$
where $\widetilde{F}$ varies over all forms in $S_d$ such that the image of $\widetilde{F}$ in $R$ is $F$. 
Given a set of forms $F_1,\cdots,F_m\in R_d$, we define $$\lsmin{U,F_1,\cdots,F_m}=\min\{\smin{U_d+\Kspan{\widetilde{F}_1,\cdots,\widetilde{F}_m}}\},$$ where $\widetilde{F}_i$ varies over all forms in $S_d$ such that the image of $\widetilde{F}_i$ in $R$ is $F_i$. 
Given a non-zero vector space $V\subset R_d$, we define $$\lsmin{U,V}=\min\{\lsmin{U_d,F_1,\cdots,F_m}\},$$ where $F_1,\cdots,F_m$ vary over all possible bases of $V$. 
We say that $V\subset R_d$ is $k$-lifted strong with respect to $U$ if $\lsmin{U,V}\geq k$. 
For simplicity, we omit $U$ from the notation and write $\lsmin{V}$ when $U$ is clear from the context.

Suppose that $U\subseteq S_{\leq d}$ is of dimension sequence $\delta_U$. Let $V=\oplus_{i=1}^dV_i\subset R$ be a graded vector space of dimension sequence $\delta_V$.  
For any function, $B:\bN^d\rightarrow \bN^d$ we will say that $V$ is $B$-lifted strong with respect to $U$, if $V_i$ is $B_i(\delta_U+\delta_V)$-lifted strong, i.e. $\lsmin{U,V_i}\geq B_i(\delta_U+\delta_V)$ for all $i\in [d]$.  
In other words, $V$ is $B$-lifted strong with respect to $U$, if the vector space $U+\Kspan{\widetilde{F}_1,\cdots,\widetilde{F}_m}$ is $B$-strong in $S$, for any homogeneous basis $F_1,\cdots,F_m\in R$ of $V$ and any set of homogeneous lifts $\widetilde{F}_1,\cdots,\widetilde{F}_m\in S$.
\end{definition}

\subsection{Strengthening and Robustness}

For any $\mu\in \bN^d$, we define the translation function $t_\mu:\bN^d\rightarrow\bN^d$ as $t_\mu=(t_{\mu,1},\cdots, t_{\mu,d})$ where the $i$-th component is defined by $t_{\mu,i}(\delta)=\delta_i+\mu_i$. In other words, for all $i\in [d]$ we add $\mu_i$ to the $i$-th component of $\delta$. For any $n\in \bN$, we let $t_n:=t_{(n,\cdots,n)}$.

The following lemma is proved in \cite[Lemma 5.15]{OS24}.

\begin{lemma}[Strengthening of Algebras]\label{proposition: constructing AH algebras} For any $d\in \bN$ and a function $B:\bN^d\rightarrow N^d$,
   there exist functions  $C_B:\bN^d\rightarrow\bN^d$ and $h_B:\bN^d\rightarrow\bN^d$, depending on $B$, such that the following holds:
    
    Given a graded vector space  $U=\oplus_{i=1}^dU_i\subset S$ with dimension sequence $\delta\in \bN^d$, there exists a $B$-strong AH vector space $V=\oplus_{i=1}^d V_i$ such that 
    \begin{enumerate}
        \item $\bK[U]\subset \bK[V]$,
        \item for all $i\in [d]$, we have $\dim(V_i)\leq C_{B,i}(\delta)$, where $C_{B,i}$ denotes the $i$-th component of $C_B=(C_{B,1},\cdots, C_{B,d}):\bN^d\rightarrow \bN^d$.
    \end{enumerate}
    
Furthermore, suppose   $H=\oplus_{i=1}^dH_i\subset U$ is a graded subspace such that $\smin{H_i}\geq h_{B,i}(\delta)$ for all $i\in [d]$. 
Then there exists a $B$-strong AH vector space $V$  satisfying (1) and (2) above such that $H\subset V$.
\end{lemma}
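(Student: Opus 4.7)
The plan is to prove the lemma by induction on the maximum degree $d$, iteratively replacing weak forms by the generators of their minimum collapses; each such replacement preserves the generated $\bK$-algebra while driving up strength. The base case $d = 1$ is trivial because linear forms have infinite strength, so $V := U$ already works with $C_{B,1}(\delta) := \delta_1$, and no decomposition is ever needed.

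For the inductive step, I would proceed in two stages. In Stage 1 (strengthening the top degree) I initialize $V := U$ and, while there is a form $F \in V_d$ with $s(F) < T_d$ for an a priori threshold $T_d$, I take a minimum collapse $F = \sum_{j=1}^s G_j H_j$ with $s < T_d$ and $1 \leq \deg G_j, \deg H_j < d$, remove $F$ from a chosen basis of $V_d$, and append all the $G_j, H_j$ to $V_{<d}$. Since $F = \sum_j G_j H_j$, the algebra $\bK[V]$ is unchanged, and $\dim V_d$ strictly decreases at each iteration, so the loop halts after at most $\dim U_d$ steps with $\smin{V_d} \geq T_d$ and $\dim V_{<d}$ grown by at most $2(T_d - 1)\dim U_d$. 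In Stage 2 (recursion) I apply the induction hypothesis to $V_{<d}$ with a modified target $B'$ chosen so that $B'_i(\delta')$ dominates $B_i(\delta_V^{\mathrm{final}})$ for every dimension sequence $\delta'$ compatible with the a priori bound coming from Stage 1. This returns $W \supset V_{<d}$ that is $B'$-strong with $\dim W_i$ bounded by $C_{B',i}$. Setting $V := V_d \cup W$ and taking $T_d \geq B_d(\delta_V^{\mathrm{final}})$ using the same a priori bound yields a $B$-strong AH vector space with $\bK[U] \subset \bK[V]$, and the function $C_B$ is extracted by propagating the explicit dimension bounds through both stages.

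For the furthermore part, I would define $h_{B,i}(\delta)$ to be an upper bound on the largest strength threshold ever invoked at degree $i$ throughout the whole recursive procedure (the various $T_d$'s across all levels of the recursion). If $\smin{H_i} \geq h_{B,i}(\delta)$ for every $i$, then no form in a chosen basis of $H$ is ever selected for decomposition, so extending such a basis to a basis of $V$ at each stage produces the required $V$ with $H \subset V$ in addition to (1) and (2). The main obstacle is the self-referential nature of $B$-strongness: the required strength $B_i(\delta_V)$ depends on the final dimension sequence, which itself depends on how much strengthening we perform. The delicate step is therefore the mutual recursion that defines $T_d$ and $B'$: I need a priori upper bounds on $\delta_V^{\mathrm{final}}$ that are simultaneously used to define the thresholds and are verified after the algorithm terminates. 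Once this bookkeeping is pinned down, the strength and algebra-containment properties are immediate from the construction of the minimum collapses.
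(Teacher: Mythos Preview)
The paper does not prove this lemma; it is imported verbatim from \cite[Lemma~5.15]{OS24}, so there is no in-paper argument to compare against directly.

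Your overall strategy---replace a weak top-degree form by the generators of a minimum collapse and induct on $d$---is the standard one used in \cite{AH20,OS24}, and your handling of the ``furthermore'' clause (take $h_B$ to dominate every threshold ever invoked, so that no basis element of $H$ is selected for decomposition) is correct.

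There is, however, a genuine gap in the two-stage scheme as you describe it, and it is not just bookkeeping. With $T_d$ fixed before Stage~1, the lower graded pieces grow by at most $2T_d\delta_d$, and after the recursive Stage~2 you need
\[
T_d \;\ge\; B_d\bigl(C_{B'}(\delta_{<d} + 2T_d\delta_d),\,\delta_d\bigr).
\]
For arbitrary $B$ this has no solution in $T_d$: already for $d=2$, $B_2(x,y)=2x$, the base case gives $C_{B',1}(\mu_1)=\mu_1$ and the constraint reads $T_2 \ge 2\delta_1 + 4T_2\delta_2$, which is impossible once $\delta_2\ge 1$. So the ``mutual recursion'' you plan to pin down does not close up in this form. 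The resolution in \cite{OS24} is to avoid a pre-chosen threshold: one repeatedly decomposes any $F\in V_i$ with $s(F)<B_i$ of the \emph{current} dimension vector. Termination holds because $(\dim V_d,\dots,\dim V_2)$ strictly decreases lexicographically; each decomposition adds at most $2(s(F)+1)\le 2B_i(\dim V^{(t)})$ forms to lower degrees, and since the number of degree-$i$ decompositions is bounded by the total number of degree-$i$ forms ever present (controlled inductively by decompositions at degrees $>i$), a top-down recursion on these counts produces a well-defined $C_B(\delta)$.
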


The following corollary is from \cite[Corollary 5.16]{OS24}.

\begin{corollary}[Robustness of strong algebras]\label{corollary: robust strong algebra} Let $B,G:\bN^d\rightarrow\bN^d$ and $\mu\in\bN^d$. Suppose that $B_i(\delta)\geq h_{G,i}(\delta+\mu)$ for all $\delta\in \bN^d$ and $i\in [d]$, where $h_G:\bN^d\rightarrow \bN^d$ is the function defined in \cref{proposition: constructing AH algebras}. Let $U\subset S$ be a $B$-strong AH vector space and $W\subset S$ is a graded vector space with dimension sequences $\delta$ and $\mu$ respectively. Then there exists a $G$-strong AH vector space $V$ such that 
\begin{enumerate}
    \item $\bK[U+W]\subset\bK[V]$,
    \item $U\subset V$,
    \item for all $i\in [d]$, $\dim(V_i)\leq C_{G,i}(\delta+\mu)$, where $C_G:\bN^d\rightarrow \bN^d$ is the function defined in \cref{proposition: constructing AH algebras}.
\end{enumerate}
\end{corollary}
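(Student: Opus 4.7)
The plan is to derive this corollary as a direct application of the ``furthermore'' clause of \cref{proposition: constructing AH algebras}, taking as our starting data the vector space $U + W$ together with the distinguished subspace $H := U$ that we wish to preserve.

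First I would set $U' := U + W$ and observe that its dimension sequence $\delta'$ satisfies $\delta'_i \leq \delta_i + \mu_i$ for all $i$, since dimensions are subadditive in the sum. Then I want to check the hypothesis of the ``furthermore'' part applied with target function $G$ and distinguished subspace $H = U \subset U'$. Concretely, one needs $\smin{U_i} \geq h_{G,i}(\delta')$ for each $i \in [d]$. Since $U$ is by assumption $B$-strong with dimension sequence $\delta$, we have $\smin{U_i} \geq B_i(\delta)$; combining this with the standing assumption $B_i(\delta) \geq h_{G,i}(\delta + \mu)$ and the (ascending) monotonicity of $h_{G,i}$ together with $\delta' \leq \delta + \mu$ componentwise, we obtain $\smin{U_i} \geq h_{G,i}(\delta + \mu) \geq h_{G,i}(\delta')$, as required.

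Having verified the hypothesis, \cref{proposition: constructing AH algebras} produces a $G$-strong AH vector space $V$ with $\bK[U'] \subset \bK[V]$, $U \subset V$, and $\dim(V_i) \leq C_{G,i}(\delta')$ for each $i$. Rewriting $\bK[U'] = \bK[U+W]$ gives condition (1), the containment $U \subset V$ gives condition (2), and the bound $C_{G,i}(\delta') \leq C_{G,i}(\delta + \mu)$ (again by monotonicity of $C_{G,i}$ and $\delta' \leq \delta + \mu$) gives condition (3).

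There is no real obstacle here; the only thing to be mindful of is that the dimension sequence of $U + W$ may strictly drop below $\delta + \mu$ when $U$ and $W$ overlap, so one must invoke ascending monotonicity of $h_G$ and $C_G$ to convert bounds in terms of $\delta'$ into the stated bounds in terms of $\delta + \mu$. This monotonicity is part of how these functions are constructed in \cite{OS24}, so the argument goes through cleanly and the corollary is essentially a monotonicity-enhanced repackaging of the Strengthening lemma.
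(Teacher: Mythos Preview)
Your proof is correct and is exactly the intended derivation: apply the ``furthermore'' clause of \cref{proposition: constructing AH algebras} to the space $U+W$ with distinguished subspace $H=U$, then use that $h_G$ and $C_G$ are ascending to pass from the actual dimension sequence $\delta'$ of $U+W$ to the upper bound $\delta+\mu$. The paper itself does not give a proof here, simply citing \cite[Corollary 5.16]{OS24}, and your argument is the natural reconstruction of that proof.
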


The following corollary corresponds to \cite[Corollary 5.17]{OS24}.

\begin{corollary}\label{corollary: practical robustness}
 Let $B:\bN^d\rightarrow \bN^d$. Let $U\subset S$ be a graded vector space with dimension sequence $\delta_U\in \bN^d$ and let $R=S/(U)$. Let $V\subset R$ is a graded vector space with dimension sequence $\delta_V\in \bN^d$. Suppose $V$ is  $h_{2B}\circ t_k$-lifted strong with respect to $U$. Let $P_1,\cdots,P_k\in R_{\leq d}$ be homogeneous elements. Then there exists a graded vector space $V'\subset R_{\leq d}$ such that:

\begin{enumerate}
    \item $V'$ is $B$-lifted strong with respect to $U$.
    \item $P_1,\cdots,P_k\in \bK[V']$.
    \item $V\subset V'$.
    \item for all $i\in [d]$, we have $\dim(V'_i)\leq C_{2B,i}(t_k(\delta_U+\delta_V))-\delta_{U,i}$. In particular, $\dim(V')\leq C(B,\delta_U,\delta_V,k):=\sum(C_{2B,i}(t_k(\delta_U+\delta_V))-\delta_{U,i})$.
\end{enumerate}
\end{corollary}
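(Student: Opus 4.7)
The plan is to lift everything to $S$, apply \cref{corollary: robust strong algebra} there, and then push the resulting strong AH vector space back down to $R$ by taking its image modulo $U$.

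Concretely, first I would unfold the hypothesis that $V$ is $h_{2B} \circ t_k$-lifted strong: this produces a homogeneous basis $F_1, \ldots, F_m$ of $V$ and lifts $\widetilde{F}_1, \ldots, \widetilde{F}_m \in S$ so that $\widetilde{U} := U + \Kspan{\widetilde{F}_1, \ldots, \widetilde{F}_m}$ is an $h_{2B} \circ t_k$-strong graded vector space in $S$ of dimension sequence $\delta_U + \delta_V$. In parallel, pick any homogeneous lifts $\widetilde{P}_1, \ldots, \widetilde{P}_k \in S$ of $P_1, \ldots, P_k$ and set $W := \Kspan{\widetilde{P}_1, \ldots, \widetilde{P}_k}$; this is a graded vector space of dimension sequence at most $(k, \ldots, k)$.

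Next, I would invoke \cref{corollary: robust strong algebra} with ambient space $\widetilde{U}$, added space $W$ (so $\mu = (k, \ldots, k)$), ambient strength $h_{2B} \circ t_k$ and target strength $G = 2B$. The hypothesis $(h_{2B} \circ t_k)_i(\delta) \geq h_{2B, i}(\delta + \mu)$ holds with equality by the definition of $t_k$, so the corollary produces a $2B$-strong AH vector space $\widetilde{V}' \subset S$ with $\widetilde{U} \subset \widetilde{V}'$, $\bK[\widetilde{U} + W] \subset \bK[\widetilde{V}']$, and $\dim \widetilde{V}'_i \leq C_{2B, i}(t_k(\delta_U + \delta_V))$. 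I would then define $V' \subset R$ as the image of $\widetilde{V}'$ modulo $U$; since $U \subset \widetilde{V}'$, one has $V'_i \cong \widetilde{V}'_i / U_i$, which directly gives conclusions (2), (3), and (4): the containment $V \subset V'$ comes from $\widetilde{F}_j \in \widetilde{V}'$, the inclusion $P_j \in \bK[V']$ comes from $\widetilde{P}_j \in \bK[\widetilde{V}']$ together with surjectivity of $\bK[\widetilde{V}'] \twoheadrightarrow \bK[V']$, and the dimension bound follows by subtracting $\delta_{U, i}$.

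The single delicate point, and the step I expect to occupy most of the argument, is verifying condition (1). By definition $\lsmin{U, V'_i}$ is a minimum over all bases of $V'_i$ and all lifts to $S_i$, so one must check it is insensitive to this choice. The key observation is that any lift $\widetilde{F}'_j \in S_i$ of a basis element of $V'_i$ differs from a chosen lift in $\widetilde{V}'_i$ by an element of $U_i \subset \widetilde{V}'_i$, and therefore itself lies in $\widetilde{V}'_i$. Consequently $U_i + \Kspan{\widetilde{F}'_1, \ldots, \widetilde{F}'_{m'}} = \widetilde{V}'_i$ regardless of the lift, which yields $\lsmin{U, V'_i} = \smin{\widetilde{V}'_i} \geq 2B_i(\delta_U + \delta_{V'}) \geq B_i(\delta_U + \delta_{V'})$. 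Thus $V'$ is $B$-lifted strong; the factor of $2$ built into the hypothesis is exactly the slack that makes this final inequality automatic.
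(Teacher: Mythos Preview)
Your proposal is correct and follows exactly the natural line of argument: lift $V$ and the $P_j$ to $S$, apply \cref{corollary: robust strong algebra} with target strength $2B$ and slack $\mu=(k,\dots,k)$, and push the resulting $2B$-strong space $\widetilde{V}'$ back to $R$. The paper itself does not give a proof of this corollary; it simply cites \cite[Corollary 5.17]{OS24}, whose proof is precisely the lift--strengthen--descend argument you outline, so there is nothing to compare.
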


\begin{definition}[Ananyan-Hochster spaces]\label{def: AH-process}
    In the situation of \cref{corollary: robust strong algebra}, we define $AH(U,W)$ to be any graded vector space $V$ provided by \cref{corollary: robust strong algebra}.
    Similarly, in the situation of \cref{corollary: practical robustness}, we define $AH_R(V,P_1,\cdots,P_k)$ to be any vector space $V'$ provided by \cref{corollary: practical robustness}.
\end{definition}

\begin{remark}
    Note that, for given $V,P_1,\cdots,P_k$, the vector space $AH_R(V,P_1,\cdots,P_k)$ is not necessarily unique. 
    As stated in our definition, we use the notation $AH_R(V,P_1,\cdots,P_k)$ to denote any vector space satisfies the properties in \cref{corollary: practical robustness}, whose existence is guaranteed. 
    We will only use these properties of these spaces and in all our arguments we work with any fixed choice of such a vector space $AH_R(V,P_1,\cdots,P_k)$. 
\end{remark}

\paragraph{An iterative AH-process.} In order to construct strong vector spaces of uniformly bounded dimension, we will iteratively apply the AH-construction from \cref{def: AH-process}. 
We make a convenient definition for such iterative processes.

\begin{definition}[$(k,t)$-process]
    Let $U \subseteq S$ be a graded finitely generated vector space such that $R := S / \ideal{U}$ is a UFD, and $k, t \geq 1$ be integers.
    Let $V \subseteq R$ be a vector space and let $\cF \subseteq R$.
    A $(k, t)$ process on $V$ is defined to be a process that starts with the vector space $V$, and performs at most $t$ rounds.
    In each round, at most $k$ forms $F_{1}, \dots, F_{k}$ of degree at most $d$ are picked from the set $\cF$, and $V$ is updated to $AH_R(V, F_{1}, \dots, F_{k})$.
    In each iteration, the picked forms are allowed to be chosen based on the new $V$.
    Note that we are dropping the dependence on $d$ from the notation for convenience, as $d$ will always be clear from context.
\end{definition}

\paragraph{Strength and dimension bound functions.} 
We introduce some auxiliary functions that will provide uniform bounds for $(k,t)$-processes (see  \cref{lem: iterated lifted strength}). 

\emph{Strength bound function.} Given a function $B: \bN^d\rightarrow \bN^d$, and positive integers $k, t$, we define a function $H(B, k, t): \bN^d\rightarrow \bN^d$ recursively as follows.
Let $H(B, k, 0) := B$, and \[H(B, k, t) := h_{2 H(B, k, t-1)} \circ t_{k + 1}, \] where $h$ is as defined in \cref{proposition: constructing AH algebras}.
This function $H(B,k,t)$ captures the strength needed for a vector space $V$, so that after applying a $(k,t)$-process to $V$, the resulting vector space $V'$ is still $B$-lifted strong and this $(k,t)$-process preserves $V$, i.e. $V\subseteq V'$.

\emph{Dimension bound function.} Fix $\delta \in \bN^{d}$, and $B,k,t$ as above. We define a function $D(B, k, t, \delta) : \bN^{d} \to \bN^{d}$ recursively as follows.
Let $D(B, k, 0, \delta)$ be the identity function.
To define $D(B, k, t, \delta)(\mu)$, assuming  $D(B, k, t-1, \delta)$ is already defined, we first define \[\eta := \max_{\substack{x\in \bN^d\\ \norm{x}_{1} = \norm{\mu}_{1}}} \norm{D(B, k, t-1, \delta)(x)}_{1},\]
where $\norm{x}_1=x_1+\cdots+x_d$ for $x=(x_1,\cdots,x_d)\in \bN^d$.
We then define the $i^{th}$ coordinate function $D_{i}(B, k, t, \delta)(\mu)$ as
$$ D_{i}(B, k, t, \delta)(\mu) = \max_{x, \norm{x}_{1} = \eta} C_{2 H(B, k, t), i}(t_{k+1} (x + \delta)),$$
where again $C$ is the function defined in \cref{proposition: constructing AH algebras}.
This function $D(B,k,t,\delta)$ captures the dimension upper bound for the result of a $(k,t)$-process applied on a vector space $V$. 

We note these bounds and basic properties of $(k,t)$ processes below. 
\begin{lemma}
    \label{lem: iterated lifted strength}
     Let $U \subseteq S$ be a graded finitely generated vector space such that $R := S / \ideal{U}$ is a UFD, and $k, t, k', t' \geq 1$ be integers.
     \begin{enumerate}
        \item The composition of a $(k, t)$ process and a $(k', t')$ process is a $(\max\br{k, k'}, r + r')$ process.
        \item If $V$ is $H(B, k, t)$-lifted strong, and if $V'$ is the result of a $(k, t)$ process on $V$, then $V \subset V'$ and $V'$ is $B$-lifted strong.
        Moreover, each intermediate vector space that appears as part of the process is $h_{2B} \circ t_{1}$-lifted strong.
        \item If $V'$ is the result of a $(k, t)$ process on $V$, then the dimension of $V'$ is bounded by $D(B, k, t, \delta_{U})(\delta_{V})$, where $\delta_{U}, \delta_{V}$ are the dimension vectors of $V, U$ respectively.
\end{enumerate}
\end{lemma}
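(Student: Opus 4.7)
The plan is to prove each of the three parts in turn, with Parts 1 and 3 being relatively routine given that Part 2 is the main technical content.

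\textbf{Part 1} is an immediate unpacking of the definition: a $(k,t)$-process followed by a $(k',t')$-process performs at most $t+t'$ total rounds, and in each round selects at most $\max(k,k')$ forms from $\cF$, yielding a $(\max(k,k'),\,t+t')$-process. No algebraic input is needed.

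For \textbf{Part 2}, I would induct on $t$. The base case $t=0$ is trivial: $H(B,k,0)=B$, no rounds occur, so $V'=V$ is $B$-lifted strong and there are no intermediate spaces to check. For the inductive step, suppose $V$ is $H(B,k,t)$-lifted strong with $t\geq 1$. By definition $H(B,k,t)=h_{2H(B,k,t-1)}\circ t_{k+1}$, so in particular $V$ is $h_{2H(B,k,t-1)}\circ t_k$-lifted strong. Letting $F_1,\dots,F_k$ denote the (at most $k$) forms chosen in the first round, I would apply \cref{corollary: practical robustness} with $H(B,k,t-1)$ playing the role of $B$: this guarantees that $V_1=AH_R(V,F_1,\dots,F_k)$ is $H(B,k,t-1)$-lifted strong and contains $V$. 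The remaining $t-1$ rounds then form a $(k,t-1)$-process applied to $V_1$, so by the inductive hypothesis $V\subseteq V_1\subseteq V'$ with $V'$ being $B$-lifted strong. The intermediate-strength claim is then verified via a short secondary induction: assuming monotonicity of $h$ and $t_n$, $H(B,k,s)$ dominates $h_{2B}\circ t_1$ for every $s\geq 1$ (for $s=1$ this reduces to $h_{2B}\circ t_{k+1}\geq h_{2B}\circ t_1$, and the step $s-1\to s$ uses monotonicity of $h$ in its subscript).

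For \textbf{Part 3}, I would again induct on $t$, decomposing the $(k,t)$-process into a $(k,t-1)$-process followed by a single final round. The inductive hypothesis bounds the intermediate dimension sequence (after $t-1$ rounds) by $D(B,k,t-1,\delta_U)(\delta_V)$; in particular its $\ell_1$ norm is at most $\eta$. Applying the dimension bound of \cref{corollary: practical robustness} (with the appropriate target strength function and the $t_{k+1}$ slack) to the single final round yields exactly the coordinate-wise bound $C_{2H(B,k,t),i}\bigl(t_{k+1}(x+\delta_U)\bigr)$ for the intermediate dimension sequence $x$; taking the maximum over all feasible $x$ with $\|x\|_1=\eta$ reproduces the recursive definition of $D(B,k,t,\delta_U)(\delta_V)$.

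The \textbf{main obstacle} is Part 2: the recursive definition of $H(B,k,t)$ was set up with precisely the right amount of slack (via $t_{k+1}$ rather than $t_k$) so that each invocation of \cref{corollary: practical robustness} preserves both containment $V\subseteq V'$ and enough strength to continue the induction, and the bookkeeping of matching the strength after $s$ rounds to the recursive layer $H(B,k,t-s)$ must be done carefully. Once this matching is established, Parts 1 and 3 reduce to routine dimension tracking.
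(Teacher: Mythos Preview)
Your proposal is correct and matches the paper's approach: the paper's own proof is the single line ``The first property follows by definition. Properties (2) and (3) follow by induction,'' and your elaboration via induction on $t$ using \cref{corollary: practical robustness} at each step is exactly the intended unpacking. One tiny remark on Part~3: \cref{corollary: practical robustness} applied to the final round literally gives the bound $C_{2B,i}(t_k(x+\delta_U))-\delta_{U,i}$ (target strength $B=H(B,k,0)$, translation $t_k$), so the recursive definition of $D$ using $C_{2H(B,k,t),i}$ and $t_{k+1}$ is deliberately looser than necessary; your phrase ``yields exactly'' should really be ``is dominated by,'' relying on the monotonicity of $C$ in its subscript and of $t_n$ in $n$.
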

\begin{proof}
    The first property follows by definition.
    Properties $(2)$ and $(3)$ follow by induction.
\end{proof}

\subsection{Absolute irreducibility with respect to strong vector spaces}
\label{subsec:absirredstrong}

The following definition is from \cite{GOS24}.

\begin{definition}
    \label{def:absred}
    Let $B:\bN^d\rightarrow \bN^d$.
    Suppose $B_i(\delta)\geq A(\eta,i)+3(\sum_i\delta_i-1)$ for some $\eta\in \bN$.
    Suppose $R = S / \ideal{U}$.
    Suppose $V \subset R$ is a graded vector space that is $h_{2B} \circ t_{1}$-lifted strong with respect to $U$.
    Suppose $P \in R$ is a form.
    Let $V'$ be the vector space obtained by applying \cref{corollary: practical robustness} to $V$ and $P$.
    Let $y_{1}, \dots, y_{a}$ be a basis of homogeneous forms of $V$, and $y_{a+1}, \dots, y_{b}$ extend this to a basis of $V'$.
    We say $P$ is absolutely reducible over $V$ if $P$ is absolutely reducible as a polynomial in $\bK\br{y_{1}, \dots, y_{a}}\bs{y_{a+1}, \dots, y_{b}}$.
\end{definition}

Absolute reducibility and irreducibility with respect to vector spaces was defined in \cite{GOS24} to allow \cref{lem:primebound} to be applied in more general settings.
In the same setting, we can also apply \cref{cor:absredspan}.
The functions $\primebound{d}, \irredprimebound{d} : \bN \to \bN$ referred to in the following lemma is the same functions whose existence is guaranteed by \cref{lem:primebound}, \cref{cor:absredspan} respectively.

\begin{corollary}[Corollary of \cref{lem:primebound}, \cref{cor:absredspan}]
    \label{cor:strongprimebound}
    Let $B:\bN^d\rightarrow \bN^d$.
    Suppose $B_i(\delta)\geq A(\eta,i)+3(\sum_i\delta_i-1)$ for some $\eta\in \bN$.
    Suppose $R = S / \ideal{U}$.
    Suppose $V \subset R$ is a graded vector space that is $h_{2B} \circ t_{1}$-lifted strong with respect to $U$.
    Suppose $P \in R$ is absolutely irreducible with respect to $V$ such that $\deg P = d$.
    There are at most $\primebound{d}$ irreducible non-associate forms $Q_{i} \in \bK\bs{V}$ such that $\ideal{Q_{i}, P}$ is not prime.
    Further, there are at most $\irredprimebound{d}$ pairwise non-associate degree $d$ irreducible forms $H_{i} \in \bK\bs{V}$ such that $\Kspan{P, H_{i}} \setminus \bc{H_{i}}$ contains a form that is absolutely reducible over $V$.
\end{corollary}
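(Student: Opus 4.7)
The plan is to lift both statements from $R$ to the pure polynomial-ring setting where \cref{lem:primebound} and \cref{cor:absredspan} apply directly. First, apply \cref{corollary: practical robustness} to $V$ with the single input form $P$ to obtain a graded vector space $V' \subseteq R_{\leq d}$ satisfying $V \subseteq V'$, $P \in \bK[V']$, and such that $V'$ is $B$-lifted strong with respect to $U$. Fix a homogeneous basis $y_1, \dots, y_a$ of $V$ and extend it to a homogeneous basis $y_1, \dots, y_b$ of $V'$. The strength hypothesis together with \cref{corollary: strong sequence R_eta}, applied to the lift of $V'$ in $S$, shows that the chosen basis behaves as an $\cR_\eta$-sequence; in particular $\bK[V']$ is canonically isomorphic to the polynomial ring $\bK[y_1, \dots, y_b]$, and the inclusion $\bK[V'] \hookrightarrow R$ has the expected good ideal-theoretic properties of strong-algebra extensions (flatness, preservation of primality of extended ideals).

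Granted this set-up, the second statement follows essentially immediately. By \cref{def:absred}, a form in $\Kspan{P, H_i} \setminus \{H_i\}$ is absolutely reducible over $V$ exactly when it is absolutely reducible as a polynomial in $\bK(y_1, \dots, y_a)[y_{a+1}, \dots, y_b]$. Writing $A := \bK[V] = \bK[y_1, \dots, y_a]$, the hypothesis that $P$ is absolutely irreducible over $V$ places us precisely in the scope of \cref{cor:absredspan}, which yields at most $\irredprimebound{d}$ pairwise non-associate homogeneous degree-$d$ forms $H_i \in A$ producing such a combination.

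For the first statement, \cref{lem:primebound} applied in the polynomial ring $\bK[V'] = A[y_{a+1}, \dots, y_b]$ bounds by $\primebound{d}$ the number of pairwise non-associate irreducible $Q_i \in A$ for which $(P, Q_i)\,\bK[V']$ fails to be prime. The remaining task, which is the principal subtle point of the proof, is to transfer primality between $\bK[V']$ and $R$: we must argue that $(P, Q_i)R$ is prime if and only if $(P, Q_i)\,\bK[V']$ is prime. This is where the lifted-strength hypothesis does its work. The lifts $\widetilde y_i$ together with a basis of $U$ form a sufficiently strong (in fact $\cR_\eta$) sequence in $S$, which makes the inclusion $\bK[V'] \hookrightarrow R$ behave like a polynomial extension in the remaining directions, so that prime ideals of $\bK[V']$ extend to prime ideals of $R$. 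Once this transfer is in place, the bound $\primebound{d}$ imports with no change in parameters and the proof is complete. The main obstacle is thus extracting the flatness/regular-sequence consequences of \cref{corollary: strong sequence R_eta} at the level of the lift of $V'$ in $S$, which is the standard strong-algebra machinery from \cite{AH20, OS24}.
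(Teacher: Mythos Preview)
Your proposal is correct and follows essentially the same approach as the paper: both construct $V'$ via \cref{corollary: practical robustness}, identify $\bK[V']$ with a polynomial ring, apply \cref{lem:primebound} and \cref{cor:absredspan} there, and then use the prime-sequence property of the basis of $V'$ to extend prime ideals from $\bK[V']$ to $R$. The only cosmetic difference is that the paper states the primality transfer in one line (``since $y_1,\dots,y_b$ form a prime sequence, prime ideals of $\bK[V']$ extend to prime ideals in $R$''), whereas you frame it via flatness; only the one direction (prime in $\bK[V']$ implies prime in $R$) is actually needed.
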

\begin{proof}
    Let $V'$ be the vector space obtained by applying \cref{corollary: practical robustness} to $V$ and $P$.
    Let $y_{1}, \dots, y_{a}$ be a basis of homogeneous forms of $V$, and $y_{a+1}, \dots, y_{b}$ extend this to a basis of $V'$.
    The ring $\bK\bs{V'}$ is isomorphic to a polynomial ring in the variables $y_{i}$.
    Further, $P$ is irreducible as a form in $\bK\bs{V'}$ by definition of absolute irreducibility with respect to $V$.
    
    Therefore we can apply \cref{lem:primebound} to deduce that there are at most $\primebound{d}$ forms $Q_{i} \in \bK\bs{V}$ such that $(P, Q_{i})$ is not prime as an ideal of $\bK\bs{V'}$.
    Further, since $y_{1}, \dots, y_{b}$ form a prime sequence, prime ideals of $\bK\bs{V'}$ extend to prime ideals in $R$.
    
    Similarly, by \cref{cor:absredspan}, there are at most $\irredprimebound{d}$ forms $H_{i} \in \bK\bs{V}$ such that $\ideal{P, H_{i}}_{d} \setminus \bc{H_{i}}$ contains a form that is absolutely reducible over $V$, when we treat $\ideal{P, H_{i}}$ as an ideal in $\bK\bs{V'}$.
    Since both $P, H_{i}$ have degree $d$, the ideal $\ideal{P, H_{i}}$ contains the same elements in degree $d$ when treated as an ideal of $\bK\bs{V'}$ and as an ideal of $R$.
    This completes the proof.
\end{proof}

\section{Graded Quotients}\label{sec:generalquot}

This section defines graded quotients and the essential properties needed for the degree reduction step in the proof of our main technical theorem.
The definitions and results in the first part of this section are from \cite{OS24, GOS24}.
The definitions and results in subsequent parts are new to this work.
For completeness, we state all the statements we need from previous works without proof, while we provide a proof of all the new/refined statements that we claim here.

Throughout this section, we fix positive integers $d,\eta\in \bN$ with $\eta\geq 3$.
Throughout this section, $B:\bN^d\rightarrow \bN^d$ denotes an ascending function such that $B_i(\delta)\geq A(\eta,i)+3(\sum_i\delta_i-1)$ for all $i\in [d]$. Here $A(\eta, i)$ is the function defined in \cref{section: AH}.

The following definition corresponds to \cite[Definition 6.1]{OS24}.

\begin{definition}[Graded Quotients]\label{definition: quotients}
Let $U=\oplus_{i=1}^d U_i\subset S$ be a graded vector space of dimension sequence $\delta$ in $S$ and $R := S/(U)$ be the quotient ring. 
Let $V=\oplus_{i=1}^dV_i \subset R$ be a graded subspace of dimension sequence $\mu$.

Let $F_1, \ldots, F_n$ be a homogeneous basis for $V$ and $z$ be a new variable. 
For $\alpha \in \bK^n$, let $I_\alpha\subseteq R[z]$ be the homogeneous ideal generated by the forms $ \{F_1 - \alpha_1 z^{\deg(F_1)}, \dots, F_n - \alpha_n z^{\deg(F_n)}\}$. 
We define the graded quotient map $\varphi_{V, \alpha}$ as the quotient homomorphism of finitely generated graded $\bK$-algebras given by
$$ \varphi_{V, \alpha} : R[z] \to R[z]/I_\alpha. $$
\end{definition}

\begin{remark}
The definition above depends on the choice of the basis $F_1,\cdots,F_n$ for $V$. We omit this from the notation for simplicity. We will apply this graded quotient construction for sufficiently strong vector spaces $V$. Therefore \cite{AH20} shows that any basis of $V_\alpha$ forms an $\cR_\eta$ sequence. Similarly, several of our statements regarding these graded quotients will apply independent of the choice of the basis. However, for certain applications we will deal with explicit choices of bases and we define the notion of compatible graded quotients in that context (\cref{def: compatible quotients}).
\end{remark}

\noindent The next proposition and lemma correspond to \cite[Proposition 6.3]{OS24} and \cite[Lemma 6.4]{OS24}.

\begin{proposition}\label{proposition: graded quotient is UFD}
Suppose $V\subset R$ is $B$-lifted strong with respect to $U$. Then $R[z]$ and $R[z]/I_\alpha$ are quotients of $S[z]$ by $\cR_\eta$-sequences, for any choice of $\alpha \in \bK^n$. 
In particular, they are Cohen-Macaulay UFDs.    
\end{proposition}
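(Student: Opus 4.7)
The goal is to realize both $R[z]$ and $R[z]/I_\alpha$ as quotients of the polynomial ring $S[z]$ by $\mathcal{R}_\eta$-sequences, so that \cref{corollary: strong sequence R_eta} (applied with $\eta \geq 3$) immediately gives the Cohen-Macaulay UFD conclusion. The plan is to produce an explicit $B$-strong graded vector space $W \subset S[z]$ whose generators, viewed as a sequence, give precisely the defining ideals of these quotients, and then invoke \cref{corollary: strong sequence R_eta}.

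\textbf{Construction of $W$.} Fix a homogeneous basis $F_1,\dots,F_n$ of $V$ and lifts $\widetilde{F}_1,\dots,\widetilde{F}_n \in S$. Define the graded vector space
\[
W := U \;+\; \Kspan{\widetilde{F}_i - \alpha_i z^{\deg F_i} : i=1,\dots,n} \;\subset\; S[z].
\]
I would first observe that, since the $\widetilde{F}_i$ are linearly independent modulo $U$, the dimension sequence of $W$ is exactly $\delta_U + \delta_V$; moreover the ideal of $S[z]$ generated by $W$ is $(U) + (\{\widetilde{F}_i - \alpha_i z^{\deg F_i}\})$, whose quotient is canonically isomorphic to $R[z]/I_\alpha$. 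Thus it suffices to verify that $W$ is $B$-strong as an AH vector space inside $S[z]$; then by \cref{corollary: strong sequence R_eta} any $\bK$-linearly independent sequence inside $W$ (in particular, our basis of $W$) is an $\mathcal{R}_\eta$-sequence in $S[z]$, and the quotient is a Cohen-Macaulay UFD.

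\textbf{Key step: the $B$-strong check via specialization $z=0$.} Fix $d$ and a nonzero form $G \in W_d$. Write $G = G_0 + z\cdot(\text{lower-degree in }S)\cdots$ by expanding in powers of $z$; then $G|_{z=0} = G_0 \in (U + \Kspan{\widetilde{F}_j})_d \subset S_d$. Because the $\widetilde{F}_j$ are linearly independent modulo $U$, the specialization map $W_d \to (U + \Kspan{\widetilde{F}_j})_d$, $G \mapsto G|_{z=0}$, is an isomorphism of vector spaces, so $G|_{z=0} \neq 0$. By the hypothesis that $V$ is $B$-lifted strong with respect to $U$, the form $G|_{z=0}$ has strength at least $B_d(\delta_U + \delta_V)$ in $S$. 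I would then argue that strength does not drop upon passing from $S$ to $S[z]$: if $G = \sum_{i=1}^k H_i K_i$ is a $k$-collapse of $G$ in $S[z]$ with $1 \leq \deg H_i,\deg K_i < d$, setting $z=0$ gives $G|_{z=0} = \sum_{i \in J} H_i|_{z=0}\, K_i|_{z=0}$ where $J$ collects the indices for which both specializations are nonzero; since homogeneity forces $\deg H_i|_{z=0} = \deg H_i$ and likewise for $K_i$ when nonzero, this is a $k$-collapse of $G|_{z=0}$ in $S$. Contrapositively, the strength of $G$ in $S[z]$ is at least the strength of $G|_{z=0}$ in $S$, proving $\smin{W_d} \geq B_d(\delta_U + \delta_V)$ for every $d$, hence $W$ is $B$-strong.

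\textbf{The case $R[z]$ and expected obstacles.} For the statement about $R[z]$ alone, the identical argument applies with $V$ replaced by the zero space: $U \subset S[z]$ has dimension sequence $\delta_U$ and satisfies $\smin{U_d} \geq B_d(\delta_U + \delta_V) \geq B_d(\delta_U)$ (the lifted-strength hypothesis bounds strength of forms in $U$, and $B$ is ascending). The same specialization argument shows $U$ is $B$-strong inside $S[z]$, and \cref{corollary: strong sequence R_eta} gives that $S[z]/(U) = R[z]$ is a Cohen-Macaulay UFD. The only real subtlety I anticipate is the bookkeeping around the specialization argument—specifically, making sure that when passing a $k$-collapse from $S[z]$ down to $S$ via $z=0$ we do not create constant factors or otherwise violate the degree constraints $1 \leq \deg(\cdot) < d$; this is handled by homogeneity, but deserves a careful sentence. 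Everything else is a direct invocation of the lifted strength hypothesis and of \cref{corollary: strong sequence R_eta}.
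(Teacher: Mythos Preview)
Your proof is correct and follows the natural approach: lift $V$ to $S$, form the graded space $W = U + \Kspan{\widetilde{F}_i - \alpha_i z^{\deg F_i}} \subset S[z]$, and verify it is $B$-strong via the specialization $z \mapsto 0$, so that \cref{corollary: strong sequence R_eta} applies in the polynomial ring $S[z]$. The paper itself does not give a proof of this proposition---it is quoted verbatim from \cite[Proposition~6.3]{OS24}---but your argument is exactly the standard one and matches what that reference does.
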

\emph{General points.} We say that a property $\mathcal{P}$ holds for a \emph{general} $ \alpha \in \bK^m$, if there exists a non-empty open subset $\cU \subseteq\bK^m$ such that the property $\mathcal{P}$ holds for all $\alpha\in \cU$. Here $\cU \subseteq \bK^m$ is open with respect to the Zariski topology. 
Hence $\cU$ is the complement of the zero set of finitely many polynomial functions on $\bK^m$. Note that, equivalently a property $\cP$ holds for a general $\alpha\in \bK^m$, if there is a closed subset $\cZ\subseteq \bK^m$ such that the $\cP$ holds for all $\alpha\not \in \cZ$.
\begin{lemma}\label{lemma: general projection basic}
Let $S=\bK[x_1,\cdots,x_N]$ and $z$ be a new variable. Fix positive integers $d_1,\cdots,d_n\in \bN$. For $\alpha\in \bK^n$, let $I_\alpha=(x_1-\alpha_1 z^{d_1},\cdots,x_n-\alpha_nz^{d_n})$. Let $\varphi_\alpha:S[z]\rightarrow S[z]/I_\alpha$ be the quotient ring homomorphism. 

\begin{enumerate}
    \item The ideal $I_\alpha$ is prime in $S[z]$, and the composition morphism $\bK[z]\hookrightarrow S[z]\rightarrow S[z]/I_\alpha$ is injective.
    \item  If $F\in \bK[x_1,\cdots,x_n]$ is a non-zero polynomial, then $\varphi_\alpha(F)\neq 0$ in $S[z]/I_\alpha$ for a general $\alpha\in \bK^n$.
    \item Let $F\in S\setminus \bK[x_1,\cdots,x_n]$, then $\varphi_\alpha(F)\not\in \bK[z]$ in $S[z]/I_\alpha$, for a general $\alpha\in \bK^n$.
    \item If $F\in S$ is a non-zero polynomial, then $\varphi_\alpha(F)\neq 0$ in $S[z]/I_\alpha$ for a general $\alpha\in \bK^n$.
    \item If $F,G\in S$ have no common factor, then $\gcd(\varphi_\alpha(F),\varphi_\alpha(G))\in\bK[z]$ for a general $\alpha\in \bK^n$. 
    \item If $F\in S$ is square-free. 
    Then, for a general $\alpha\in \bK^n$, the multiple factors of $\varphi_\alpha(F)$ must be in $\bK[z]$.
\end{enumerate}
\end{lemma}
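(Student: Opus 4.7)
The plan is to unpack $\varphi_\alpha$ concretely and then tackle the six items in order of increasing subtlety, with parts (5) and (6) drawing on the easier items. The key observation is that $\varphi_\alpha: S[z]\to T := \bK[z,x_{n+1},\dots,x_N]$ is nothing but the substitution homomorphism sending $x_i\mapsto \alpha_i z^{d_i}$ for $i\leq n$ and fixing $x_{n+1},\dots,x_N,z$; it realizes $S[z]/I_\alpha$ as a polynomial ring, which is an integral domain containing $\bK[z]$ faithfully, giving (1) immediately.

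For (2), expanding $F=\sum_I c_I x^I$ yields $\varphi_\alpha(F)=\sum_k p_k(\alpha)\, z^k$, where $p_k(\alpha)=\sum_{w(I)=k} c_I\, \alpha^I$ and $w(I)=\sum_j I_j d_j$. Since the monomials $\alpha^I$ are linearly independent in $\bK[\alpha]$, at least one $p_k$ is a nonzero polynomial in $\alpha$, and the complement of its vanishing locus is a Zariski-dense open on which $\varphi_\alpha(F)\neq 0$. For (4), I would write $F=\sum_J F_J\, x_{n+1}^{J_{n+1}}\cdots x_N^{J_N}$ with $F_J\in \bK[x_1,\dots,x_n]$, apply (2) to each nonzero $F_J$, and intersect the finitely many dense opens so obtained. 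For (3), the same expansion provides a nonzero $F_J$ with $J\neq 0$ (since $F\notin \bK[x_1,\dots,x_n]$), so $\varphi_\alpha(F)$ contains a nontrivial monomial in some $x_\ell$ with $\ell>n$, and is thus not in $\bK[z]$.

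For (5), I plan to argue via resultants; I expect this to be the main technical step. Since $F,G$ are coprime in the UFD $S$, for every $j>n$ such that at least one of them involves $x_j$, the resultant $R_j := \mathrm{Res}_{x_j}(F,G)\in \bK[x_1,\dots,\widehat{x}_j,\dots,x_N]$ is nonzero. By (4), $\varphi_\alpha(R_j)\neq 0$ on a dense open $U_j$; shrinking $U_j$ further using (2) on the $x_j$-leading coefficients of $F,G$, the $x_j$-degrees are preserved, and hence $\mathrm{Res}_{x_j}(\varphi_\alpha(F),\varphi_\alpha(G)) = \varphi_\alpha(R_j)\neq 0$ on $U_j$. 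This rules out any common factor of positive $x_j$-degree; intersecting over the finitely many relevant $j>n$ then confines every common factor to $\bK[z]$. The main obstacle is this bookkeeping, in particular handling the case where one of $F,G$ is constant in $x_j$ (where the resultant is just a power of the other, nonzero constant), and verifying that enough indices $j$ remain available so that every non-$\bK[z]$ factor is ruled out.

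For (6), I would apply (5) twice. Factor the squarefree $F=F_1\cdots F_k$ into distinct irreducibles; by (5) applied to each pair $(F_i,F_{i'})$, $\gcd(\varphi_\alpha(F_i),\varphi_\alpha(F_{i'}))\in\bK[z]$ for generic $\alpha$, so any repeated factor of $\varphi_\alpha(F)$ outside $\bK[z]$ must already be a repeated factor of a single $\varphi_\alpha(F_i)$. For each irreducible $F_i\notin \bK[x_1,\dots,x_n]$, I would pick $\ell>n$ with $F_i$ depending on $x_\ell$; then $\partial F_i/\partial x_\ell$ is nonzero and coprime to $F_i$ (it has strictly smaller degree, and $F_i$ is irreducible). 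Applying (5) to $(F_i,\partial F_i/\partial x_\ell)$, together with the identity $\varphi_\alpha(\partial F_i/\partial x_\ell)=\partial \varphi_\alpha(F_i)/\partial x_\ell$ (valid since $\varphi_\alpha$ fixes $x_\ell$), any squared non-$\bK[z]$ factor of $\varphi_\alpha(F_i)$ would divide this $\gcd$, contradicting the $\bK[z]$-containment.
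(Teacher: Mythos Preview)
The paper does not provide its own proof of this lemma; it is stated without proof as \cite[Lemma~6.4]{OS24} (see the sentence ``The next proposition and lemma correspond to \cite[Proposition 6.3]{OS24} and \cite[Lemma 6.4]{OS24}''). So there is no in-paper argument to compare against, and the relevant question is simply whether your proposal is correct.

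It is. Identifying $S[z]/I_\alpha$ with the polynomial ring $\bK[z,x_{n+1},\dots,x_N]$ via the substitution $x_i\mapsto\alpha_i z^{d_i}$ is exactly the right starting point, and your treatments of (1)--(4) are clean. For (5), the resultant argument works: if an irreducible common factor $H\notin\bK[z]$ existed, it would involve some $x_j$ with $j>n$; then $x_j$ must occur in both $F$ and $G$ (else a degree-in-$x_j$ count already gives a contradiction), and nonvanishing of $\varphi_\alpha(\mathrm{Res}_{x_j}(F,G))$ together with preservation of $x_j$-degrees rules out $H$. For (6), pairing (5) across distinct irreducible factors and then applying (5) to $(F_i,\partial F_i/\partial x_\ell)$ is correct; note that $H^2\mid\varphi_\alpha(F_i)$ forces $H\mid\partial\varphi_\alpha(F_i)/\partial x_\ell$ regardless of whether $H$ itself involves $x_\ell$, so a single well-chosen $\ell$ per $F_i$ suffices. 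The argument uses that $\mathrm{char}\,\bK=0$ (so $\partial F_i/\partial x_\ell\neq 0$), which is the standing hypothesis of the paper.

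One small correction: in your argument for (5) you invoke (2) to keep the $x_j$-leading coefficients of $F,G$ nonzero after applying $\varphi_\alpha$. Those leading coefficients generally lie in $\bK[x_1,\dots,\widehat{x}_j,\dots,x_N]$, not in $\bK[x_1,\dots,x_n]$, so it is (4) rather than (2) that you need here. Since you establish (4) before (5), this is only a labeling slip and does not affect the argument.
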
 

The next proposition corresponds to \cite[Proposition~6.5]{GOS24}.

\begin{proposition}\label{proposition: general quotient general}
Let $V \subset R$ be a $B$-lifted strong vector space and $\varphi_\alpha :R[z]\rightarrow R[z]/I_\alpha$ be a graded quotient as defined in \cref{definition: quotients}.
\begin{enumerate}
    \item The ideal $I_\alpha$ is a prime ideal in $R[z]$ and the composition $\bK[z]\hookrightarrow R[z]\rightarrow R[z]/I_\alpha$ is injective.
    \item  If  $F \in R \setminus \{0\}$, then $\varphi_\alpha(F)\neq 0$ for a general $\alpha\in \bK^n$.
    \item  If $F\not\in \bK[V] \subset R$, then $\varphi_\alpha(F)\not\in \bK[z]$ for a general $\alpha\in \bK^n$.
    \item If $F\not \in (V)$ then $\varphi_\alpha(F)\not \in (z)$ in $R[z]/I_\alpha$.
    \item If $F$ is absolutely reducible with respect to $V$ then $\varphi_{\alpha}(F)$ is a reducible form in $R[z]/I_\alpha$ for general $\alpha$.
    \item If $F$ is absolutely irreducible with respect to $V$ and $F\not\in (V)$, then $\varphi_\alpha(F)$ is irreducible for general $\alpha$.
\end{enumerate}
\end{proposition}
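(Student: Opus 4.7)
The plan is to reduce each of the six statements to the corresponding statement for polynomial rings (namely \cref{lemma: general projection basic}), by leveraging the strong-algebra machinery of \cref{sec:strongalg}. The key construction: for any finite collection of forms of interest (in each part below these will include $F$ and its factors), we apply \cref{corollary: practical robustness} to enlarge $V$ to a $B$-lifted strong vector space $V'\supset V$ containing these forms in $\bK[V']$, and extend the basis $F_1,\ldots,F_n$ of $V$ to a basis $F_1,\ldots,F_n,G_1,\ldots,G_k$ of $V'$. Lifting to $S$, the vector space $U+\widetilde{V'}$ is $B$-strong, so by \cref{corollary: strong sequence R_eta} its generators form an $\cR_\eta$-sequence, and the subalgebra they generate is isomorphic to a polynomial ring. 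The restriction of $\varphi_\alpha$ to this polynomial ring (tensored with $\bK[z]$) quotients only by the relations $\widetilde{F}_i-\alpha_i z^{\deg F_i}$, placing us exactly in the setting of \cref{lemma: general projection basic} with variables $\widetilde{F}_i$ being substituted and $\widetilde{G}_j$ together with the basis of $U$ surviving.

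Parts (1)--(4) follow from this reduction combined with the corresponding parts of \cref{lemma: general projection basic}. Part (1) is immediate from \cref{proposition: graded quotient is UFD} (a UFD is a domain, so $I_\alpha$ is prime); for the injectivity claim, a nonzero $p(z)\in\bK[z]$ lifts to a nonzero element of the polynomial subring, where \cref{lemma: general projection basic}(1) yields $\varphi_\alpha(p(z))\neq 0$. For (2), take $V'\supseteq V$ containing a lift $\widetilde{F}$ of $F$ and apply \cref{lemma: general projection basic}(4). For (3), the hypothesis $F\notin\bK[V]$ means the lift $\widetilde{F}$ can be chosen outside $\bK[U+\widetilde{V}]$, so \cref{lemma: general projection basic}(3) applies after enlargement. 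Part (4) is direct: if $F\notin (V)$, pick a lift $\widetilde{F}$ of $F$ in $S$ outside $(U)+(\widetilde{V})$; under the substitution $\widetilde{F}_i\mapsto \alpha_i z^{d_i}$, $\widetilde{F}$ maps to an element that has a nonzero component free of $z$ modulo $(U)$, so $\varphi_\alpha(F)\notin (z)$.

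For parts (5) and (6) we invoke absolute (ir)reducibility with respect to $V$ as defined in \cref{def:absred}. Let $V'$ be the vector space produced by applying \cref{corollary: practical robustness} to $V$ and $F$, with basis $F_1,\ldots,F_n,G_1,\ldots,G_k$. Then $F\in\bK[V']$, and viewed in the polynomial ring $\bK(F_1,\ldots,F_n)[G_1,\ldots,G_k]$, $F$ is (by definition) absolutely reducible/irreducible over $\bK(F_1,\ldots,F_n)$. For (5): the factorization of $F$ over $\overline{\bK(F_1,\ldots,F_n)}$ specializes along the map $F_i\mapsto \alpha_i z^{d_i}$ to a factorization of $\varphi_\alpha(F)$ in the quotient ring, and by \cref{lemma: general projection basic}(2) the factors remain nontrivial (i.e. nonzero and nonconstant) for general $\alpha$. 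For (6), we argue contrapositively in a Bertini-flavoured manner: the locus of $\alpha\in\bK^n$ for which $\varphi_\alpha(F)$ is reducible is closed, and if it were the whole of $\bK^n$, then assembling the family of factorizations over the Chow variety of degree-$\deg F$ divisors yields a nontrivial factorization of $F$ over $\overline{\bK(F_1,\ldots,F_n)}[G_1,\ldots,G_k]$, contradicting absolute irreducibility. The hypothesis $F\notin (V)$ ensures $\varphi_\alpha(F)$ is not divisible by $z$ for general $\alpha$ (by part (4)), so the degrees of putative factorizations behave coherently across the family.

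The main obstacle is part (6): one must rule out every possible reducibility pattern of $\varphi_\alpha(F)$ for generic $\alpha$ and extract from a hypothetical generic factorization an actual factorization of $F$ over $\overline{\bK(V)}$. Passing from polynomial factorizations parametrized by $\alpha$ back to a factorization at the ``generic point'' requires checking that the constructed family lives inside the graded quotient, i.e.\ that no degeneration along the one-parameter family $\{F_i=\alpha_i z^{d_i}\}$ introduces spurious factors. Handling this uniformly requires the strong-algebra condition on $V$ so that the ambient ring $R[z]/I_\alpha$ remains a UFD (ensuring unique factorization flows both ways), together with part (4) to control the $z$-adic behaviour. Once these are in place, the standard Noether/Bertini-type argument used in \cite{GOS24} for the absolute irreducibility characterization goes through and yields the desired irreducibility of $\varphi_\alpha(F)$ for general $\alpha$.
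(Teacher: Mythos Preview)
The paper does not give its own proof of this proposition; it is imported verbatim from \cite[Proposition~6.5]{GOS24} (see the preamble to \cref{sec:generalquot}: ``we state all the statements we need from previous works without proof''). So there is no in-paper argument to compare against, and I assess your sketch on its own terms.

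Your overall plan---enlarge $V$ to a strong $V'$ containing the relevant forms so that everything lives in a genuine polynomial subring, and then invoke \cref{lemma: general projection basic}---is the right one and is exactly the method used in \cite{OS24,GOS24}. Parts (1)--(4) and the outline for (6) are fine at the level of detail you give. The real issue is part (5). When you write that ``the factorization of $F$ over $\overline{\bK(F_1,\ldots,F_n)}$ specializes along $F_i\mapsto\alpha_i z^{d_i}$ to a factorization of $\varphi_\alpha(F)$ in the quotient ring'', this does not quite work as stated: the field map $\bK(F_i)\to\bK(z)$ extends only to $\overline{\bK(F_i)}\to\overline{\bK(z)}$, so the specialized factors live in $\overline{\bK(z)}[G_j]$, not in $\bK[z,G_j]\subset R[z]/I_\alpha$. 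Your appeal to \cref{lemma: general projection basic}(2) only ensures the factors are nonzero, not that their coefficients descend from $\overline{\bK(z)}$ to $\bK[z]$.

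What is missing is the (true, but not automatic) lemma that a \emph{homogeneous} form $Q\in\bK[z,G_j]$ which is reducible over $\overline{\bK(z)}$ and not divisible by $z$ is already reducible in $\bK[z,G_j]$. One clean way to see this: the grading gives a $\bK^*$-action under which all fibres of $V(Q)\to\bA^1_z$ over $z\neq 0$ are isomorphic, so geometric reducibility of the generic fibre (which is what absolute reducibility over $\bK(z)$ means) is equivalent to reducibility of $Q(1,G)$ over $\bK=\overline{\bK}$, and the weighted homogenization/dehomogenization correspondence shows this is equivalent to reducibility of $Q$ in $\bK[z,G_j]$. Without this step your argument for (5) is incomplete, and the phrase ``in the quotient ring'' hides exactly the point that needs justification.
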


The next proposition corresponds to \cite[Proposition 6.6]{OS24}.

\begin{proposition}\label{proposition: gcd after projection}
Let $G:\bN^d\rightarrow \bN^d$ be a function such that $G_i(\delta)\geq h_{B,i}\circ t_2(\delta)$ for all $\delta\in \bN^d$. 
Let $V \subset R_{\leq d}$ be a $G$-lifted strong vector space and $\varphi_\alpha :R[z]\rightarrow R[z]/I_\alpha$ be a graded quotient as defined in \cref{definition: quotients}. 
Let $F, G \in R_{\leq d}$ be such that they have no common factor.
There exists a non-empty open subset $\cU\subseteq \bK^{\dim(V)}$ such that for all $\alpha\in \cU$ we have:
\begin{enumerate}
    \item $\gcd(\varphi_\alpha(F), \varphi_\alpha(G)) \in \bK[z]$. %
    \item If $F,G$ are homogeneous, then 
    $\gcd(\varphi_\alpha(F), \varphi_\alpha(G)) = z^k $ for some $k \in \bN$. 
    In particular, we have $\gcd(\varphi_\alpha(zF), \varphi_\alpha(zG)) = z^{k+1} $ for some $k \in \bN$. Furthermore, if $F,G\not \in K[V]\subset R$ then $\varphi_\alpha(F)$, $\varphi_\alpha(G)$ are linearly independent.
    \item If $F\in R$ is a square-free form, then 
    $\varphi_\alpha(F)$ does not have multiple factors other than $z^k$ for some $k\in \bN$.
\end{enumerate}
\end{proposition}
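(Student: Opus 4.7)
The plan is to reduce the statement to the pure polynomial-ring setting already handled by \cref{lemma: general projection basic}, by enlarging $V$ to a $B$-lifted strong vector space $V'$ that also contains $F, G$, and then pushing the gcd question through the associated graded quotient.

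First I would apply \cref{corollary: practical robustness} to $V$ with the two forms $P_1 := F, P_2 := G$ (i.e.\ with parameter $k = 2$). The hypothesis on $G$, namely $G_i(\delta) \geq h_{B,i} \circ t_2(\delta)$, is precisely what is needed to invoke the corollary and produce a $B$-lifted strong graded vector space $V' \supseteq V$ with $F, G \in \bK[V']$. Fix a homogeneous basis $y_1, \ldots, y_n$ of $V$ and extend it to a homogeneous basis $y_1, \ldots, y_m$ of $V'$. By \cref{corollary: strong sequence R_eta} and the definition of $B$-lifted strength, homogeneous lifts of $y_1, \ldots, y_m$ (together with a basis of $U$) form an $\cR_\eta$ sequence in $S$. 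In particular $\bK[V'] \cong \bK[y_1, \ldots, y_m]$ behaves as a polynomial ring, and by \cref{proposition: graded quotient is UFD} both $R[z]$ and $R[z]/I_\alpha$ are Cohen-Macaulay UFDs, so gcds are well-defined there.

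Next, since $F, G$ are coprime in the UFD $R$ and $\bK[V']$ embeds in $R$ via a regular sequence extension, $F, G$ remain coprime in $\bK[V']$. The restriction of $\varphi_\alpha$ to $\bK[V'][z]$ is exactly the map sending $y_i \mapsto \alpha_i z^{\deg y_i}$ for $1 \leq i \leq n$ and fixing $y_{n+1}, \ldots, y_m, z$, i.e.\ precisely the setup of \cref{lemma: general projection basic}. Parts (5) and (6) of that lemma then give, for general $\alpha$, that $\gcd(\varphi_\alpha(F), \varphi_\alpha(G)) \in \bK[z]$ inside the polynomial quotient $\bK[V'][z]/J_\alpha$ (where $J_\alpha$ is the restriction of $I_\alpha$ to $\bK[V'][z]$), and that squarefreeness of $F$ is preserved except for multiple factors in $\bK[z]$. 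Because the inclusion $\bK[V'][z]/J_\alpha \hookrightarrow R[z]/I_\alpha$ is again induced by an $\cR_\eta$ sequence (the generators of $R_1$ lying outside $V_1'$, whose lifts continue the regular sequence), it is flat and preserves coprimality, so these conclusions transfer to $R[z]/I_\alpha$, giving parts (1) and (3).

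For part (2), when $F, G$ are homogeneous so are their images under the graded homomorphism $\varphi_\alpha$, hence any gcd is a homogeneous element of $\bK[z]$, i.e.\ of the form $z^k$; multiplying by $z$ on both sides immediately yields $\gcd(\varphi_\alpha(zF), \varphi_\alpha(zG)) = z^{k+1}$. For linear independence when $F, G \notin \bK[V]$: by \cref{proposition: general quotient general}(3) the images $\varphi_\alpha(F), \varphi_\alpha(G)$ are not in $\bK[z]$; writing $\varphi_\alpha(F) = z^k P$ and $\varphi_\alpha(G) = z^k Q$ with $\gcd(P, Q) = 1$, a proportionality $\varphi_\alpha(F) = \lambda \varphi_\alpha(G)$ would give $P = \lambda Q$, and coprimality in a UFD then forces $P, Q$ to be units, placing both images back in $\bK[z]$, a contradiction. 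The main technical obstacle is the descent step in paragraph two: verifying that gcd information inside $\bK[V'][z]/J_\alpha$ faithfully controls gcds inside the larger ring $R[z]/I_\alpha$. This relies on using the $\cR_\eta$-sequence property provided by \cref{corollary: strong sequence R_eta} to ensure flatness of the upper extension, so that no new common factors of elements of the sub-ring can appear upstairs beyond unit multiples of those already present.
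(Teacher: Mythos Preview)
This proposition is not proved in the present paper; it is quoted from \cite[Proposition~6.6]{OS24} without argument (see the remark at the start of \cref{sec:generalquot}). Your outline---enlarge $V$ to a $B$-lifted strong $V'$ containing $F,G$ via \cref{corollary: practical robustness}, invoke parts (5)--(6) of \cref{lemma: general projection basic} inside the genuine polynomial ring $\bK[V']$, and then transfer the gcd conclusion along the extension $\bK[V'][z]/J_\alpha \hookrightarrow R[z]/I_\alpha$---is precisely the strategy carried out in \cite{OS24}, and your identification of the transfer step as the crux is correct.

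Two small remarks. First, \cref{corollary: practical robustness} as stated here requires $V$ to be $h_{2B}\circ t_2$-lifted strong to produce a $B$-lifted strong $V'$, whereas the hypothesis only guarantees $h_B\circ t_2$; in \cite{OS24} the constants are arranged so that the stated hypothesis suffices for what is actually needed (namely that $\bK[V']$ is a polynomial ring and the extension to $R$ is well-behaved), so this is a bookkeeping issue rather than a gap. Second, your justification of the transfer step via ``flatness from the $\cR_\eta$-sequence'' is the right mechanism, but note that flatness of $R[z]/I_\alpha$ over $\bK[V'][z]/J_\alpha$ is not automatic from the bare regular-sequence statement; in \cite{OS24} this is extracted from the stronger intersection-flatness/elimination properties of strong Ananyan--Hochster algebras (the same circle of ideas behind \cite[Lemma~4.26]{OS24} cited later in the paper). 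Once that is in place, your argument that $(F)\cap(G)$ extends correctly, and hence $\gcd$ is preserved up to units, goes through as you wrote.
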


The next proposition corresponding to \cite[Proposition 6.9]{OS24}, tells us that if a finite set of forms resulting from a general quotient has small vector space dimension, then it must be the case that the original set of forms must have small vector space dimension.
We refer to this result as "lifting from general quotients," as we are lifting our upper bounds for the quotiented configurations to the original configurations.

\begin{proposition}[Lifting from general quotients]\label{proposition: lifting general quotient}
Let $d,e\in \bN$ such that $1\leq d\leq e$. Let $U\subset S_{\leq e}$ be a graded vector space generated by forms $H_1,\cdots,H_t$. Let $R=S/(U)$.
Let $V \subset R_{\leq e}$ be a $B$-lifted strong vector space with basis $F_1,\cdots,F_n\in R$.
Let $\varphi_\alpha :R[z]\rightarrow R[z]/I_\alpha$ be a graded quotient as defined in \cref{definition: quotients}.
Let $\cF\subset R_{\leq d}$ be a finite set of homogeneous elements.
Suppose that there exists $D\in \bN$ and a dense set $\cU \subset \bK^{n}$ such that $\dim \Kspan{\varphi_\alpha(\cF)}\leq D$ for every $\alpha \in \cU$. 
Then  
$$\dim\Kspan{\cF}\leq d^2(1+d)^{2n+2}D\cdot\Pi_{i=1}^t\deg(H_i)\cdot\Pi_{j=1}^n\deg(F_j).$$
\end{proposition}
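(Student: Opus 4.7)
The plan is to proceed by induction on $n = \dim V$, decomposing the graded quotient $\varphi_{V,\alpha}$ into a composition of simpler quotients and controlling the dimensional loss at each step via Bezout-type bounds on the involved ideals.

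For the base case $n = 0$, the vector space $V$ is trivial, so $I_\alpha = 0$ and $\varphi_\alpha$ restricts to the identity on $R[z]$. Hence $\dim \Kspan{\cF} = \dim \Kspan{\varphi_\alpha(\cF)} \leq D$, which is dominated by the right-hand side $d^2 (1+d)^2 D \prod_{i=1}^t \deg(H_i)$, as required.

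For the inductive step, I would peel off the last generator $F_n$: set $V' := \Kspan{F_1, \ldots, F_{n-1}}$ and factor $\varphi_{V,\alpha}$ as $\varphi_{V',\alpha'} \circ \varphi_{\{F_n\},\alpha_n}$, possibly invoking the twisted graded quotient machinery developed later in \cref{sec:generalquot} to accommodate the fact that the intermediate ring $R_1 := R[z]/(F_n - \alpha_n z^{\deg F_n})$ is no longer a polynomial ring over $R$. Since $R_1$ is itself expressible as the quotient of $S[z]$ by $(H_1, \ldots, H_t, F_n - \alpha_n z^{\deg F_n})$, the inductive hypothesis applies to the graded quotient $\varphi_{V',\alpha'}$ acting on $\varphi_{\{F_n\},\alpha_n}(\cF) \subset R_1$, with $(n-1)$-dimensional ambient vector space and $(t+1)$ many defining forms for the base quotient; this gives
\[\dim \Kspan{\varphi_{\{F_n\},\alpha_n}(\cF)} \;\leq\; d^2(1+d)^{2n} D \cdot \prod_{i=1}^t \deg(H_i) \cdot \deg(F_n) \cdot \prod_{j=1}^{n-1} \deg(F_j).\]

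To close the induction, I need the single-step bound $\dim \Kspan{\cF} \leq (1+d)^2 \dim \Kspan{\varphi_{\{F_n\},\alpha_n}(\cF)}$. This reduces to bounding the kernel of $\varphi_{\{F_n\},\alpha_n}$ restricted to $\Kspan{\cF}$, which is contained in the degree $\leq d$ part of the principal ideal $(F_n - \alpha_n z^{\deg F_n})$ inside $R[z]$. The main obstacle is precisely this single-step bound: it must be \emph{independent of the number of ambient variables $N$}, and this is achieved by invoking \cref{corollary: strong sequence R_eta} to ensure that $H_1, \ldots, H_t, F_n$ forms an $\cR_\eta$-sequence in $S[z]$, so that Bezout-type degree bounds for the resulting complete-intersection-like ideal apply uniformly in $N$, with the $(1+d)^2 \deg(F_n)$ factor coming from the low-degree structure of elements of the principal ideal $(F_n - \alpha_n z^{\deg F_n})$. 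Combining the two bounds gives the claimed estimate.
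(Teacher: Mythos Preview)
The paper does not actually prove this proposition; it imports it from \cite[Proposition~6.9]{OS24} and only remarks that the original argument goes through when the open set of good $\alpha$ is weakened to a dense set. So there is no in-paper proof to compare against; what follows explains why your inductive scheme does not close.

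The core problem is the factorization step. A graded quotient, by \cref{definition: quotients}, adjoins a \emph{fresh} variable. Thus $\varphi_{\{F_n\},\alpha_n}: R[z_1] \to R_1 := R[z_1]/(F_n - \alpha_n z_1^{d_n})$ uses one new variable $z_1$, and then the graded quotient $\varphi_{V',\alpha'}: R_1[z_2] \to R_1[z_2]/I'_{\alpha'}$ uses a second new variable $z_2$. Their composite lands in a quotient of $R[z_1,z_2]$, whereas $\varphi_{V,\alpha}$ lands in a quotient of $R[z]$; the two differ by the relation identifying $z_1$ with $z_2$. The twisted-quotient machinery you allude to (\cref{prop: composition of quotient}) encodes exactly this: to recover $\varphi_{V,\alpha}$ one must include $z_1$ in the second quotienting space, i.e.\ quotient by $\Kspan{z_1, F_1,\ldots,F_{n-1}}$, which is $n$-dimensional. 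Hence the inductive hypothesis is invoked with the same parameter $n$, and the induction does not progress. If instead you apply the inductive hypothesis with only the $(n-1)$-dimensional $V'$ and a genuinely new variable, you have no way to transfer the hypothesis $\dim\Kspan{\varphi_{V,\alpha}(\cF)}\leq D$ to a bound on $\dim\Kspan{\varphi_{V',\alpha'}(\varphi_{\{F_n\},\alpha_n}(\cF))}$.

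There is a complementary symptom showing the argument cannot be right as written. For any $\alpha_n \neq 0$ the restriction $\varphi_{\{F_n\},\alpha_n}|_R$ is \emph{injective}: if $P \in R$ lies in $(F_n - \alpha_n z_1^{d_n}) \subset R[z_1]$ then $P$ is a multiple of an element of positive $z_1$-degree in the domain $R[z_1]$, forcing $P = 0$. Consequently $\dim\Kspan{\cF} = \dim\Kspan{\varphi_{\{F_n\},\alpha_n}(\cF)}$ exactly, so your ``single-step bound'' holds with constant $1$ and the proposed $(1+d)^2$ factor is never used. All of the dimension drop in $\varphi_{V,\alpha}$ comes from collapsing the several auxiliary variables to a single $z$, precisely the step your decomposition omits. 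Any workable argument must analyze that collapse directly (this is where the genuine degree/Hilbert-function estimate producing the factors $\prod \deg H_i \cdot \prod \deg F_j$ enters), and the kernel sketch in your last paragraph does not address it.
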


While the statement of \cite[Proposition~6.9]{OS24} requires that the set of $\alpha$ for which the rank bound holds is an open set, the proof only requires the weaker condition that the set of such $\alpha$ is dense.
The fact that this weaker condition suffices will be crucial for us.

\subsection{Compatible graded quotients and twisted quotients}

\begin{definition}[Compatible graded quotients]\label{def: compatible quotients}
    Let $R=S/(U)$ and $V, Y\subseteq R$ be finite dimensional $\bK$-vector spaces such that $V\subseteq Y$. Let $\cF_V,\cF_Y$ be $\bK$-linear bases of $V,Y$ respectively. We say that $\cF_V,\cF_Y $ are compatible bases if $\cF_V \subseteq \cF_Y$. 

   Consider the compatible bases given by $\cF_V=\{F_1,\cdots,F_m\}$ and $\cF_Y=\{F_1,\cdots,F_m,G_1,\cdots,G_n\}$. Let $\alpha\in \bK^m$,$\beta\in \bK^n$ and let $y_1,y_2$ be new variables. Let $\varphi_{V,\alpha}:R[y_1]\rightarrow R[y_1]/I_{V,\alpha}$ be the graded quotient given by $F_i\mapsto \alpha_iy_1^{\deg(F_i)}$. Let $\varphi_{Y,(\alpha,\beta)}:R[y
   _1]\rightarrow R[y_1]/I_{Y,(\alpha,\beta)}$ be the graded quotient that sends $F_i\mapsto \alpha_iy_1^{\deg(F_i)}$ and $G_j\mapsto \beta_jy_1^{\deg(G_j)}$. Then we will say that $\varphi_{V,\alpha}$ and $\varphi_{Y,(\alpha,\beta)}$ are compatible graded quotients.
   
   \end{definition}

Given two compatible quotients $\varphi_{V,\alpha}$ and $\varphi_{Y,(\alpha,\beta)}$, we would like to express $\varphi_{Y,(\alpha,\beta)}$ as a composition of   $\varphi_{V,\alpha}$  and another graded quotient of the form $\varphi_{W,\beta}$, where $W\subseteq R[y_1]/I_{V,\alpha}$ is some vector space. However, in order to make such a composition work, we need to include the variable $y_1$ in the vector space $W$, and we will need to adjoin a new variable $y_2$ for the second quotient map. This leads to an incompatibility where it is not not enough to use a quotient of the form $\varphi_{W,\beta}$, since we also need to send $y_1\mapsto \gamma y_2$, for some scalar $\gamma\in \bK$, under the second quotient. This naturally leads to the notion of a twisted quotient defined below. Moreover, in \cref{proposition: general twists} we note that we can indeed decompose  $\varphi_{Y,(\alpha,\beta)}$ into a composition of $\varphi_{V,\alpha}$ and a twisted quotient.
\begin{definition}[Twists] Let $\gamma\in \bK^*$ and $d_1,\cdots,d_n\in\bN$. For $\beta\in\bK^n$, we define the twist of $\beta$ by $\gamma$ as the natural action of the torus given by $\gamma \cdot \beta=(\gamma^{d_1}\beta_1,\cdots,\gamma^{d_n}\beta_n)$. 

\end{definition}

\begin{definition}[Twisted quotient]
Let $V\subseteq R$ with a basis $\{F_1,\cdots,F_n\}$ and $\beta\in \bK^n$. For $\gamma\in\bK^*$, we will say that the graded quotient $\varphi_{V,\gamma\cdot \beta}$ is the $\gamma$-twist of the graded quotient $\varphi_{V,\beta}$.
    
\end{definition}
    
We will often consider a tuples $(\alpha,\beta,\gamma)$ where $\beta$ varies in an open subset depending on $\alpha$. Furthermore, $\gamma$ varies in an open subset depending on $\beta$. We note the following elementary result that will help us formalize this notion of relative generality.
\begin{proposition}\label{proposition: general twists}
    Fix integers $d_1,\cdots,d_n\geq 1$. Let $\cV\subseteq \bK^{n+1}$ be a non-empty open subset. Let $f:\bK\times \bK^n \rightarrow \bK^{n+1}$ be the map $f(\gamma,\beta)=(\gamma, \gamma^{d_1}\beta_1,\cdots,\gamma^{d_n}\beta_n)$. The we have the following.
    
    \begin{enumerate}
        \item The inverse image $f^{-1}(\cV)$ is an open subset of $\bK^{n+1}$.
        \item The image $p_2(f^{-1}(\cV))\subseteq \bK^n$ is a non-empty open subset, where $p_2$ is the projection map $\bK\times \bK^n \rightarrow \bK^n$.
        \item For each $\beta\in \cU_2:=p_2(f^{-1}(\cV))$, there exists a non-empty open subset $\cT_\beta\subseteq \bK$, such that  $f(\gamma,\beta)\in \cV$ for all $\gamma\in \cT_\beta$.
    \end{enumerate}  
\end{proposition}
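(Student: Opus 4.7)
The plan is to dispatch the three parts in order, using only elementary Zariski continuity plus one standard openness property of coordinate projections. Since $f$ is defined by the polynomial expressions $\gamma, \gamma^{d_1}\beta_1, \ldots, \gamma^{d_n}\beta_n$, it is a morphism of affine spaces and in particular Zariski-continuous. Part (1) is therefore immediate: $f^{-1}(\cV)$ is the preimage of an open set under a continuous map.

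For part (2), I would first verify non-emptiness of $f^{-1}(\cV)$, which forces $p_2(f^{-1}(\cV))$ to be non-empty. Observe that for any $\gamma_0 \in \bK^*$, the partial map $\beta \mapsto f(\gamma_0, \beta)$ is a bijection onto $\{\gamma_0\} \times \bK^n$, with explicit inverse dividing each coordinate by $\gamma_0^{d_i}$. Hence the image $f(\bK^* \times \bK^n)$ equals the Zariski-open subset $\{y_0 \neq 0\} \subseteq \bK^{n+1}$, which must intersect the non-empty open $\cV$ since $\bK^{n+1}$ is irreducible. To establish openness of $p_2(f^{-1}(\cV))$, I would appeal to the classical fact that the coordinate projection $p_2 : \bK \times \bK^n \to \bK^n$ is a Zariski-open map. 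If desired, this can be proven directly: writing the closed complement of $f^{-1}(\cV)$ as $V(I)$ for some ideal $I \subseteq \bK[\gamma, \beta_1, \ldots, \beta_n]$, a point $\beta \in \bK^n$ fails to lie in $p_2(f^{-1}(\cV))$ precisely when, for every $h \in I$ viewed as an element of $\bK[\beta_1, \ldots, \beta_n][\gamma]$, all of its $\gamma$-coefficients vanish at $\beta$; this is a closed condition on $\beta$, so its complement is open.

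Part (3) is then the most direct: for fixed $\beta \in \cU_2$, the map $g_\beta : \bK \to \bK^{n+1}$ defined by $g_\beta(\gamma) := f(\gamma, \beta)$ is a polynomial morphism, so $\cT_\beta := g_\beta^{-1}(\cV)$ is open in $\bK$. By definition of $\cU_2$, there exists some $\gamma_0 \in \bK$ with $f(\gamma_0, \beta) \in \cV$, which witnesses $\cT_\beta \neq \emptyset$.

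The only mildly non-trivial step is the openness of the projection $p_2$ in part (2); everything else reduces to continuity of polynomial maps and the fact that $f$ surjects onto the open subset $\{y_0 \neq 0\} \subseteq \bK^{n+1}$.
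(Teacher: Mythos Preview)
Your proof is correct and follows essentially the same approach as the paper: continuity of $f$ for part~(1), openness of the coordinate projection for part~(2), and restricting to the fiber over $\beta$ for part~(3). Your argument is in fact slightly more complete, since you explicitly verify non-emptiness of $f^{-1}(\cV)$ (via the observation that $f$ surjects onto the open set $\{y_0\neq 0\}$, which must meet $\cV$ by irreducibility of $\bK^{n+1}$), a point the paper's proof leaves implicit.
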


\begin{proof}
    We know that $f$ is a continuous map in Zariski topology. Therefore $f^{-1}(\cV)$ is an open subset. We have the second statement since projection maps are open maps with respect to Zariski topology. Let $\cU_2:=p_2(f^{-1}(\cV))$. For each $\beta\in \cU_2$, we define $\cT_\beta:=f^{-1}(\cV)\cap p_2^{-1}(\beta)$. Note that the fiber $p_2^{-1}(\beta)$ is isomorphic to the affine space $\bK$. Since $\beta\in p_2(f^{-1}(\cV))$, we have that $\cT_\beta$ is non-empty. Furthermore, $\cT_\beta$ is open by the openness of $f^{-1}(\cV)$.
\end{proof}

\begin{proposition}\label{prop: composition of quotient}
    Let $R=S/(U)$ and $V, Y\subseteq R$ be finite dimensional $\bK$-vector spaces such that $V\subseteq Y$. Let $\alpha\in \bK^{\dim(V)}$ and $\beta\in \bK^{\dim(Y)-\dim(V)}$. Let $\varphi_{V,\alpha}$ and $\varphi_{Y,(\alpha,\beta)}$ be compatible graded quotients with compatible bases $\cF_V\subseteq \cF_Y$. Let $Y'=\Kspan{\cF_Y\setminus \cF_V}$. We denote $Y'_\alpha=\varphi_{V,\alpha}(Y')\subseteq R'_\alpha$, where  $R'_\alpha:=R[y]/I_{V,\alpha}$. Let $Y_\alpha=\Kspan{y_1,Y'_\alpha}$. Then the following holds.
    \begin{enumerate}
        \item Let $B:\bN^d\rightarrow \bN^d$ be such that $B_k(\delta+\mu)\geq B_k(\delta)+\norm{\mu}_1$ for all $\delta,\mu \in \bN^d$. If $Y$ is $B$-lifted strong, then $Y'_\alpha$ is also $B$-lifted strong.
        \item Let $\gamma\in \bK^*$. For a general $\alpha$, we have a commutative diagram of $\bK$-algebra homomorphisms.

        \[\begin{tikzcd}
 R[y_1] \arrow[r, "\varphi_{V,\alpha}"] \arrow[d, "id"] & R'_\alpha\arrow[rr,"\varphi_{Y_\alpha,\gamma\cdot(1,\beta)}"] & & R'_\alpha[y_2]/I_{Y_\alpha,\gamma\cdot(1,\beta)} \arrow[d,"\simeq"]\\
 R[y_1] \arrow[rrr, "\varphi_{Y,(\alpha,\beta)}"] & & &   R[y_1]/I_{Y,(\alpha,\beta)} \\
\end{tikzcd}\] 
where the map $\varphi_{Y_\alpha,\gamma\cdot(1,\beta)}$ sends $y_1\mapsto \gamma  y_2$ and on $Y'_\alpha$ it is same as the twisted quotient $\varphi_{Y'_\alpha,\gamma\cdot \beta}$. Moreover, the vertical isomorphism sends $y_2\mapsto \frac{1}{\gamma} y_1$.
    \end{enumerate}
\end{proposition}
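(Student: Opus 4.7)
For part (1), the plan is to unwind the definition of lifted strength by passing to the polynomial ring $S[y_1]$. Writing $R'_\alpha = S[y_1]/(U_\alpha)$, where $U_\alpha := U + \Kspan{\widetilde F_i - \alpha_i y_1^{\deg F_i}}$ with $\widetilde F_i \in S$ a lift of $F_i \in \cF_V$, one first checks that the dimension sequences satisfy $\delta_{U_\alpha} + \delta_{Y'_\alpha} = \delta_U + \delta_Y$, so the target strength bound for $Y'_\alpha$ in each degree $k$ is $B_k(\delta_U + \delta_Y)$, the same as that enjoyed by $Y$. I then fix an arbitrary basis of $(Y'_\alpha)_k$ and an arbitrary lift $\widetilde g_j \in S[y_1]_k$ of each element. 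A direct expansion shows that any element $F$ of $U_{\alpha, k} + \Kspan{\widetilde g_j}$ decomposes as $F = F_{orig} + y_1 \cdot h$ with $F_{orig}$ lying in a $B$-strong vector space of $S$ obtained from the $B$-lifted strength hypothesis on $Y$, and $h \in S[y_1]_{k-1}$. Any $t$-collapse of $F$ in $S[y_1]$ yields a $(t+1)$-collapse of $F_{orig}$ (by incorporating the extra product $y_1 \cdot h$), so the strength lower bound on $F_{orig}$ descends to $F$ with at most a unit shift; the monotonicity hypothesis $B_k(\delta+\mu) \geq B_k(\delta) + \Vert\mu\Vert_1$ is what absorbs this shift and delivers the desired bound.

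For part (2), the verification is a direct computation on the algebra generators of $R[y_1]$. For $F_i \in \cF_V$, the map $\varphi_{V,\alpha}$ sends $F_i \mapsto \alpha_i y_1^{\deg F_i}$; the twisted quotient $\varphi_{Y_\alpha, \gamma\cdot(1,\beta)}$ then substitutes $y_1 \mapsto \gamma y_2$, producing $\alpha_i \gamma^{\deg F_i} y_2^{\deg F_i}$; finally the vertical isomorphism $y_2 \mapsto \gamma^{-1} y_1$ restores $\alpha_i y_1^{\deg F_i}$, which matches $\varphi_{Y,(\alpha,\beta)}(F_i)$. For $G_j \in \cF_Y \setminus \cF_V$, the element $\varphi_{V,\alpha}(G_j)$ lies in $Y'_\alpha$ and is sent by the twisted quotient to $\gamma^{\deg G_j} \beta_j y_2^{\deg G_j}$, which the vertical isomorphism maps to $\beta_j y_1^{\deg G_j} = \varphi_{Y,(\alpha,\beta)}(G_j)$. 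Since both routes around the diagram are $\bK$-algebra homomorphisms from $R[y_1]$ that agree on a generating set, they agree everywhere. The rightmost vertical arrow is well-defined as a $\bK$-algebra isomorphism because, under $y_2 \mapsto \gamma^{-1} y_1$, the ideal $I_{Y_\alpha, \gamma\cdot(1,\beta)}$ is carried precisely onto the image of $I_{Y,(\alpha,\beta)}$ in $R'_\alpha[y_2]$; for general $\alpha$, part (1) guarantees that $Y_\alpha$ is suitably strong so the quotient map $\varphi_{Y_\alpha, \gamma\cdot(1,\beta)}$ is itself a bona fide graded quotient as in \cref{definition: quotients}.

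The main obstacle is in part (1): carefully tracking how lifted strength transforms under the graded quotient when the lifts of the basis of $Y'_\alpha$ are allowed to be arbitrary elements of $S[y_1]$. The bookkeeping of the $y_1$-dependence in the lifts, together with the unit shift in collapse length arising from the $y_1^k$ contribution of the generators $\widetilde F_i - \alpha_i y_1^{\deg F_i}$, must be absorbed by the monotonicity hypothesis on $B$. Once this is established, part (2) is a clean diagram chase, since all the maps involved become well-defined graded quotients in the sense of \cref{definition: quotients} and agree on a generating set.
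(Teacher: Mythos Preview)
Your approach is essentially the same as the paper's, and for part (2) it is identical: the paper also verifies commutativity by tracking the generators $F_i$ and $G_j$ through both paths and writes down the isomorphism $y_2 \mapsto \gamma^{-1} y_1$ explicitly.

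For part (1) the overall shape also matches: both you and the paper write an arbitrary $F \in (U_\alpha)_k + \Kspan{\text{lifts}}$ as (element of the $B$-strong space coming from the hypothesis on $Y$) $+$ (junk with bounded collapse), and then invoke the monotonicity of $B$ to absorb the shift. The difference is in how the junk is identified and bounded. The paper writes $F = P + Q + T$ with $P \in U_k$, $Q$ in the span of lifts of $Y_k$, and $T \in I_{V,\alpha}$; since $I_{V,\alpha}$ is generated by $\norm{\delta_V}_1$ forms, this gives a shift of $\norm{\delta_V}_1$, and that is exactly what the hypothesis $B_k(\delta+\mu) \ge B_k(\delta) + \norm{\mu}_1$ (applied with $\mu = \delta_V$) is tailored to absorb. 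Your decomposition via the $y_1$-grading with a claimed unit shift is too optimistic for \emph{arbitrary} lifts $\widetilde g_j \in S[y_1]_k$: the constant term $\widetilde g_j|_{y_1=0}$ need not itself lie in $U_k + \Kspan{\text{lifts of }Y_k}$, because it can differ from a genuine $S$-lift of an element of $Y'_k$ by something in the ideal $(U,\widetilde F_i)_k$, which contributes additional collapse terms. So the correct shift is of order $\norm{\delta_V}_1$, not $1$; once you account for this (as the paper does by routing the junk through $I_{V,\alpha}$), your argument and the paper's coincide.
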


\begin{proof}
    (1) If $Y=\oplus_{k=1}^d Y_k$ is $B$-lifted strong in $R$,  then we have $\lsmin{U,Y_k}\geq B_k(\delta_U+\delta_Y)$ for any graded piece $Y_k$.  Let $Q_1,\cdots,Q_r\in S$ lifts of a basis of $Y_k$. Now $\delta_Y=\delta_V+\delta_{Y'}$ and hence $\lsmin{U,Y_k}\geq B_k(\delta_U+\delta_{Y'})+\norm{\delta_V}_1$. Let $H_1,\cdots,H_t\in S$ be lifts of a basis of a graded piece $(Y'_\alpha)_k\subseteq R'_\alpha$. Then, for any $F\in U_k+\Kspan{H_1,\cdots,H_t}$, we may write $F=P+Q+T$, where $P\in U_k$, $Q\in \Kspan{Q_1,\cdots,Q_r}$ and $T\in I_{V,\alpha}$. Therefore, \[s(F)\geq s(P+Q)-s(T)\geq \lsmin{U,Y_k}-\norm{\delta_V}_1\geq B_k(\delta_U+\delta_{Y'})\geq B_k(\delta_U+\delta_{Y'_\alpha})\] 
    as $s(T)\leq\norm{\delta_V}_1 $ and  $\delta_Y\geq \delta_{Y'_k}$. Since $F\in U_k+\Kspan{H_1,\cdots,H_t}$ was arbitrary, we get that $\lsmin{U,(Y'_\alpha)_k}\geq B_k(\delta_U+\delta_{Y'_\alpha})$, and hence $Y'_\alpha$ is $B$-lifted strong.
    
    (2) Let $\cF_V=\{F_1,\cdots,F_m\}$ and $\cF_Y=\{F_1,\cdots,F_m,G_1,\cdots,G_n\}$. For a general $\alpha$, we have that the images of $G_1,\cdots,G_n$ in $R'_\alpha$ are linearly independent, and hence they are a basis of  $Y'_\alpha$. Then composition of  the first two horizontal maps sends $F_i\mapsto \alpha_i(\gamma y_2)^{\deg(F_i)}$ and $G_j\mapsto \beta_j(\gamma y_2)^{\deg(G_j)}$. Therefore, we have an isomorphism $R'_\alpha[y_2]/I_{Y_\alpha,\gamma\cdot(1,\beta)} \simeq R[y_1,y_2]/I$, where $I$ is generated by the forms $y_1-\gamma y_2, F_i-\alpha_iy_1^{\deg(F_i)},G_j- \beta_j(\gamma y_2)^{\deg(G_j)}$ with all $i\in [m], j\in[n]$.
    Now we have an isomprphism $R[y_1,y_2]/(y_1-\gamma y_2)\xrightarrow{\sim} R[y_1]$ given by $y_2\mapsto \frac{1}{\gamma}y_1$. This isomorphism descends to an isomorphism $R[y_1,y_2]/I\simeq R[y_1]/I_{Y,(\alpha,\beta)}$. By composing these two isomorphisms, we get the desired vertical isomorphism $R'_\alpha[y_2]/I_{Y_\alpha,\gamma\cdot(1,\beta)}\xrightarrow{\sim} R[y_1]/I_{Y,(\alpha,\beta)}$.
\end{proof}

\subsection{Graded Quotients and Edelstein-Kelly Configurations}\label{subsection: quotients and EK configurations}

We now define the image of a $(d, z, R)$-EK configuration under a graded quotient map.
We also show that the image defined this way is itself an EK configuration.

\begin{definition}\label{definition: quotient ek configuration}
    Let $U \subset S$ be a graded finitely generated vector space such that $R := S / \ideal{U}$ is a CM UFD.
    Let $\cA, \cB, \cC$ be a $(d, z, R)$-EK configuration.
    Let $V \subset R_{\leq d}$ be a $h_{B} \circ t_{2}$-lifted strong vector space such that $z \in V$, and $\varphi_{\alpha}: R\bs{y} \to R\bs{y} / I_{\alpha}$ be a graded quotient.

    We define a configuration $\varphi_{\alpha}(\cA), \varphi_{\alpha}(\cB), \varphi_{\alpha}(\cC)$ as
    $$ \varphi_{\alpha}(\cA) := \setbuild{A'}{A \in \cA, A' |  \varphi_{\alpha}(A), A' \neq 1, (A', y) = 1, A' \textrm{ irreducible}},$$    
    $$ \varphi_{\alpha}(\cB) := \setbuild{B'}{B \in \cB, B' |  \varphi_{\alpha}(B), B' \neq 1, (B', y) = 1, B' \textrm{ irreducible}},$$   
    $$ \varphi_{\alpha}(\cC) := \setbuild{C'}{C \in \cC, C' |  \varphi_{\alpha}(C), C' \neq 1, (C', y) = 1, B' \textrm{ irreducible}}.$$   
    
\end{definition}

\begin{proposition}
    \label{prop: graded quot of ek}
    Let $U \subset S$ be a graded finitely generated vector space such that $R := S / \ideal{U}$ is a CM UFD.
    Let $\cA, \cB, \cC$ be a $(d, z, R)$-EK configuration.
    Let $V \subset R_{\leq d}$ be a $h_{B} \circ t_{2}$-lifted strong vector space such that $z \in V$, and $\varphi_{\alpha}: R\bs{y} \to R\bs{y} / I_{\alpha}$ be a graded quotient.
    Let $\varphi_{\alpha}(\cA), \varphi_{\alpha}(\cB), \varphi_{\alpha}(\cC)$ be the image of the EK configuration under the graded quotient as defined in \cref{definition: quotient ek configuration}.
    There is an open subset $\cU \subset \bK^{\dim V}$ such that for every $\alpha \in \cU$, the sets $\varphi_{\alpha}(\cA), \varphi_{\alpha}(\cB), \varphi_{\alpha}(\cC)$ are an $(d, y, R\bs{y}/I_{\alpha})$-EK configuration.
\end{proposition}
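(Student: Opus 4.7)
The plan is to verify the four conditions in \cref{def:ek} in turn, by collecting a finite list of genericity conditions on $\alpha$ and intersecting the corresponding open subsets of $\bK^{\dim V}$. Since $\cA,\cB,\cC$ are finite, the intersection is a single dense open $\cU$ on which every claim below holds. Choose a homogeneous basis of $V$ containing $z$, so that $\varphi_\alpha(z)=\alpha_z y$, and restrict $\cU$ so that: (i) $\alpha_z \neq 0$ and $\varphi_\alpha(F)\neq 0$ for every $F\in\{z\}\cup\cA\cup\cB\cup\cC$ (\cref{proposition: general quotient general}(2)); (ii) for every pair of distinct $F,G$ in $\{z\}\cup\cA\cup\cB\cup\cC$, which are non-associate irreducibles in the UFD $R$ and hence share no factor, $\gcd(\varphi_\alpha(F),\varphi_\alpha(G)) = y^k$ for some $k\geq 0$ (\cref{proposition: gcd after projection}(2)); and (iii) the repeated factors of each such $\varphi_\alpha(F)$ lie in $\bK[y]$ (\cref{proposition: gcd after projection}(3)). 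The hypothesis that $V$ is $h_B\circ t_2$-lifted strong is precisely what is needed to apply \cref{proposition: gcd after projection}, while \cref{proposition: graded quotient is UFD} guarantees that $R':=R[y]/I_\alpha$ is a Cohen-Macaulay UFD.

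With $\cU$ in hand, each image set consists of irreducible homogeneous elements of degree at most $d$ in $R'$; they are non-associate to $y$ by construction, and condition (ii) shows that the non-$y$ irreducible factors coming from distinct source forms in $\{z\}\cup\cA\cup\cB\cup\cC$ are themselves pairwise non-associate. This verifies condition (1) of \cref{def:ek}.

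For condition (2), fix $A'\in \varphi_\alpha(\cA)$ and $B'\in \varphi_\alpha(\cB)$ with $A'\mid \varphi_\alpha(A)$ and $B'\mid \varphi_\alpha(B)$ for some $A\in\cA,B\in\cB$. Applying the ring homomorphism $\varphi_\alpha$ to the relation $z\cdot \prod_{C\in\cC} C \in \radideal{A,B}$ yields
\[ \alpha_z\, y \cdot \prod_{C\in\cC} \varphi_\alpha(C) \;\in\; \radideal{\varphi_\alpha(A),\varphi_\alpha(B)} \;\subseteq\; \radideal{A',B'}. \]
Dividing by the unit $\alpha_z$ and writing $g := y \prod_{C\in\cC} \varphi_\alpha(C)$, condition (iii) expresses each $\varphi_\alpha(C)$, up to a unit, as a power of $y$ times the product of its distinct non-$y$ irreducible factors, namely the elements of $\varphi_\alpha(\cC)$ that divide it. Hence, up to a unit, $g = y^N \prod_{C'\in\varphi_\alpha(\cC)} (C')^{m_{C'}}$ with $N\geq 1$ and each $m_{C'}\geq 1$. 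In the UFD $R'$ the radical of a principal ideal is generated by the square-free part of its generator, so $\sqrt{(g)} = (y\prod_{C'\in\varphi_\alpha(\cC)} C')$, which gives $y\prod_{C'\in\varphi_\alpha(\cC)} C'\in \radideal{A',B'}$. Conditions (3) and (4) follow by the same argument after permuting the roles of $\cA,\cB,\cC$.

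The main subtlety is combining the two uses of genericity in a single open set: one to turn $\varphi_\alpha(z)$ into a unit multiple of $y$, so that $z$ can be replaced by $y$ in the EK relation, and the other to ensure that each $\varphi_\alpha(C)$ is square-free away from $y$, so that the square-free part of $g$ really is $y\prod_{C'\in\varphi_\alpha(\cC)} C'$. Both rely on $z\in V$ and on the lifted-strength hypothesis on $V$; once both are secured, the verification of the EK relation reduces to the standard fact that in a UFD, membership in the radical of a principal ideal is detected by the square-free part of its generator.
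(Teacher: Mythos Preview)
Your proof is correct and follows essentially the same approach as the paper: assemble finitely many genericity conditions from \cref{proposition: general quotient general} and \cref{proposition: gcd after projection}, use the $\gcd$-is-a-power-of-$y$ property to verify pairwise non-associateness (and hence disjointness of the image sets), and then push the radical EK relation through $\varphi_\alpha$, replacing the product over $\cC$ by its square-free part. Your explicit appeal to the fact that in a UFD $\sqrt{(g)}$ is generated by the square-free part of $g$ makes the last step slightly more transparent than the paper's one-line ``therefore we have,'' but the content is identical.
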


\begin{proof}
    Let $\cF := \cA \cup \cB \cup \cC$ and $\varphi_{\alpha}(\cF) := \varphi_{\alpha}(\cA) \cup \varphi_{\alpha}(\cB) \cup \varphi_{\alpha}(\cC)$.
    We start by verifying the first condition for the configuration to be an EK configuration, which is equivalent to $\varphi_{\alpha}(\cF) \cup \bc{y}$ consisting of pairwise non associate irreducible forms.

    By construction, the forms in $\varphi_{\alpha}(\cF) \cup \bc{y}$ are irreducible.
    By item 3 of \cref{proposition: gcd after projection}, for any $F$ and $F'$ such that $\varphi_{\alpha}(F) = y^{r} F'$, we have that $F'$ is squarefree.
    Suppose $F', G' \in \varphi_{\alpha}(\cF) \setminus \bc{y}$ are irreducible factors of $\varphi_{\alpha}(F), \varphi_{\alpha}(G)$ respectively, for $F, G \in \cF$.
    By item 2 of \cref{proposition: gcd after projection} we have $(\varphi_{\alpha}(F), \varphi_{\alpha}(G)) = y^{t}$ for some $t$.
    Since $(F', G') | (\varphi_{\alpha}(F), \varphi_{\alpha}(G))$, and since $(y, F') = (y, G') = 1$, we deduce that $(F', G') = 1$.
    Therefore the forms in $\varphi_{\alpha}(\cF) \cup \bc{y}$ are pairwise non associate.

    Suppose $A' \in \varphi_{\alpha}(\cA), B' \in \varphi_{\alpha}(\cB)$ are irreducible factors of $\varphi_{\alpha}(A), \varphi_{\alpha}(B)$ respectively, for $A \in \cA, B \in \cB$.
    We have
    $$z \cdot \prod_{C \in \cC} C \in \radideal{A, B}.$$
    Applying the map $\varphi_{\alpha}$, we obtain
    $$y \cdot \prod_{C\in \cC} y^{t_{C}} \varphi_{\alpha}(C) \in \radideal{\varphi_{\alpha}(A), \varphi_{\alpha}(B)} \subset \radideal{A', B'}.$$
    For each such $C$, we either have $\varphi_{\alpha}(C) = y^{t_{C}}$ for some $t_{C}$, or $C' | \varphi_{\alpha}(C)$ for some $C' \in \varphi_{\alpha}(\cC)$.
    Therefore we have
    $$y \cdot \prod_{C' \in \varphi_{\alpha}(\cC)} C' \in \radideal{A', B'}.$$
    This shows the second condition for EK configurations, and a symmetric argument shows the third and fourth conditions.
\end{proof}

\subsection{Potential function}\label{subsection: potential function}

In this section, we define a potential function that will be useful in controlling EK configurations.
Informally, the potential function measures how close a form is to a given vector space.
It will be convenient to define the potential so that the closer the form is to the space, the higher the potential is.
Our main arguments will therefore involve steps that increase the potential by a sufficient amount.

Let $U \subseteq S$ be a graded finitely generated vector space such that $R := S / \ideal{U}$ is a CM UFD.
Let $V \subseteq R$ be a $h_{2B} \circ t_{1}$-lifted strong vector space and $\alpha \in \bK^{\dim(V)}$.
The quotient map $\varphi_{V,\alpha}: R[y] \rightarrow R[y] / I_{V,\alpha}$ is defined in \cref{sec:generalquot}. 

\emph{Individual potential.} For a form $F \in R$, let $\varphi_{V,\alpha}(F)=y^{r} \prod_{i=1}^{s}F_{i}^{r_i}$ be an irreducible factorization.
We define the potential of $F$ with respect to $V$ and $\alpha$ as 
\[ \psi_{V,\alpha}(F)=2r+s. \]

 \begin{proposition}
[Potential of product] \label{prop: valuative property}For any two forms $F,G\in R$. Suppose that $\varphi_{V,\alpha}(F),\varphi_{V,\alpha}(G)$ have $c$ number of common irreducible factors other than $y$. Then
\[\Psi_{V,\alpha}(FG)=\Psi_{V,\alpha}(F)+\Psi_{V,\alpha}(G)-c.\]
In particular, 
\[\Psi_{V,\alpha}(FG)\leq \Psi_{V,\alpha}(F)+\Psi_{V,\alpha}(G).\]
Moreover, equality occurs iff $\gcd(F,G)=y^t$ for some $t\geq 0$.
 \end{proposition}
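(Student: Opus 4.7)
The plan is to verify the formula by a direct computation on the unique factorizations of $\varphi_{V,\alpha}(F)$ and $\varphi_{V,\alpha}(G)$, and then read off both the inequality and the equality condition as immediate corollaries. The key ingredient I want to invoke is \cref{proposition: graded quotient is UFD}, which tells us that the target ring $R[y]/I_{V,\alpha}$ is a UFD, so irreducible factorizations are well-defined up to units and multiplicity, and the count of irreducible factors is intrinsic.

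First, I would fix irreducible factorizations
\[ \varphi_{V,\alpha}(F) = y^{r} \prod_{i=1}^{s} F_i^{a_i}, \qquad \varphi_{V,\alpha}(G) = y^{r'} \prod_{j=1}^{s'} G_j^{b_j}, \]
where the $F_i$ and $G_j$ are irreducible and non-associate to $y$. By definition, $\Psi_{V,\alpha}(F) = 2r + s$ and $\Psi_{V,\alpha}(G) = 2r' + s'$. Since $\varphi_{V,\alpha}$ is a ring homomorphism, we have $\varphi_{V,\alpha}(FG) = \varphi_{V,\alpha}(F) \cdot \varphi_{V,\alpha}(G)$, and so, using uniqueness of factorization, the multiplicity of $y$ in $\varphi_{V,\alpha}(FG)$ is $r + r'$, while the set of distinct irreducible factors of $\varphi_{V,\alpha}(FG)$ other than $y$ is exactly the union $\{F_i\} \cup \{G_j\}$ taken up to association. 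If $c$ denotes the number of shared irreducible factors (other than $y$), then the union has cardinality $s + s' - c$.

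Plugging into the definition of $\Psi_{V,\alpha}$ gives
\[ \Psi_{V,\alpha}(FG) = 2(r + r') + (s + s' - c) = \Psi_{V,\alpha}(F) + \Psi_{V,\alpha}(G) - c, \]
which is the main identity. Since $c \geq 0$, this immediately yields the inequality $\Psi_{V,\alpha}(FG) \leq \Psi_{V,\alpha}(F) + \Psi_{V,\alpha}(G)$. For the equality clause, the inequality is tight precisely when $c = 0$, i.e. when $\varphi_{V,\alpha}(F)$ and $\varphi_{V,\alpha}(G)$ share no irreducible factor other than $y$; in the UFD $R[y]/I_{V,\alpha}$ this is equivalent to $\gcd(\varphi_{V,\alpha}(F), \varphi_{V,\alpha}(G)) = y^{t}$ for some $t \geq 0$ (which is the intended reading of the ``$\gcd(F,G) = y^{t}$'' in the statement).

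There is no substantive obstacle here: the proof is essentially a bookkeeping argument once uniqueness of factorization in $R[y]/I_{V,\alpha}$ is in hand. The only minor subtlety worth flagging is that $y$ itself must be checked to be a prime (equivalently irreducible) element of $R[y]/I_{V,\alpha}$, so that segregating the $y$-part from the rest of the factorization is legitimate; this is contained in item (1) of \cref{proposition: general quotient general}, which asserts that $I_{V,\alpha}$ is prime and that $\bK[y] \hookrightarrow R[y]/I_{V,\alpha}$, so $y$ indeed generates a prime ideal in the quotient.
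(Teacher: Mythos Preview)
Your proposal is correct and follows essentially the same approach as the paper: write the irreducible factorizations of $\varphi_{V,\alpha}(F)$ and $\varphi_{V,\alpha}(G)$, multiply, count distinct non-$y$ irreducible factors via inclusion--exclusion, and read off the identity, the inequality, and the equality case $c=0$. Your version is slightly more careful in invoking the UFD property and the primality of $y$, but the argument is the same.
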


 \begin{proof}
     Suppose we have the irreducible factorizations $\varphi_{V,\alpha}(F)=y^{a} \prod_{i=1}^{s}F_{i}^{d_i}$ and $\varphi_{V,\alpha}(G)=y^{b} \prod_{i=1}^{t}G_{i}^{e_i}$.  Since $\varphi_{V,\alpha}(FG)=\varphi_{V,\alpha}(F)\varphi_{V,\alpha}(G)$, we have
     \[\Psi_{V,\alpha}(FG)=2(a+b)+s+t-c\leq 2a+s+2b+t=\Psi_{V,\alpha}(F)+\Psi_{V,\alpha}(G).\]
     Moreover, we have equality iff $c=0$.
 \end{proof}
 
\emph{Group potential.} For a finite set of forms $\cG \subseteq R$, we define the potential of $\cG$ with respect to the pair $(V,\alpha)$ as 
\[\Phi_{V,\alpha}(\cG)=\sum_{G\in\cG}\psi_{V,\alpha}(G).\]

\begin{remark}\label{remark: group potential bound}
    Note that if $\cG\subseteq R_{\leq d}$, then $\Phi_{V,\alpha}(\cB)\leq 2d|\cB|$ for any $(V,\alpha)$.
\end{remark}

\emph{Potential increasing forms.} Let $V\subseteq Y$ be both $h_{2B}\circ t_1$-lifted strong vector spaces in $R$. Let $\alpha\in \bK^{\dim(V)}$ and $\beta\in \bK^{\dim(Y)-\dim(V)}$. We say that $F\in \cG$ is a \textit{potential increasing} form with respect to the pairs $(V,\alpha)$ and $(Y,(\alpha,\beta))$ if 
\[\Psi_{Y,(\alpha,\beta)}(F)> \Psi_{V,\alpha}(F).\]

The notion of potential increasing captures the property that a form factors more when quotiented with respect to $Y$ than $V$. First we note that for general quotients potential does not decrease.

\begin{proposition}[Non-decreasing property]\label{prop: potential non-decreasing}
    Let $V\subseteq Y$ be both $h_{2B}\circ t_1$-lifted strong vector spaces in $R$ of dimensions $m$ and $m+n$ respectively. Let $\varphi_{V,\alpha}$ and $\varphi_{Y,(\alpha,\beta)}$ be compatible graded quotients with respect to bases $\cF_V\subseteq \cF_Y$. Let $Y'=\Kspan{\cF_Y\setminus \cF_V}$, $R'_{\alpha}=R[y_1]/I_{V,\alpha}$, $Y'_\alpha=\varphi_{V,\alpha}(Y')$ and $Y_\alpha=\Kspan{y_1,Y'_\alpha}$.
    
    Fix a square-free form $F\in R$. For a general $\alpha\in \bK^m$, there exists a non-empty open subset $\cU\subseteq \bK^n$ (depending on $\alpha$) such that for all $\beta\in \cU$ the following holds.    \begin{enumerate}
        \item We have $\Psi_{Y,(\alpha,\beta)}(F)\geq \Psi_{V,\alpha}(F).$
        \item Moreover, $\Psi_{Y,(\alpha,\beta)}(F)= \Psi_{V,\alpha}(F)$ iff one of the following holds
        \begin{enumerate}
            \item $F\in \bK[V]$.
            \item Let $F_i$ be an irreducible factor of  $\varphi_{V,\alpha}(F)$ such that $F_i\not\in (y_1)$. Then its image $\varphi_{Y_\alpha,\gamma\cdot(1,\beta)}(F_i)$ is irreducible and  $\varphi_{Y_\alpha,\gamma\cdot(1,\beta)}(F_i)\not \in (y_2)$ for general $\gamma \in \bK$. 
        \end{enumerate}
    \end{enumerate}
    
\end{proposition}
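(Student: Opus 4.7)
The overall plan is to reduce the comparison of $\Psi_{Y,(\alpha,\beta)}(F)$ and $\Psi_{V,\alpha}(F)$ to a factor-counting analysis, using the compositional decomposition of graded quotients from \cref{prop: composition of quotient}. For general $\alpha$ that proposition realizes $\varphi_{Y,(\alpha,\beta)}$ --- up to the scalar isomorphism $y_2 \mapsto \tfrac{1}{\gamma}y_1$ --- as $\varphi_{Y_\alpha,\gamma\cdot(1,\beta)} \circ \varphi_{V,\alpha}$. Since that isomorphism only rescales the distinguished variable, it preserves irreducible factorizations (counts and multiplicities), so $\Psi_{Y,(\alpha,\beta)}(F) = \Psi_{Y_\alpha,\gamma\cdot(1,\beta)}(\varphi_{V,\alpha}(F))$ for generic $\alpha$ and generic $(\gamma,\beta)$. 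Turning this into the statement's quantifier ordering (general $\alpha$, then $\beta$ in an open $\cU_\alpha \subseteq \bK^n$, with condition (b) a ``for general $\gamma$'' statement) will use \cref{proposition: general twists} to project out the auxiliary parameter $\gamma$.

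Next, I will write $\varphi_{V,\alpha}(F) = c\, y_1^{r}\prod_{i=1}^{s} F_i$ as the irreducible factorization; because $F$ is squarefree, item 3 of \cref{proposition: gcd after projection} forces the non-$y_1$ factors $F_i$ to appear with multiplicity one, giving $\Psi_{V,\alpha}(F) = 2r + s$. Each $F_i$ is irreducible (hence squarefree) in $R'_\alpha$, so another application of item 3 of \cref{proposition: gcd after projection} to $F_i$ gives $\varphi_{Y_\alpha,\gamma\cdot(1,\beta)}(F_i) = c_i\, y_2^{a_i}\prod_{j=1}^{s_i} H_{ij}$ with the $H_{ij}$ distinct irreducibles outside $(y_2)$. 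Because the $F_i$ are pairwise non-associate irreducibles, item 2 of \cref{proposition: gcd after projection} forces $\gcd(\varphi(F_i),\varphi(F_{i'})) \in (y_2)$ for $i \neq i'$, so the collections $\{H_{ij}\}_j$ across different $i$ are disjoint. Combined with $\varphi_{Y_\alpha,\gamma\cdot(1,\beta)}(y_1) = \gamma y_2$, this yields a single irreducible factorization of $\varphi_{Y_\alpha,\gamma\cdot(1,\beta)}(\varphi_{V,\alpha}(F))$ and the identity
\[
    \Psi_{Y,(\alpha,\beta)}(F) - \Psi_{V,\alpha}(F) \;=\; \sum_{i=1}^{s}\bigl(2a_i + s_i - 1\bigr).
\]
By item 2 of \cref{proposition: general quotient general}, each $\varphi_{Y_\alpha,\gamma\cdot(1,\beta)}(F_i)$ is nonzero for generic parameters, so $2a_i + s_i \geq 1$ for every $i$, proving part (1).

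Part (2) then reads off the same identity: equality forces $2a_i + s_i = 1$ for each $i$, equivalently $a_i = 0$ and $s_i = 1$, which is precisely the statement that each $\varphi_{Y_\alpha,\gamma\cdot(1,\beta)}(F_i)$ is irreducible and not in $(y_2)$ --- this is condition (b); the converse is immediate from the formula. Condition (a) is the degenerate case $s = 0$: when $F \in \bK[V]$ the difference is an empty sum and equality is trivial, and conversely item 3 of \cref{proposition: general quotient general} shows that $s = 0$ at a generic $\alpha$ characterizes $F \in \bK[V]$. The main technical obstacle I anticipate is the genericity bookkeeping: compatibility of the compositional decomposition, linear independence of the $G_j$ images in $R'_\alpha$, non-vanishing of each $\varphi(F_i)$, and the squarefree/gcd assertions each impose an open condition on $(\alpha,\beta,\gamma)$; consolidating these into a single open subset $\cU_\alpha \subseteq \bK^n$ depending only on $\alpha$, with condition (b) becoming a clean ``for general $\gamma$'' assertion, requires a careful application of \cref{proposition: general twists} to project out $\gamma$ while preserving all the open conditions.
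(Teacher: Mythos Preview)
Your proposal is correct and follows essentially the same route as the paper's proof: decompose $\varphi_{Y,(\alpha,\beta)}$ via \cref{prop: composition of quotient}, factor $\varphi_{V,\alpha}(F)$, push each irreducible factor through the second quotient, use \cref{proposition: gcd after projection} to guarantee squarefreeness and disjointness of the resulting non-$y_2$ factors, and read off the identity $\Psi_{Y,(\alpha,\beta)}(F)-\Psi_{V,\alpha}(F)=\sum_i(2a_i+s_i-1)$ together with \cref{proposition: general twists} for the genericity bookkeeping. The only cosmetic difference is that you invoke squarefreeness of $F$ at the outset to set all multiplicities $r_i=1$, whereas the paper carries the $r_i$ through the computation; both are fine.
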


\begin{proof}
     Note that the quotient map $\varphi_{Y,(\alpha,\beta)}:R\rightarrow R[y_2]/I_{Y,(\alpha,\beta)}$ factors, up to isomorphism, as $\varphi_{Y,(\alpha,\beta)}=\varphi_{Y_\alpha,\gamma\cdot(1,\beta)}\circ\varphi_{V,\alpha}$ by \cref{prop: composition of quotient}. Suppose we have the irreducible factorization  \[\varphi_{V,\alpha}(F)=y_1^{r} \prod_{i=1}^{s}F_{i}^{r_i}.\] Then we have \[\varphi_{Y,(\alpha,\beta)}(F)=y_1^{r} \prod_{i=1}^{s}\varphi_{Y_\alpha,\gamma\cdot(1,\beta)}(F_{i})^{r_i}.\]

    If $s=0$, then $\varphi_{V,\alpha}(F)$ is a power of $y_1$ up to scalar multiples. By \cref{proposition: general quotient general}, we have $V\in \bK[V]$. In particular, $\Psi_{Y,(\alpha,\beta)}(F)= \Psi_{V,\alpha}(F)=2\deg(F)$.

    Suppose that $s>0$. By \cref{proposition: gcd after projection}, there exists a non-empty open subset $\cV\subseteq \bK^{n+1}$ such that for all $\gamma\cdot (1,\beta)\in \cV$, the forms $\varphi_{Y_\alpha,\gamma\cdot(1,\beta)}(F_i)$ and $\varphi_{Y_\alpha,\gamma\cdot(1,\beta)}(F_j)$ do not have any common factors other than $y_2$ for $i\neq j$. We let $\cU$ be the image of $f^{-1}(\cV)$ under the projection  $\bK\times \bK^n\rightarrow \bK^n$, where $f(\gamma,\beta)=\gamma\cdot(1,\beta)$ as in \cref{proposition: general twists}. Let $G_i=\varphi_{Y_\alpha,\gamma\cdot(1,\beta)}(F_i)$. We have an irreducible factorization $G_i=y_2^{a_i} \prod_{i=1}^{s_i}H_{i}$. Note that since $G_i$ is irreducible, the forms $H_i$ appear with multiplicity $1$ by \cref{proposition: gcd after projection}. Now, using the isomorphism in \cref{proposition: general twists}, we have
    \[\Psi_{Y,(\alpha,\beta)}(F)=2r+\sum_{i=1}^s 2r_ia_i+\sum_{i=1}^s s_i\geq 2r+s=\Psi_{V,\alpha}(F).\]
    Moreover, equality occurs iff $a_i=0$ and $s_i=1$ for all $i\in [s]$. In other words $G_i$ is irreducible and $G_i$ not a scalar multiple of $y_2$.
\end{proof}

We note the following criterion for when the potential strictly increases. 

\begin{lemma}[Potential increasing criterion]\label{lemma: abs red and potential increase}
    Let $V\subseteq Y$ be both $h_{2B}\circ t_1$-lifted strong vector spaces in $R$ of dimensions $m$ and $m+n$ respectively. Let $\varphi_{V,\alpha}$ and $\varphi_{Y,(\alpha,\beta)}$ be compatible graded quotients with respect to bases $\cF_V\subseteq \cF_Y$. Let $Y'=\Kspan{\cF_Y\setminus \cF_V}$, $R'_{\alpha}=R[y_1]/I_{V,\alpha}$, $Y'_\alpha=\varphi_{V,\alpha}(Y')$ and $Y_\alpha=\Kspan{y_1,Y'_\alpha}$.
    
    Let $F\in R$ be a square-free form. Let $\alpha\in \bK^m$ be a general element. Suppose that there exists a factor $F_k$ of $\varphi_{V,\alpha}(F)$ in $R'_\alpha$, such that
    \begin{enumerate}
        \item $F_k$ is not a scalar multiple of $y_1$, and
        \item $F_k$ is absolutely reducible with respect to $Y_\alpha$ or $F_k$ is in the ideal $(Y_\alpha)$ in $R'_\alpha$.
    \end{enumerate}

    Then there exists a non-empty open subset $\cU\subseteq \bK^n$ (depending on $\alpha$) such that for all $\beta\in \cU$, the form $F$ is potential increasing with respect to the pairs $(V,\alpha)$ and $(Y,(\alpha,\beta))$.
\end{lemma}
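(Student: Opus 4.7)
The plan is to decompose $\varphi_{Y,(\alpha,\beta)}$ as a composition of $\varphi_{V,\alpha}$ followed by a twisted quotient (via \cref{prop: composition of quotient}), and then read off how the potential changes factor-by-factor under the second quotient. The hypothesis on $F_k$ will force that one distinguished factor of $\varphi_{V,\alpha}(F)$ contributes strictly positively to the potential difference, while all remaining factors contribute nonnegatively.

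First I would fix a general $\alpha$. Since $F$ is squarefree and $V$ is $h_{2B}\circ t_1$-lifted strong, \cref{proposition: gcd after projection}(3) gives an irreducible factorization $\varphi_{V,\alpha}(F) = y_1^{r}\prod_{i=1}^{s} F_i$ in $R'_\alpha$, where the $F_i$ are pairwise distinct irreducibles not associate to $y_1$ and one of them is $F_k$; in particular $\Psi_{V,\alpha}(F) = 2r+s$. Next, by \cref{prop: composition of quotient} together with \cref{proposition: general twists}, for general $\beta$ there is a non-empty open set $\cT_\beta\subseteq \bK^*$ such that, for any $\gamma\in\cT_\beta$, $\varphi_{Y,(\alpha,\beta)}$ is isomorphic to $\varphi_{Y_\alpha,\gamma\cdot(1,\beta)}\circ\varphi_{V,\alpha}$, and the conclusions of \cref{proposition: general quotient general} and \cref{proposition: gcd after projection} apply to the second quotient (using that $Y_\alpha$ is $h_{2B}\circ t_1$-lifted strong by \cref{prop: composition of quotient}(1)). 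Setting $G_i := \varphi_{Y_\alpha,\gamma\cdot(1,\beta)}(F_i)$, I obtain squarefree factorizations $G_i = y_2^{a_i}\prod_{j=1}^{s_i} H_{ij}$ with the $H_{ij}$ distinct and not associate to $y_2$, and $\gcd(G_i,G_j)\in\bK[y_2]$ for $i\neq j$. These combine to give
\[
\Psi_{Y,(\alpha,\beta)}(F) - \Psi_{V,\alpha}(F) \;=\; 2\sum_{i=1}^s a_i \;+\; \sum_{i=1}^s (s_i-1).
\]

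I would then show this sum is strictly positive by a term-by-term analysis. The $i$-term $2a_i+(s_i-1)$ is $\geq 0$ in general: if $s_i\geq 1$ this is immediate, while if $s_i=0$ then $G_i$ is a nonzero power of $y_2$, forcing $a_i\geq 1$ so the term is $\geq 1$. For the distinguished index $k$, both hypotheses force the $k$-term to be $\geq 1$. If $F_k\in(Y_\alpha)$ in $R'_\alpha$, then since each basis element of $Y_\alpha$ maps into $(y_2)$ under $\varphi_{Y_\alpha,\gamma\cdot(1,\beta)}$, we get $G_k\in(y_2)$, hence $a_k\geq 1$. If instead $F_k$ is absolutely reducible over $Y_\alpha$, then by \cref{proposition: general quotient general}(5) the image $G_k$ is reducible; combined with squarefreeness outside the $y_2$-part, this forces $a_k+s_k\geq 2$ (and $a_k\geq 2$ when $s_k=0$), giving the $k$-term $\geq 1$ in every subcase. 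Hence $F$ is potential-increasing.

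The main obstacle I expect is the bookkeeping of the generality conditions: the decomposition via \cref{prop: composition of quotient} is parametrized by a twist $\gamma\in\bK^*$, and all the ``general $\alpha$ / general coefficient'' hypotheses of \cref{proposition: general quotient general} and \cref{proposition: gcd after projection} must be simultaneously satisfied for some such $\gamma$. This is handled by intersecting finitely many open sets in $(\gamma,\beta)$-space and projecting to $\bK^n$ via \cref{proposition: general twists} to extract the open set $\cU\subseteq\bK^n$ of admissible $\beta$; everything else is careful accounting of the factorization of $\varphi_{V,\alpha}(F)$ under the twisted quotient.
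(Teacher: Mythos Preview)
Your proposal is correct and follows essentially the same approach as the paper: decompose $\varphi_{Y,(\alpha,\beta)}$ via \cref{prop: composition of quotient}, track the factorization of $\varphi_{V,\alpha}(F)$ under the twisted quotient, and use \cref{proposition: general twists} to produce the open set $\cU$. The only structural difference is that the paper cites \cref{prop: potential non-decreasing} as a black box (using its equality criterion to conclude strictness once $F_k$ is shown to map to something reducible or a multiple of $y_2$), whereas you inline that computation by writing down the explicit formula $\Psi_{Y,(\alpha,\beta)}(F)-\Psi_{V,\alpha}(F)=2\sum a_i+\sum(s_i-1)$ and analyzing it term-by-term; this is exactly the calculation inside the proof of \cref{prop: potential non-decreasing}, so the two arguments are the same at the level of content.
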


\begin{proof}
    Suppose that we have factorization $\varphi_{V,\alpha}(F)=y_1^{r} \prod_{i=1}^{s}F_{i}^{r_i}$ in $R'_\alpha$.  Note that the quotient map $\varphi_{Y,(\alpha,\beta)}:R\rightarrow R[y_2]/I_{Y,(\alpha,\beta)}$ factors as $\varphi_{Y,(\alpha,\beta)}=\varphi_{Y_\alpha,\gamma\cdot(1,\beta)}\circ\varphi_{V,\alpha}$ by \cref{prop: composition of quotient}. Let $\cU\subseteq \bK^n$ be the non-empty open subset provided by $\cref{prop: potential non-decreasing}$. 
    
    Let $F_k$ be a factor of $\varphi_{V,\alpha}(F)$ with properties $(1)$ and $(2)$ in the hypothesis. In particular, $F_k$ is not a scalar multiple of $y_1$. Therefore $F\not \in \bK[V]$ by \cref{proposition: general quotient general}.
    
    \emph{Case 1.} Suppose that $F_k\in (Y_\alpha)$. Then for any $\gamma,\beta$, the form $\varphi_{Y_\alpha,\gamma\cdot(1,\beta)}(F_k)$ is divisible by $y_2$. Therefore $\varphi_{Y_\alpha,\gamma\cdot(1,\beta)}(F_k)$ is reducible if  $\deg(\varphi_{Y_\alpha,\gamma\cdot(1,\beta)}(F_k))>1$. If $\deg(\varphi_{Y_\alpha,\gamma\cdot(1,\beta)}(F_k))=1$, then it is a scalar multiple of $y_2$.

    \emph{Case 2.} Now suppose that $F_k$ is absolutely reducible with respect to $Y_\alpha$. Then by \cref{proposition: general quotient general}, there exists  a non-empty open subset $\cV\subseteq \bK^{n+1}$ such that $\varphi_{Y_\alpha,\gamma\cdot(1,\beta)}(F_k)$ is a reducible form for $\gamma\cdot(1,\beta)\in \bK^{n+1}$. We let $\cU_2$ to be the image of $f^{-1}(\cV)$ under the projection  $\bK\times \bK^n\rightarrow \bK^n$, where $f(\gamma,\beta)=\gamma\cdot(1,\beta)$ as in \cref{proposition: general twists}.
    
    Therefore, by combining both cases above,  there exists a non-empty open subset $\cU':=\cU_2\cap \cU$ such that for all $\beta\in \cU'$, the form $\varphi_{Y_\alpha,\gamma\cdot(1,\beta)}(F_k)$ is either reducible or it is a scalar multiple of $y_2$, for a general $\gamma$ depending on $\beta$. Then \cref{prop: potential non-decreasing} implies that we must have a strict inequality $\Psi_{Y,(\alpha,\beta)}(F)> \Psi_{V,\alpha}(F)$ for all $\beta\in\cU'$.
\end{proof}

\section{Uniform bounds on Edelstein-Kelly type configurations}\label{section: EK theorem}

In this section, we show that higher degree EK configurations have bounded rank.
As discussed earlier, our approach is to induct on the degree of the forms in the configuration.
The base case is a configuration where all the forms are linear.
We show that such configurations are essentially $(1, 1)$-linear-EK configurations, therefore they have bounded rank by \cref{prop:linearek}.
In the inductive step, we control all the forms of highest degree in the configuration, and then apply a graded quotient to reduce to configurations where the highest degree forms have degree one lower than what we started with.

In the first subsection, we focus on controlling the highest degree forms in EK-configurations.
The second subsection will then formalize the above inductive argument.
For the rest of this section, fix positive integer $\eta \geq 3$ and set $\varepsilon := 1/16$.
We will use the auxiliary functions $H, D$ defined in \cref{sec:strongalg}.

\subsection{Controlling highest degree forms}

We begin by establishing some notation specific to this subsection.

\begin{notation}
    Fix an integer $e$.
Let $\strong: \bN^{e} \to \bN^{e}$ be an ascending function such that $\strong_{i}(\delta) \geq A(\eta, i) + 3 \norm{\delta}_{1}$, where $A$ is the function defined in \cite[Theorem~A]{AH20}.
Let $U \subseteq   S_{\leq e}$ be a graded finitely generated vector space such that $R := S / \ideal{U}$ is a UFD, and $z \in R_{1}$.
Let $\cA, \cB, \cC \subset R_{\leq d}$ be sets that form a $(d, z, R)$-EK configuration (\cref{def:ek}) for $2 \leq d \leq e$.
Let $\cF := \cA \cup \cB \cup \cC$.

Suppose $V \subseteq R_{\leq d}$ is a $h_{2 \strong} \circ t_{1}$-lifted strong vector space.
Let $\cA_{d}^{V}$ denote the set of forms in $\cA_{d}$ which are not in $(V)$ and are absolutely irreducible with respect to $V$.

\[\cA_d^V:=\{F\in \cA_d\mid F\not\in (V) \text{ and } F \text{ absolutely irreducible with respect to } V\}.\]
We denote $m^{V}_{a}:=\abs{\cA_{d}^{V}}$.
We use $M^{V}_{a}$ to denote $\abs{\cA} - m^{V}_{a}$, this is the number of forms in $\cA$ that either have degree less than $d$, or are absolutely reducible over $V$, or are in the ideal generated by $V$.

In particular, if we take $V=(0)$, then $m^{\br{0}}_{a}, M^{\br{0}}_{a}$ are the sizes of the degree $d$ and degree less than $d$ pieces of $\cA$ respectively.
We similarly define $m_{b}, M_{b}, m_{c}, M_{c}$.
We will assume that $\abs{\cA} \geq \abs{\cB} \geq \abs{\cC}$.

\end{notation}

Our goal in this section is to construct a vector space $W$ such that all the degree $d$ forms in $\cA,\cB,\cC$ are absolutely reducible over $W$ or in the ideal $(W)$, i.e. $m_a^W=m_b^W=m_c^W=0$. First, we show that if the algebra $\bK[V]$ contains a sufficient fraction of forms, then we may find $W$ such that at least one of $m_a^W,m_b^W,m_c^W$ is $0$.

\begin{lemma} \label{lem: ek algebra control}
    Suppose $\abs{\cC} \geq 2 \irredprimebound{d} \varepsilon^{-1}$.
    Further suppose $V$ is a $H(\strong, 1, 2 \cdot d \cdot \varepsilon^{-1})$-lifted strong vector space such that one of the following holds.
    \begin{itemize}
        \item $\abs{\bK\bs{V} \cap \cA} \geq \varepsilon \abs{\cA}$ or 
        \item $\abs{\bK\bs{V} \cap \cB} \geq \varepsilon \abs{\cB}$ or
        \item $\abs{\cC} \geq \abs{\cB} / 2$ and $\abs{\bK\bs{V} \cap \cC} \geq \varepsilon \abs{\cC}$.
    \end{itemize}
    Then there is a $\strong$-lifted strong vector space $W$ obtained by applying a $(1, 4 \cdot d \cdot \varepsilon^{-1})$ process to $V$ such that $m_{b}^{W}\cdot m_{b}^{W}\cdot m_{c}^{W}=0$, i.e. at least one of the sets $\cA,\cB,\cC$ have no degree $d$ forms absolutely irreducible over $W$.
\end{lemma}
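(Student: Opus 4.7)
The plan is to iteratively apply a $(1,\cdot)$-process to $V$, adding a single carefully chosen form per round, and use the potential function $\Phi(\cX):=\sum_{X\in \cX}\psi_{V,\alpha}(X)$ (for a general $\alpha$) as a termination certificate. I describe Case~1, where $\abs{\bK[V]\cap \cA}\geq \varepsilon \abs{\cA}$; the remaining two cases follow by analogous arguments, and in Case~3 the hypothesis $\abs{\cC}\geq \abs{\cB}/2$ is exactly what ensures the per-round potential gain remains a nontrivial fraction of the maximum potential $2d\abs{\cB}$. A short pigeonhole by degree lets me assume the large subset of $\cA$ lies in $\cA_d$.

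At each round, assume $m_a^V\cdot m_b^V\cdot m_c^V>0$ and choose any $B\in \cB_d^V$. Let $\cA':=\bK[V]\cap \cA_d$. By \cref{cor:strongprimebound} applied to $B$ (absolutely irreducible over $V$), at most $\primebound{d}$ forms $A\in \cA'$ give a non-prime ideal $(A,B)$. For each of the remaining at least $\varepsilon \abs{\cA}-\primebound{d}$ forms, the Edelstein--Kelly relation
\[
z\cdot\prod_{C\in \cC}C\in \radideal{A,B}=(A,B),
\]
combined with $z\notin (A,B)$ (a degree computation using $d\geq 2$), produces some $C_A\in \cC$ with $C_A\in (A,B)$. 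Because $A,B,C_A$ all have degree $d$, this containment is the linear relation $C_A=f_AA+g_AB$ for scalars $f_A,g_A\in \bK$, and since $A\in \bK[V]\subseteq (V)$ this reduces to the congruence $C_A\equiv g_A B\pmod{(V)}$.

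The combinatorial heart of the argument is the \emph{injectivity} of the map $A\mapsto C_A$: if two distinct $A_1,A_2\in \cA'$ produced the same image $C$, subtracting their defining expressions would give $f_1A_1-f_2A_2=(g_2-g_1)B$, forcing either $B\in \Kspan{A_1,A_2}\subseteq (V)$ (contradicting $B\in \cB_d^V$) or $A_1,A_2$ to be linearly dependent (contradicting pairwise non-associativity). Hence at least $\varepsilon \abs{\cA}-\primebound{d}$ \emph{distinct} forms $C\in \cC_d$ satisfy $C\equiv g_CB\pmod{(V)}$. Setting $Y:=AH_R(V,B)$, for general $\alpha$ we have $\varphi_{V,\alpha}(C)\in (\varphi_{V,\alpha}(B))\subseteq (Y_\alpha)$ inside $R[y_1]/I_{V,\alpha}$, so by \cref{lemma: abs red and potential increase} each such $C$ is potential-increasing. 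Updating $V\leftarrow Y$ therefore raises $\Phi(\cC)$ by at least $\varepsilon \abs{\cC}/2$, using $\abs{\cA}\geq \abs{\cC}\geq 2\primebound{d}\varepsilon^{-1}$.

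Since $\Phi(\cC)$ is bounded above by $2d\abs{\cC}$, the iteration must terminate within the stated $4d\varepsilon^{-1}$ rounds. Termination corresponds either to some $m_*^V=0$ (in which case $W:=V$ is the output) or to $\cB_d^V$ becoming empty (giving $m_b^W=0$). The step I expect to be the main obstacle is the injectivity argument for $A\mapsto C_A$: this is what converts the EK count into a count of \emph{distinct} potential-increasing $C$'s, and it relies crucially on $A\in \bK[V]\subseteq (V)$; without this hypothesis a single $C$ could be witnessed by many $A$'s and the potential would not grow quickly enough to make the iteration terminate in constantly many rounds.
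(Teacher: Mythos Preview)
Your overall structure mirrors the paper's: pick an absolutely irreducible $B\in\cB_d^V$, use the prime bound against the forms $A\in\bK[V]\cap\cA$, produce EK-images $C_A\in\cC_d$ with $C_A\equiv g_A B\pmod{(V)}$, update $V$ to $AH_R(V,B)$, and iterate. The gap is the sentence ``a short pigeonhole by degree lets me assume the large subset of $\cA$ lies in $\cA_d$.'' Pigeonhole only gives you $\varepsilon\abs{\cA}/d$ forms of some common degree $e\le d$; nothing forces $e=d$. Your injectivity argument uses $e=d$ essentially: when $g_1=g_2$ you conclude $f_1A_1=f_2A_2$ forces $A_1,A_2$ linearly dependent, but this is only valid when $f_1,f_2$ are scalars, i.e.\ when $\deg A_i=d$. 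For $e<d$ the equation $f_1A_1=f_2A_2$ merely says $A_1,A_2$ both divide the degree-$d$ form $C-g_1B$, so the map $A\mapsto C_A$ is at most $d$-to-one, not injective. This is exactly how the paper argues (without any pigeonhole), obtaining $\ge s/d$ distinct images.

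A second, smaller point: the potential-function machinery is unnecessary here. Once you know $C_A\equiv g_A B\pmod{(V)}$ with $g_A\neq 0$, you have directly that $C_A\notin(V)$ but $C_A\in(Y)$ after the update $Y=AH_R(V,B)$. So each round moves at least $\varepsilon\abs{\cA}/(2d)$ \emph{new} degree-$d$ forms of $\cC$ into the ideal, and since there are at most $\abs{\cC}\le\abs{\cA}$ such forms to begin with, the process halts in $2d\varepsilon^{-1}$ rounds. The paper uses this direct count (tracking $m_c^V$) rather than the group potential; invoking \cref{lemma: abs red and potential increase} and the twisted-quotient apparatus is correct in spirit but adds a layer that the argument does not need.
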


\begin{proof}
    We first assume $\abs{\bK\bs{V} \cap \cA} \geq \varepsilon \abs{\cA}$.
    Let $A_{1}, \dots, A_{r}$ be the elements in $\abs{\bK\bs{V} \cap \cA}$.
    In this case a $(1, 2 \cdot d \cdot \varepsilon^{-1})$ process will suffice.

    We now describe one iteration of our $(1, 2 \cdot d \cdot \varepsilon^{-1})$ process.
    Let $B \in \cB_{d}$ be a form that is absolutely irreducible over $V$.
    If no such $B$ exists, then every form in $\cB_{d}$ is absolutely reducible over $V$, which implies $m_{b}^{V} = 0$, and we terminate the process.
    Since $B$ is absolutely irreducible over $V$, there are at most $\primebound{d}$ elements among $A_{1}, \dots, A_{r}$ such that $\ideal{B, A_{i}}$ is not prime.
    Reordering, we can assume that $\ideal{A_{i}, B}$ is prime for $1 \leq i \leq s$ with $s = r - \primebound{d}$.
    Since $r \geq \varepsilon \abs{\cA}$ and $\abs{\cA} \geq \abs{\cC} \geq 2 \primebound{d} \varepsilon^{-1}$, we have $s \geq \varepsilon \abs{\cA} / 2$.

   Note that $z\not \in (A_i,B)$ as $\deg(B)=d>1$ and $z,A_i$ are non-associate forms. Hence, by the EK condition, there exists $C_{i} \in \ideal{A_{i}, B} \cap \cC$ for $1 \leq i \leq s$, as $(A_i,B)$ is prime.
    Since $\deg{B} = d$, we have $\deg{C_{i}} = d$, and we can write $C_{i} = \alpha_{i} B + H_{i} A_{i}$.
    If $C_{i} \in \ideal{V}$ then $B \in \ideal{V}$, contradicting assumption.
    If $C_{i} = C_{j}$, then $\alpha_{j} B + H_{j} A_{j} = \alpha_{i} B + H_{i} A_{i}$.
    If we also have $\alpha_{i} \neq \alpha_{j}$ then again $B \in \ideal{V}$, a contradiction.
    Therefore, if $C_{i} = C_{j}$ then $\alpha_{i} = \alpha_{j}$, and further $A_{i}, A_{j}$ are distinct factors of $C_{i} - \alpha_{i} B = C_{j} - \alpha_{j} B$.
    Since $\deg  C_{j} - \alpha_{j} B = d$, and since these factors are distinct, there are at most $d$ indices such that $C_{i_{1}} = \cdots = C_{i_{a}}$.
    In conclusion, the above paragraph shows that the set of forms $C_{1}, \dots, C_{s}$ consists of at least $s / d$ distinct forms, none of which are in the ideal $\ideal{V}$.

    We now update $V$ to $AH_R(V, B)$.
    After this update, $m_{c}^{V}$ reduces by at least $s / d$, which is at least $\varepsilon \abs{\cA} / 2d$.
    Note that we still have $|\bK[V]\cap A|\geq \varepsilon |A|$. Therefore if $m_{b}^{V}\neq 0$, we may apply the argument above again and reduce $m_{c}^{V}$ by $s/d$.
    Since $m_{c}^{V}$ is at most $\abs{\cA}$ to start with, after $2 \cdot d \cdot \varepsilon^{-1}$ complete iterations of the above, we have a vector space $W$ such that $m_{b}^{W}=0$ or $m_{c}^{W} = 0$.
    The claimed bound on the lifted strength of $W$ follows from \cref{lem: iterated lifted strength}.
    
    In the case when $\abs{\bK\bs{V} \cap \cB} \geq \varepsilon \abs{\cB}$ the exact same argument applies with every occurrence of $\cA$ and $\cB$ swapped.
    In the last case, we have $\abs{\bK\bs{V} \cap \cC} \geq \varepsilon \abs{\cC}$ and $\abs{\cC} \geq \abs{\cB} / 2$.
    If we just swap the roles of $\cA$ and $\cC$ in the above argument, then essentially in each iteration we are reducing $m_{a}^{V}$ by $\varepsilon \abs{\cC} / 2$.
    However, this does not suffice, because $m_{a}^{V}$ might be significantly bigger than $\cC$, and therefore we cannot guarantee that the iteration terminates after constantly many steps.
    Therefore, instead of just swapping the roles of $\cA$ and $\cC$, we replace the role of $\cA$ with $\cC$ and the role of $\cB$ with $\cA$.
    Now in each iteration we are reducing $m_{b}^{V}$ by at least $\varepsilon \abs{\cC} / 2$.
    The further assumption that $\abs{\cC} \geq \abs{\cB} / 2$ implies that we are reducing $m_{b}^{V}$ by at least $\varepsilon \abs{\cB} / 4$.
    Therefore the above iteration terminates after $4 \cdot d \cdot \varepsilon^{-1}$ steps.    
\end{proof}

\begin{definition}
    \label{def: unbreakable pencil}
    Let $F_1,\cdots,F_a\in R$ be homogeneous of degree $d$. We will say that $F_1,\cdots,F_a$   \textit{generate unbreakable pencils} if they are linearly independent and $\smin{F_i,F_j}>0$ for all $i,j\in [a]$. Alternatively, given a set of degree $d$ forms $\cS\subseteq R_d$, we will say that the set $\cS$ is \textit{unbreakable pencil generating} if $\cS$ is a linearly independent set and $\smin{F,G}>0$ for any $F,G\in \cS$.
\end{definition}

Note that if $F_1,\cdots,F_a$ generate unbreakable pencils, then $F_i$ is irreducible for all $i\in [a]$.

\begin{lemma}
\label{lem:pairwise strong sequence control}
Let $\cG$ be a finite set non-associate irreducible homogeneous forms in $R_{d}$. Suppose that for any unbreakable pencil generating subset $\cS \subseteq \cG$, we have $|\cS|\leq b$. 
Then there exists a subset $T\subseteq \cG$ such that the following holds.
\begin{enumerate}
    \item $\abs{T} \leq \br{\irredprimebound{d} + 1} \cdot b$. 
    \item Let $Y\subseteq R$ be any $h_{2\strong}\circ t_1$-lifted strong vector space with $T\subseteq \bK[Y]$. Then every form in $\cG$ is absolutely reducible over $Y$ or in the ideal $(Y)$. 
\end{enumerate}
\end{lemma}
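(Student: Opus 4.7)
The plan is to choose $T_0 \subseteq \cG$ as a maximal unbreakable pencil generating subset; by hypothesis, $|T_0| \leq b$. For each $F \in \cG \setminus \Kspan{T_0}$, maximality of $T_0$ forces the existence of some $G_F \in T_0$ with $\smin{F, G_F} = 0$; equivalently, $\Kspan{F, G_F}$ contains a reducible form $H_F = \alpha F + \beta G_F = PQ$ in $R$ with $\alpha,\beta \neq 0$ and $\deg P, \deg Q \in [1, d-1]$. Grouping by partner, I define $\cQ_G := \{F \in \cG \setminus \Kspan{T_0} : G_F = G\}$ for each $G \in T_0$, so that $\cG \setminus \Kspan{T_0} = \bigsqcup_{G \in T_0} \cQ_G$.

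The heart of the argument is to show $|\cQ_G| \leq \irredprimebound{d}$ for each $G \in T_0$. To do this, I would construct a strong vector space $V_G$ by applying the AH-extension procedure of \cref{corollary: practical robustness} to a suitable subset of $\cQ_G$, ensuring that $G \notin \bK[V_G]$, so that $G$ is absolutely irreducible over $V_G$ (it plays the role of a fresh basis variable in the AH-extension containing it). Then \cref{cor:strongprimebound} applied with $P := G$ and $V := V_G$ bounds by $\irredprimebound{d}$ the number of pairwise non-associate degree $d$ irreducible forms $H \in \bK[V_G]$ such that $\Kspan{G, H} \setminus \bc{H}$ contains an absolutely reducible form over $V_G$. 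Each $F \in \cQ_G$ is a candidate: $F \in \bK[V_G]$ by construction, and the reducible form $H_F = \alpha F + \beta G_F$ lies in $\Kspan{G, F} \setminus \bc{F}$ since $\beta \neq 0$. Moreover, under a general graded quotient $\varphi_{V_G, \gamma}$, the image $\varphi(H_F) = \varphi(P)\varphi(Q)$ factors as a product of positive-degree forms and is therefore reducible; by \cref{proposition: general quotient general}, $H_F$ is absolutely reducible over $V_G$ (the alternative conclusion $H_F \in (V_G)$ is excluded by $G \notin (V_G)$, which follows from $G \notin \bK[V_G]$). Thus the $F \in \cQ_G$ satisfy all hypotheses, and $|\cQ_G| \leq \irredprimebound{d}$.

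Setting $T := T_0 \cup \bigcup_{G \in T_0} \cQ_G$ then gives $|T| \leq b + b \cdot \irredprimebound{d} = (\irredprimebound{d}+1) \cdot b$, verifying condition (1). For condition (2), let $Y$ be any $h_{2\strong} \circ t_1$-lifted strong vector space with $T \subseteq \bK[Y]$. Every $F \in \cG$ lies either in $T$ or in $\Kspan{T_0} \subseteq \bK[T_0] \subseteq \bK[Y]$; in either case $F \in \bK[Y]$, and since $F$ has positive degree, $F \in (Y)$. This is in fact stronger than the stated conclusion.

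The main obstacle I foresee is the degenerate case where $G$ becomes algebraically dependent on $\cQ_G$, so that the naive $V_G$ satisfies $G \in \bK[V_G]$; here absolute irreducibility of $G$ over $V_G$ is ill-defined and the key implication ``$\varphi(H_F)$ reducible $\Rightarrow H_F$ absolutely reducible over $V_G$'' may fail. The workaround is to inspect when $G \in \Kspan{\cQ_G}$: a non-trivial dependence $G = \sum \mu_i F_i$ with some $\mu_j \neq 0$ lets us drop $F_j$ from the pool and redefine $V_G$ using $\cQ_G \setminus \{F_j\}$, which restores $G \notin \Kspan{\cQ_G \setminus \{F_j\}}$ and hence $G \notin \bK[V_G]$ (after controlling what AH may add to the degree $d$ part via \cref{corollary: practical robustness}). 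The discarded $F_j$ is absorbed into $T$ directly, which is consistent with the slack in the $(\irredprimebound{d}+1) \cdot b$ bound, since we can afford to include one extra form per $G \in T_0$ beyond the forms given by \cref{cor:strongprimebound}.
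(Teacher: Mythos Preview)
Your plan has a genuine gap at the central claim $|\cQ_G| \leq \irredprimebound{d}$.  To invoke \cref{cor:strongprimebound} with $P:=G$ you need $G$ to be absolutely irreducible over a \emph{strong} space $V_G$ that already contains all of $\cQ_G$ in its algebra.  You only argue that $G\notin\Kspan{\cQ_G}$ (and your workaround of dropping one $F_j$ only addresses this linear condition), but the space you must work over is $V_G=AH_R(0,\cQ_G)$, and the AH process may add lower–degree forms, after which $\bK[V_G]_d$ can be strictly larger than $\Kspan{\cQ_G}$.  Even when $G\notin\bK[V_G]$, nothing in \cref{corollary: practical robustness} forces $G$ to enter $AH_R(V_G,G)$ as a fresh basis vector, so your assertion ``$G$ plays the role of a fresh basis variable'' is unjustified; $G$ could well be absolutely reducible over $V_G$.

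In fact the bound $|\cQ_G|\leq\irredprimebound{d}$ is false.  Take $d=2$, $G=x_1x_2-x_3x_4$, and $F_i=G+\ell_i^2$ for generic linear forms $\ell_1,\dots,\ell_N$.  Each $F_i$ is irreducible, $\smin{F_i,G}=0$ via $F_i-G=\ell_i^2$, and $\smin{F_i,F_j}=0$ via $F_i-F_j=(\ell_i-\ell_j)(\ell_i+\ell_j)$.  Hence every unbreakable pencil generating subset of $\cG=\{G,F_1,\dots,F_N\}$ is a singleton, so $b=1$; yet $|\cQ_G|=N$ is arbitrary.  Here $G\notin\Kspan{\cQ_G}$ (so your workaround does not trigger), but any AH-strengthening of $\Kspan{\cQ_G}$ must break the strength-$0$ combinations $F_i-F_j$ and therefore absorbs the $\ell_i$, forcing $G=F_1-\ell_1^2\in\bK[V_G]$.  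Your $T=T_0\cup\bigcup_G\cQ_G$ then equals all of $\cG$, violating~(1).

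The paper's proof avoids this by reversing the roles: it iterates the choice of a maximal unbreakable subset $\irredprimebound{d}+1$ times, obtaining layers $\cS_1,\dots,\cS_{\irredprimebound{d}+1}$ (each of size $\leq b$) and setting $T=\bigcup_i\cS_i$.  For any target $F\in\cG\setminus(Y)$ one then finds $\irredprimebound{d}+1$ pairwise non-associate $H_i\in T\subseteq\bK[Y]$ with $\Kspan{F,H_i}$ containing a form absolutely reducible over $Y$; \cref{cor:strongprimebound} is applied with $P:=F$ over the \emph{given} space $Y$, not over an auxiliary space manufactured from unboundedly many forms, and the contradiction forces $F$ to be absolutely reducible over $Y$.
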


\begin{proof}
    Let us consider the unbreakable pencil generating subsets of $\cG$ with respect to the partial order given by inclusion. Let $\cS_1\subseteq \cG$ be a maximal unbreakable pencil generating set. We will define subsets $\cH_i\subseteq \cG$ iteratively. We set $\cH_1:=\cS_1$. Note that for every form $F \in \cG \setminus \cH_{1}$ we have either $\smin{F, H_1} = 0$ for some $H_1\in \cH_1$ or that $F \in \ideal{\cH_1}$ the ideal generated by $\cH_1$. Now let $\cS_2 \subseteq \cG\setminus \cH_1$ be a maximal unbreakable pencil generating set. Note that if $\{F\}\cup \cS_2$ is not an unbreakable pencil generating set for some $F\in \cG$, then $F\in \cH_1$. Let $\cH_2:=\cH_1\cup \cS_2$. Then, for every form $F \not \in \cH_{2}$ there is a $H_2\in \cS_2$ such that $\smin{F, H_2} = 0$, or $F \in \ideal{\cS_2}$.
    
    We repeat this process iteratively and set $\cH_{i+1}:=\cH_i\cup \cS_{i+1}$ where $\cS_{i+1}$ is a maximal unbreakable pencil generating subset of $\cG\setminus \cH_i$. Note that $|\cS_{i}|\leq b$ for all $i$ by assumption.
    Therefore, in each iteration, we add at most $b$ forms.
    We do this $\irredprimebound{d} + 1$ times. Then we have $\cH_{\cC(d) + 1}\leq (\irredprimebound{d} + 1)b$. Let $T:=\cH_{\irredprimebound{d} + 1}$.
    We will show that this set $T$ satisfies the desired property.
    Let $Y \subseteq R$ be a $h_{2B}\circ t_1$-lifted strong vector space such that $T\subseteq \bK[Y]$.

    Let $F \in \cG$ be a form that is not in the ideal $\ideal{Y}$. Hence $F\not \in (\cS_i)$ for all $i$. Recall that $\cS_i$ is a maximal unbreakable pencil generating set in $\cG\setminus \cS_{i-1}$. 
    Therefore, for every $1 \leq i \leq \irredprimebound{d} + 1$, there is a $H_i\in \cS_{i}$ such that $\smin{F, H_i} = 0$.
    Let $P_{i} = F + \beta_{i} H_i$ with $s(P_{i}) = 0$.
    In particular, $P_i$ is reducible in $R$. Note that $P_i\not \in (Y)$, as $F\not \in (Y)$. Hence no irreducible factor of $P_i$ can lie in $\bK[Y]$. Hence, the forms $P_{i}$ are absolutely reducible over $Y$.
    If $P_{i} = \gamma P_{j}$, then $F \in \ideal{H_i, H_j}$ by the assumption that the forms in $\cG$ are non-associate, therefore $F \in \bK\bs{Y}$.
    If the forms $P_{i}, P_{j}$ are all non-associate and not in $(Y)$, then by \cref{cor:strongprimebound} the form $F$ must be absolutely reducible with respect to $Y$.
    This completes the proof.
\end{proof}

Recall that for $\alpha\in \bK^{\dim(V)}$, we denote by $\varphi_{V,\alpha}$ the graded quotient homomorphism $R[y]\rightarrow R[y]/I_\alpha$ as defined in \cref{sec:generalquot}. We will look for subsets $\cS\subseteq\cA_{d}^V$ such that $\varphi_{V,\alpha}(\cS)$ is an unbreakable pencil generating set in $R[y]/I_\alpha$. The following lemma shows that if all such subsets have their size upper bounded by a fixed constant, then we can increase the vector space $V$ to $W$ in a controlled way and ensure that $m_a^W=0$.

\begin{lemma}\label{lemma: breaking condition 2}
Fix an integer $r>0$.
Suppose $V$ is $H(\strong, 1, (\irredprimebound{d}+1)\cdot r)$-lifted strong.
Suppose that there exists a dense subset $\cZ\subseteq \bK^{\dim(V)}$ such that for all $\alpha\in \cZ$ we have

    \[r> \max\{|\cS|\mid \cS\subseteq \cA_d^V \text{ and } \varphi_{V,\alpha}(\cS)\text{ is an unbreakable pencil generating set}\}.\]
    Then there exists a $\strong$-lifted strong vector space $W$ obtained by applying a $(1, (\irredprimebound{d}+1)\cdot r)$-process to $V$ such that every form in $\cA_d^V$ is absolutely reducible over $W$ or in $(W)$, i.e. we have $m_a^W=0$.
\end{lemma}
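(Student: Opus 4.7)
The plan is to apply \cref{lem:pairwise strong sequence control} inside the quotient ring $R[y]/I_{V,\alpha}$ to the image set $\cG_\alpha := \varphi_{V,\alpha}(\cA_d^V)$, obtain a small controlling subset $T_\alpha$, lift it to $T\subseteq \cA_d^V$, and apply the associated $(1,|T|)$-process to $V$. For a general $\alpha \in \cZ$, by \cref{proposition: general quotient general} each $A \in \cA_d^V$ maps to an irreducible form in $R[y]/I_{V,\alpha}$ (since $A$ is absolutely irreducible over $V$ and not in $(V)$), and by \cref{proposition: gcd after projection} these images are pairwise non-associate. The quotient ring $R[y]/I_{V,\alpha}$ is a UFD by \cref{proposition: graded quotient is UFD}, using the lifted strength of $V$. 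Any unbreakable pencil generating subset of $\cG_\alpha$ is, for such general $\alpha$, the image of a corresponding subset $\cS \subseteq \cA_d^V$ whose own image is unbreakable pencil generating, so by hypothesis $|\cS| < r$.

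I would then invoke \cref{lem:pairwise strong sequence control} inside $R[y]/I_{V,\alpha}$ with bound $b = r-1$ to obtain $T_\alpha \subseteq \cG_\alpha$ of size at most $(\irredprimebound{d}+1)(r-1)$ such that any $h_{2\strong}\circ t_1$-lifted strong vector space $Y$ containing $T_\alpha$ forces every form in $\cG_\alpha$ to be absolutely reducible over $Y$ or to lie in $(Y)$. Lift $T_\alpha$ to $T = \{T_1,\ldots,T_k\}\subseteq \cA_d^V$ and let $W$ be the result of the $(1,k)$-process that successively sets $V_i := AH_R(V_{i-1}, T_i)$, starting from $V_0 := V$. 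The $H(\strong, 1, (\irredprimebound{d}+1)r)$-lifted strength of $V$ together with \cref{lem: iterated lifted strength} ensures that $W$ is $\strong$-lifted strong and that every intermediate $V_i$ is $h_{2\strong}\circ t_1$-lifted strong, so the hypothesis needed to apply \cref{lem:pairwise strong sequence control} is met.

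To verify $m_a^W = 0$, fix any $A \in \cA_d^V$. By \cref{prop: composition of quotient}, for general $\alpha$ and $\beta$ and a suitable $\gamma$, the map $\varphi_{W,(\alpha,\beta)}$ factors, up to isomorphism, as $\varphi_{Y_\alpha, \gamma\cdot(1,\beta)} \circ \varphi_{V,\alpha}$, where $Y_\alpha \subseteq R[y_1]/I_{V,\alpha}$ is spanned by $y_1$ together with $T_\alpha$. Applying the conclusion of \cref{lem:pairwise strong sequence control} with $Y = Y_\alpha$ leaves two alternatives for $\varphi_{V,\alpha}(A)$. If $\varphi_{V,\alpha}(A)$ is absolutely reducible over $Y_\alpha$, then $\varphi_{W,(\alpha,\beta)}(A)$ is reducible for general parameters, so by the contrapositive of \cref{proposition: general quotient general}(6), $A$ is absolutely reducible over $W$ or lies in $(W)$. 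If instead $\varphi_{V,\alpha}(A) \in (Y_\alpha)$, then lifting the relation to $R[y_1]$, using $I_{V,\alpha}\subseteq (V, y_1)$, and specializing $y_1 = 0$ yields $A \in (V, T_1, \ldots, T_k) \subseteq (W)$.

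The main obstacle I anticipate is orchestrating this composition-of-quotients argument so that both alternatives from \cref{lem:pairwise strong sequence control} in the quotient ring descend faithfully to statements about $A$ and $W$ in $R$. The ideal-membership alternative is particularly delicate, and the crucial ingredient making it work is \cref{prop: composition of quotient} together with the observation that $I_{V,\alpha} \subseteq (V, y_1)$ in $R[y_1]$: this allows us to specialize $y_1 = 0$ and absorb the $I_{V,\alpha}$ term into $(W)$, converting an ideal-membership statement in the quotient into the desired statement in the original ring.
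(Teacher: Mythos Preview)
Your overall strategy matches the paper's, but there are two genuine gaps.

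First, the vector space you call $Y_\alpha$ is wrong. The factorization in \cref{prop: composition of quotient} is $\varphi_{W,(\alpha,\beta)} \cong \varphi_{W_\alpha,\gamma\cdot(1,\beta)} \circ \varphi_{V,\alpha}$ where $W_\alpha = \Kspan{y_1, \varphi_{V,\alpha}(W')}$ and $W'$ is a graded complement of $V$ inside $W$; it is \emph{not} spanned by $y_1$ together with $T_\alpha$. The forms $T_i$ need not lie in $W$ at all (only in $\bK[W]$), so your $Y_\alpha$ is a different space. This matters because the conclusion of \cref{lem:pairwise strong sequence control} requires the space $Y$ to be $h_{2\strong}\circ t_1$-lifted strong, and your $Y_\alpha=\Kspan{y_1,T_\alpha}$ has no reason to be strong. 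The paper uses $W_\alpha$, which \emph{is} lifted strong by \cref{prop: composition of quotient}(1), and which satisfies $T_\alpha\subseteq \bK[W_\alpha]$ since $\widetilde{T}\subseteq\bK[W]$. With this correction both of your case analyses go through (your $y_1=0$ specialization for the ideal case is in fact a mild simplification of the paper's argument, once you replace $Y_\alpha$ by $W_\alpha$).

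Second, you silently assume the lifted set $T$ is independent of $\alpha$. A priori \cref{lem:pairwise strong sequence control} only hands you $T_\alpha$ for each $\alpha\in\cZ$, and different $\alpha$ could produce different subsets. The paper deals with this by noting there are only finitely many possible lifts $\widetilde{T}_\alpha\subseteq\cA_d^V$, so one can pass to a dense subset $\cZ'\subseteq\cZ$ on which $\widetilde{T}$ is constant; only then is $W:=AH_R(V,\widetilde{T})$ well-defined and the subsequent argument (which needs the conclusion of \cref{lem:pairwise strong sequence control} to hold for a dense set of $\alpha$, so that \cref{proposition: general quotient general} can be invoked) valid.
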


\begin{proof}
    By \cref{proposition: gcd after projection}, we can further replace $\cZ$ by a possibly a possible smaller dense subset, and assume that $\varphi_{V,\alpha}(\cA^V_d)$ consists of pairwise non-associate forms. Then for each $\alpha\in \cZ$, we can apply \cref{lem:pairwise strong sequence control} to the set $\varphi_{V,\alpha}(\cA^V_d) \subseteq R[y_1]/I_{V,\alpha}$ and obtain a subset $T_\alpha\subseteq \varphi_{V,\alpha}(\cA^V_d)$ with properties $(1)$ and $(2)$ from \cref{lem:pairwise strong sequence control}. For each $\alpha\in \cZ$ let us choose $\widetilde{T}_\alpha\subseteq \cA_d^V$ which a lift of $T_\alpha$. Note that there are finitely many choices for $\widetilde{T}_\alpha$. Therefore we may replace $\cZ$ by a possibly smaller dense subset and assume that there exists a fixed set $\widetilde{T}\subseteq \cA_d^V$ such that $\widetilde{T}=\widetilde{T}_\alpha$ for all $\alpha\in \cZ$. In particular, the following holds:
    \begin{enumerate}
        \item $|\widetilde{T}|\leq(\irredprimebound{d}+1)\cdot r$, and 
        \item For all $\alpha\in \cZ$ and for any $Y_\alpha\subseteq R[y_1]/I_{V,\alpha}$ which is $h_{2\strong}\circ t_1$-lifted strong vector space with $\varphi_{V,\alpha}(\widetilde{T})\subseteq \bK[Y]$,  every form in $\varphi_{V,\alpha}(\cA_d^V)$ is absolutely reducible over $Y$ or in the ideal $(Y_\alpha)$ .
    \end{enumerate}  
    
    We let $W:=AH(V, \widetilde{T})$. Now we check that we will have $m_a^W=0$. Note that $V\subseteq W$ by \cref{corollary: practical robustness}.  Let $W=V+W'$, where $W'$ is obtained by extending a basis of $V$ to a basis of $W$. Then $\varphi_{V,\alpha}(W')$ is a $h_{2\strong} \circ t_2$-lifted strong vector space in $R[y]/I_\alpha$ by part (1) of \cref{prop: composition of quotient}. Let $W'_\alpha=\varphi_{V,\alpha}(W')$. We also have $\varphi_{V,\alpha}(\widetilde{T})\subseteq \bK[y_1,W'_\alpha]$ for all $\alpha\in \cZ$. In particular, by taking $W_\alpha:=\Kspan{y_1,W'_\alpha}$, we know that every form in $\varphi_{V,\alpha}(\cA_d^V)$ is absolutely reducible over $W_\alpha$ or in $(W_\alpha)$. 
    
    Let $F\in \cA_d^V$. Note that by replacing $\cZ$ with a possibly smaller dense subset, we may assume that $\dim(W')=\dim(W'_\alpha)$ for all $\alpha\in \cZ$ by \cref{proposition: gcd after projection}.  For $\beta\in \bK^{\dim(W')}$ and $\gamma\in \bK$, consider the map $\varphi_{Y_\alpha,\gamma\cdot(1,\beta)}$ as in \cref{prop: composition of quotient}. Let us denote $R'_\alpha=R[y]/I_{V,\alpha}$. Let $G:=\varphi_{V,\alpha}(F)$. We consider two situations below.

    \emph{Case 1.} Suppose that $G\in (W_\alpha)$ in $R[y_1]/I_{V,\alpha}$. Then we may compose with map $\varphi_{W_\alpha,\gamma\cdot (1,\beta)}$ as in \cref{proposition: general twists}. Under this composition $\varphi_{W_\alpha,\gamma\cdot (1,\beta)}\circ\varphi_{V,\alpha}$, we see that $F$ mapped to a form in $(y_2)$. By the isomorphism in \cref{prop: composition of quotient}, we conclude that $\varphi_{W,(\alpha,\beta)}$ maps $F$ to a form in $(y_1)$. Since $\alpha$ is general and $\beta$ can also be chosen to be general, we have $F\in (W)$ in $R$ by part (4) of \cref{proposition: general quotient general}.

    \emph{Case 2.} Now suppose that we have $G\not \in (W_\alpha)$ and $G$ is absolutely reducible over $W_\alpha$ for all $\alpha\in \cZ$. By \cref{proposition: general quotient general} there exists a non-empty open subset $\cV\subseteq \bK^{n+1}$ such that $\varphi_{Y_\alpha,\gamma\cdot(1,\beta)}(G)$ is reducible for all $\gamma\cdot(1,\beta)\in \cV$. We let $\cU_\alpha$ be the image of $f^{-1}(\cV)$ under the projection map $p_2:\bK\times \bK^{\dim(W')}\rightarrow \bK^{\dim(W')}$, where $f(\gamma,\beta)=\gamma\cdot(1,\beta)$ as in \cref{proposition: general twists}. 
    
    Note that $\varphi_{W,(\alpha,\beta)}=\varphi_{Y_\alpha,\gamma\cdot(1,\beta)}\circ \varphi_{V,\alpha}$ by \cref{prop: composition of quotient}. Let  $\cZ':=\{(\alpha,\beta)\mid \alpha\in \cZ, \beta\in \cU_\alpha\}$. Then $\cZ'\subseteq \bK^{\dim(W)}$ is a dense set. Now for all $(\alpha,\beta)\in \cZ'$, we have $\varphi_{W,(\alpha,\beta)}(F)$ is reducible in $R[y_1]/I_{W,(\alpha,\beta)}$, by the isomorphism in \cref{prop: composition of quotient}. As $F\not \in (W)$, we conclude that $F$ is absolutely reducible over $W$ by \cref{proposition: general quotient general}. Since $F\in \cA^V_d$ was arbitrary, we conclude that $m_a^W=0$ as desired.
    The claimed bound on the strength of $W$ follows from \cref{lem: iterated lifted strength}.
\end{proof}

\begin{lemma} \label{lem: EK low degree control}
    Suppose $\abs{\cC} \geq 2 \primebound{d} \varepsilon^{-1}$.
    Let $k := 10 \cdot d \cdot \br{\irredprimebound{d} + \primebound{d}} \cdot \br{\irredprimebound{d} + 1}$.
    Let $V$ be a $H(\strong, k, 8 \cdot d \cdot \varepsilon^{-1})$-lifted strong vector space such that one of the following holds.
    \begin{itemize}
        \item  $M_{a}^{V} \geq 6 \varepsilon \abs{\cA}$ or
        \item $M_{b}^{V} \geq 6 \varepsilon \abs{\cB}$ or
        \item $M_{c}^{V} \geq 6 \varepsilon \abs{\cC}$ and $\abs{\cC} \geq \abs{\cB} / 2$.
    \end{itemize}
    Then there is a $\strong$-lifted strong vector space $W$ obtained by applying a $(k, 8 \cdot d \cdot \varepsilon^{-1})$ process to $V$ such that $m_{a}^{W}\cdot m_{b}^{W}\cdot m_{c}^{W}=0$.
\end{lemma}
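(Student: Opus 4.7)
The plan is to reduce to \cref{lem: ek algebra control} and \cref{lemma: breaking condition 2} by combining them with \cref{cor:strongprimebound} in a case analysis. The three hypotheses of the statement are essentially symmetric; the third uses the extra assumption $\abs{\cC}\geq \abs{\cB}/2$ precisely to compensate for $\cC$ being the smallest set, exactly as in the third case of \cref{lem: ek algebra control}. So I focus on the first case: $M_a^V\geq 6\varepsilon\abs{\cA}$. By pigeonhole, one of three sub-cases holds: (a) $\abs{\cA_{<d}}\geq 2\varepsilon\abs{\cA}$; (b) $\abs{\cA_d\cap(V)}\geq 2\varepsilon\abs{\cA}$; or (c) at least $2\varepsilon\abs{\cA}$ forms in $\cA_d$ are absolutely reducible over $V$, since $M_a^V$ counts forms in $\cA$ which fall into at least one of these three buckets.

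For sub-cases (b) and (c) I run an iterative process parallel to the proof of \cref{lem: ek algebra control}. Pick $B\in \cB_d^V$ (if none exists, $m_b^V=0$ and $W=V$ works). For each $A$ in the large controlled subset of $\cA$, a general graded quotient $\varphi_{V,\alpha}$ sends $\varphi_{V,\alpha}(A)$ into the ideal $(y)$ in case (b), or to a reducible form in case (c); by \cref{prop: graded quot of ek} the images form an EK configuration in $R[y]/I_{V,\alpha}$, and in either case $\varphi_{V,\alpha}(A)$ contributes one or more factors $A'$ of degree strictly less than $d$ to $\varphi_{V,\alpha}(\cA)$, ignoring the finitely many exceptional $A$'s whose image is a pure power of $y$. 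Applying \cref{cor:strongprimebound} in the quotient to $\varphi_{V,\alpha}(B)$ (still absolutely irreducible) against these factors shows that for all but $\primebound{d}$ of the $A$'s the ideal $(A',\varphi_{V,\alpha}(B))$ is prime. The EK relation then produces a distinct $C\in \cC$ whose image lies in the ideal generated by the lifts of $A'$, $V$, and $B$; lifting back and running a single AH-step that adjoins $B$ together with a constant-size set of witnesses to $V$ places these $C$'s in the ideal of the new vector space, reducing $m_c^V$ by $\Omega(\varepsilon\abs{\cA}/d)$. Iterating $O(d\varepsilon^{-1})$ rounds with at most $k$ forms added per round drives $m_c$ to zero and produces the desired $W$.

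Sub-case (a) is handled by an application of \cref{lemma: breaking condition 2} to $\cB$. I show that for a dense set of $\alpha$, no unbreakable pencil in $\varphi_{V,\alpha}(\cB_d^V)$ has length more than $\primebound{d}+\irredprimebound{d}+1$. Given a hypothetical long pencil $\varphi_{V,\alpha}(B_1),\ldots,\varphi_{V,\alpha}(B_r)$, I fix a single $A\in \cA_{<d}$ (plentiful by assumption) and enlarge $V$ to $V_1:=AH_R(V,A)$, bringing $A$ into $\bK[V_1]$; then \cref{cor:strongprimebound} applied with respect to $V_1$ shows that for all but $\primebound{d}+\irredprimebound{d}$ indices $i$, the ideal $(A,B_i)$ is prime and no nontrivial element of $\Kspan{B_i,A}\setminus\{A\}$ becomes absolutely reducible over $V_1$. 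Combined with the EK relations between $A$ and the $B_i$'s, this forces some linear combination of the $\varphi_{V_1,\cdot}(B_i)$'s to be absolutely reducible, contradicting unbreakability once $r$ exceeds the bound. With pencil length bounded, \cref{lemma: breaking condition 2} produces $W$ with $m_b^W=0$.

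The main obstacle will be the bookkeeping required to keep the graded quotients compatible in the sense of \cref{def: compatible quotients} across iterations, and to maintain sufficient lifted strength so that \cref{cor:strongprimebound} applies throughout the process; \cref{lem: iterated lifted strength} provides the framework for the dimension and strength bounds, and the hypothesis that $V$ is $H(\strong,k,8d\varepsilon^{-1})$-lifted strong is precisely the input required there. The constant $k=10\cdot d\cdot (\irredprimebound{d}+\primebound{d})\cdot(\irredprimebound{d}+1)$ is the maximum number of forms adjoined in any single round, simultaneously accounting for the prime-bound exceptions, the absolute-reducibility exceptions, and the pencil-breaking witnesses of \cref{lem:pairwise strong sequence control}, while $8\cdot d\cdot \varepsilon^{-1}$ rounds suffice by the per-round progress estimate $\Omega(\varepsilon/d)$.
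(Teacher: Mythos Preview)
Your approach is genuinely different from the paper's, and it has real gaps that I do not see how to close.

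In sub-cases (b) and (c) you claim that after adjoining $B$ (plus ``a constant-size set of witnesses'') to $V$, the EK-images $C$ become controlled, i.e.\ land in the ideal of the new space or become absolutely reducible over it. But the relation you obtain is $C' = \alpha\,\varphi_{V,\alpha}(B) + H\cdot A'$ in the quotient $R[y]/I_{V,\alpha}$, where $A'$ is an irreducible factor of $\varphi_{V,\alpha}(A)$. The factor $A'$ does \emph{not} live in $\bK[V']$ for $V'=AH_R(V,B)$; it is a genuine element of the quotient ring with no canonical lift to $R$. Consequently, under the compatible second-stage quotient, $A'$ need not map to a power of $y_2$, and the image of $C$ need not factor or lie in $(y_2)$. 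This is exactly the obstruction that separates the present lemma from \cref{lem: ek algebra control}: there the forms $A_i$ are in the algebra $\bK[V]$ already, so their images are powers of $y$, and the argument goes through. Here they are merely ``absolutely reducible over $V$'' or ``in $(V)$'', which is much weaker. Your proposed witnesses would have to be lifts of all the $A'$, but there are $\Omega(\varepsilon|\cA|)$ of them, not constantly many.

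Sub-case (a) has an analogous gap. From the EK relation you get $C_i = \alpha_i B_i + H_i A$ for a single fixed $A\in\cA_{<d}$. This does not by itself force any two $B_i,B_j$ to span a reducible form: the $C_i$ are a priori all distinct elements of $\cC$, and nothing collapses them. Bounding the pencil length in $\cB_d^V$ requires a relation \emph{among the $B_i$}, which you have not produced.

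The paper's proof is organised around the potential function of \cref{subsection: potential function} and does not split on why $M_b^V$ is large. It iterates: if $|\cB\cap\bK[V]|\geq\varepsilon|\cB|$ invoke \cref{lem: ek algebra control}; if no long unbreakable pencil exists in $\cA_d^V$ (after quotient) invoke \cref{lemma: breaking condition 2}; otherwise take such a pencil $\cS\subset\cA_d^V$ of length $10d(\primebound{d}+\irredprimebound{d})$ and replace $V$ by $AH_R(V,\cS)$. The heart of the proof is a contradiction argument showing that this update increases the group potential $\Phi(\cB)$ by at least $\varepsilon|\cB|$, using the structure of the pencil together with \cref{cor:strongprimebound} applied to the factors $B_j'$ of the controlled forms $B_j$. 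Since the potential is bounded by $2d|\cB|$, at most $2d\varepsilon^{-1}$ iterations occur. The potential function is precisely what lets one extract progress from forms that are merely absolutely reducible over $V$ rather than in $\bK[V]$; your direct approach lacks a substitute for it.
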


\begin{proof}
    We first assume $M_{b}^{V} \geq 6 \varepsilon \abs{\cB}$.
    In this case a $(k, 4 \cdot d \cdot \varepsilon^{-1})$ process will suffice.
    Recall that for any vector space $W$, the set $\cA_d^W$ is the set of forms in $\cA_d$ which are not in $(W)$ and are absolutely irreducible with respect to $W$. Also, we have $m_a^W:=|\cA_d^W|$ and $M_a^W:=|\cA|-m_a^W$.

    Let $\sumprimebound{d} := \irredprimebound{d} + \primebound{d}$.
    Let $a := 10 \cdot d \cdot \sumprimebound{d}$.
    We will describe one iteration of the desired $(k, 4 \cdot d \cdot \varepsilon^{-1})$ process below and we will show that at most $4 \cdot d \cdot \varepsilon^{-1}$ iterations suffice. 

    First let us note two special cases which yield a vector space $W$ with $m_{a}^{W}\cdot m_{b}^{W}\cdot m_{c}^{W}=0$. If we reach any of these two \emph{breaking conditions} during the iterative process, we will terminate the process. The first condition will involve properties of $\cB$ and the second will involve $\cA$.

    \emph{Breaking condition $1$.} Suppose that $\abs{\cB \cap \bK\bs{V}} \geq \varepsilon \abs{\cB}$. Then we apply \cref{lem: ek algebra control} to $V$, and call the resulting space $W$.
    We terminate the iterative process. 
    
    \emph{Breaking condition $2$.} Suppose that there exists a dense subset $\cZ\subseteq \bK^{\dim(V)}$ such that for all $\alpha\in \cZ$ we have

    \[a> \max\{|\cS|\mid \cS\subseteq \cA_d^V \text{ and } \varphi_{V,\alpha}(\cS)\text{ is an unbreakable pencil generating set}\}.\]
    Here $\varphi_{V,\alpha}:R[y_1]\rightarrow R[y_1]/I_{V,\alpha}$ is a graded quotient. Then we apply \cref{lemma: breaking condition 2} and let $W$ be the resulting vector space. We terminate the iterative process.
    
   \textit{ Iterative step.} Suppose that none of the breaking conditions occur. Then we have $|\cB\cap\bK[V]|<\varepsilon |\cB|$. Moreover, there exists a non-empty open subset $\cU\subseteq \bK^{\dim(V)}$, such that for all $\alpha\in \cU$ there exists a set $\cS_\alpha\subseteq \cA^V_d$ where $|\cS_\alpha|=a$ and $\varphi_{V,\alpha}(\cS_\alpha)$ is an unbreakable pencil generating set. Since there are finitely many possibilities for such $\cS_\alpha$, we may replace $\cU$ by a dense subset $\cZ\subseteq\bK^{\dim(V)}$ and assume that all the $\cS_\alpha$ are the same set for all $\alpha\in \cZ$. Let $\cS=\cS_\alpha$ for all $\alpha$. Note that $\cS\not\subseteq \bK[V]$, as $\cS\subseteq \cA^V_d$. Therefore, we may apply \cref{proposition: gcd after projection} and assume that $|\varphi_{V,\alpha}(\cS)|=|\cS|$, after possibly replacing $\cZ$ with a smaller dense set. Moreover, we may also assume that all elements of $\varphi_{V,\alpha}(\cS)$ are irreducible of degree $d$. We update $V$ to $AH(V, \cS)$ and go to the next iteration.

    Now we will show that after at most $2 \cdot d \cdot \varepsilon^{-1}$ iterations of the above process, one of the two break conditions must be met.
    This implies that the above process is a $(k, 4 \cdot d \cdot \varepsilon^{-1})$ process as required, and that the required conclusion holds.

    \emph{Analysis of one iteration.} Suppose that $|\cB\cap\bK[V]|<\varepsilon |\cB|$. Moreover, there exists a set $\cS\subseteq \cA_d^V$ with $|\cS|=a$ and a dense set $\cZ\subseteq \bK^{\dim(V)}$, such that for all $\alpha\in \cZ$ the set $\varphi_{V,\alpha}(\cS)$ is an unbreakable pencil generating in $R[y_1]/I_{V,\alpha}$. Let $R'_\alpha:=R[y_1]/I_{V,\alpha}$. We define $Y:=AH_R(V,\cS)$. Note that $Y$ is a $h_{2\mu} \circ t_{1}$-lifted strong vector space in $R'_\alpha$ by \cref{lem: iterated lifted strength}, since it is a vector space that is intermediate in a $(k, 4 \cdot d \cdot \varepsilon^{-1})$ process and since $V$ is strong enough. Moreover we have $V\subseteq Y$. By choosing compatible bases of $V\subseteq Y$, we let $Y=V+Y'$ and $Y'_\alpha:=\varphi_{V,\alpha}(Y)\subseteq R'_\alpha$, as in \cref{lemma: abs red and potential increase}. Let $Y_\alpha=\Kspan{y_1,Y'_\alpha}.$

    Let $B_{1}, \dots, B_{r} \in\cB$ be the forms that either have degree less than $d$, or that are absolutely reducible with respect to $V$ or in the ideal $(V)$.
    By definition, we have $r = M_{b}^{V}$.
    Among these forms, let $B_{1}, \dots, B_{s}$ be the set of forms that are not in $\bK\bs{V}$.
    Since $\abs{\cB \cap \bK\bs{V}} < \varepsilon \abs{\cB}$, we have $s \geq M_{b}^{V} - \varepsilon \abs{\cB}$.

    For $\alpha\in \cZ$, let $\cG_\alpha\subseteq \{B_1,\cdots,B_r\}$ be the (possibly empty) subset of forms
    such that some factor of $\varphi_{V,\alpha}(B_i)$, other than $y_1$, is either absolutely reducible over $Y_\alpha$, or in the ideal $(Y_\alpha)$. Now there are finitely many possibilities for $\cG_\alpha$ as $B_1,\cdots,B_r$ are fixed. Therefore, by replacing $\cZ$ with a possibly smaller dense subset, we may assume that there exists a set $\cG$ such that $\cG_\alpha=\cG$ for all $\alpha\in \cZ$. In particular, if $B_i\in \cG$ then some factor of $\varphi_{V,\alpha}(B_i)$, other than $y_1$, is either absolutely reducible over $Y_\alpha$, or in the ideal $(Y_\alpha)$, for all $\alpha\in \cZ$.
    Therefore, if $B_i \in \cG$ then $\Psi_{Y,(\alpha,\beta)}(B_i) > \Psi_{V,\alpha}(B_i)$, for all $\alpha\in \cZ$ and for general $\beta\in \bK^{\dim(Y')}$ (depending on $\alpha$) by \cref{lemma: abs red and potential increase}. We will show below that $\abs{\cG} \geq \varepsilon \abs{\cB}$. 
    
    First, let us show that this condition  implies that there can be at most $2 \cdot d \cdot \varepsilon^{-1}$ number of iterations before a breaking condition is reached.

    \emph{Bounding total number of iterations.} Suppose we know that $\abs{\cG} \geq \varepsilon \abs{\cB}$ at each iteration. Now each form in $\cG$ is potential increasing with respect to $(V,\alpha)$ and $(Y,(\alpha,\beta))$, for all $\alpha\in \cZ$ and general choice of $\beta$ (depending on each $\alpha$). Then, in each iteration, when we increase $V$ to $Y$, the group potential of $\cB$ increases by at least $\varepsilon \abs{\cB}$, i.e.
    \[\Phi_{Y,(\alpha,\beta)}(\cB) > \Phi_{V,\alpha}(\cB)\]
    for a dense set of $\alpha\in \cZ'$ and general $\beta$ (depending on $\alpha$). In other words, the group potential jumps by at least $\varepsilon|\cB|$ for a dense set of $(\alpha,\beta)$, when we replace $V$ by $Y$.
    Note that the group potential of $\cB$ is upper bounded by $2 d \abs{\cB}$ by \cref{remark: group potential bound}, irrespective of the choice of $\alpha,\beta$.  Therefore there can be at most $2 \cdot d \cdot \varepsilon^{-1}$ iterations that do not reach a break condition. 

    \emph{Showing $|\cG|\geq \varepsilon|\cB|$ for each iteration.} Suppose that we have fixed $V$ and a dense set of $\cZ\subseteq \bK^{\dim(V)}$ as above. Suppose that $|\cG|<\varepsilon |\cB|$. We can further reorder $B_{1}, \dots, B_{s}$ so that the elements of $\cG$ are at the end of the ordering.
    Suppose $B_{1}, \dots, B_{t} \not \in \cG$, we have $t \geq M_{b}^{V} - 2 \varepsilon \abs{\cB} \geq 4 \varepsilon \abs{\cB}$.

    For each $B_{i}$ with $i \leq t$, fix an irreducible factor $B'_{i}$ of $\varphi_{V,\alpha}(B_{i})$, such that $B_i'$ is not a scalar multiple of $y_1$. Note that $\deg(B_j')<d$, by the choice of $B_j$ and \cref{proposition: general quotient general}. Since $B_i\not \in \cG$, we know that $B_i'$ is absolutely irreducible over $Y_\alpha$ and $B_i'\not \in (Y_\alpha)$. Recall that $\cS\subseteq \bK[Y]$, and hence $\varphi_{V,\alpha}(\cS)\subseteq \bK[Y_\alpha]$. Moreover, $Y_\alpha$ is sufficiently strong by \cref{prop: composition of quotient}, so that we may apply \cref{cor:strongprimebound}. Hence each $B'_{i}$ is not prime with at most $\primebound{d}$ elements in  $\varphi_{V,\alpha}(\cS)$. Moreover each $B_i'$ spans something reducible with at most $\irredprimebound{d}$ many elements in $\varphi_{V,\alpha}(\cS)$. 
    
    Recall that $|\varphi_{V,\alpha}(\cS)|=|\cS|=a$. Let $\varphi_{V,\alpha}(\cS)=\{F_1,\cdots,F_a\}$. We denoted $\sumprimebound{d}=\primebound{d}+\irredprimebound{d}$. Then by a double counting, at most $2 \sumprimebound{d}$ of the forms among $F_{1}, \dots, F_{a}$ are not prime with more than $t/2$ of the $B'_{i}$.
    Let $b := a - 2 \sumprimebound{d}$. We may reorder and assume that for any fixed $i\in [b]$, there are at least $t/2$ forms among the $B'_{j}$, such that $(F_i,B_j')$ is prime.

    By \cref{prop: graded quot of ek} we know that the sets $\varphi
    _{V,\alpha}(\cA),\varphi
    _{V,\alpha}(\cB),\varphi
    _{V,\alpha}(\cC)$ form a $(d,z,R)$-EK configuration. Therefore, for each $F_i,B'_j$ which form a prime ideal, there exists a form $C\in \cC$ such that $\varphi_{V,\alpha}(C) \in \ideal{F_{i}, B'_{j}}$. For $i\in [b]$, we let $\cP_{i,\alpha}$ be the set of forms $C \in \cC_{d}$ such that $\varphi_{V,\alpha}(C) \in \ideal{F_{i}, B'_{j}}$ for some $j$ such that $\ideal{F_i, B_{j}'}$ is prime. Now there are finitely many possibilities for the sets $\cP_{i,\alpha}\subseteq \cC_d$. Therefore, by replacing $\cZ$ with a possibly smaller dense subset, we may assume that there exists a set $\cP_i$ such that $\cP_i=\cP_{i,\alpha}$ for all $\alpha\in \cZ$.
    Recall that $F_{i}$ is of degree $d$. Moreover $B_j'$ does not divide $\varphi_{V,\alpha}(C)$ by \cref{proposition: gcd after projection}, since $B_j'\not \in (y_1)$ and $\alpha$ varies in a dense set. Therefore $\varphi_{V,\alpha}(C)$ is irreducible and has degree $d$, as $\varphi_{V,\alpha}(C)\in (F_i,B_j')$. Since this holds for all $\alpha$ in a dense set, we see that $C$ must be absolutely irreducible over $V$, by \cref{proposition: general quotient general}.

    We first prove that each $|\cP_{i}|\geq t/2(d+1) \geq t / 4d$.
    Fix $i = 1$ without loss of generality.
    Suppose $F_{1}$ is prime with $B'_{1}, \dots, B'_{t/2}$.
    Suppose $\abs{\cP_{1}} < t / 2(d+1)$. Then there exists $C\in \cP_i$ such that $\varphi_{V,\alpha}(C)\in (F_1,B_j')$ for all $j\in [d+1]$, after a possible reordering of $B'_{1}, \dots, B'_{t/2}$. Therefore we have \[c_{1} F_{1} + B'_{1} H_{1} = \cdots = c_{d+1} F_{1} + B'_{d+1} H_{d+1}\]
    for some scalars $c_i\in \bK$ and forms $H_i\in R'_\alpha$. Recall that and $\deg(B_i')<d$ for all $i$ and $\deg(F_1)=d$.
    If $c_{i} \neq c_{j}$ for some $i \neq j$, then we have $\ideal{F_{1}, B'_{i}} \subseteq \ideal{B'_{i}, B'_{j}}$. This is a contradiction since $\ideal{F_{1}, B'_{i}}$ is prime.
    Therefore we have $c_{1} = \cdots = c_{d+1}$.
    Now each of $B'_{2}, \dots, B'_{d+1}$ are factors of $H_{1}$, which is a contradiction since $H_{1}$ has at most $d$ factors.

    Since $t\geq 4\varepsilon |\cB|$, we have $|\cP_{i}|\geq \varepsilon \abs{\cB} / d \geq \varepsilon m_{c}^{V} / d$.
    The number of sets $\cP_{i}$ is $b := a - 2 \sumprimebound{d}$.
    Hence $b > d (\sumprimebound{d} + 2) \varepsilon^{-1}$. By double counting, we can assume that there is a common element in $\cP_{1}, \dots, \cP_{\sumprimebound{d} + 2}$.
    We use this to derive the contradiction, completing the proof.
    
    Suppose $C$ lies in $\cP_{1}, \dots, \cP_{\primebound{d} + 2}$ with $C' = \varphi_{V,\alpha}(C)$.
    Therefore, for each $F_i$ there exists $B_i'$ such that $C' = c_{i} F_{i} + H_{i} B'_{i}$.
    We therefore have $\ideal{c_{1} F_{1} - c_{i} F_{i}, B'_{1}} \subseteq \ideal{B'_{1}, B'_{i}}$ for all $2 \leq i \leq \sumprimebound{d} + 1$.
    Recall that $\varphi_{V,\alpha}(\cS)$ is an unbreakable pencil generating set. Hence the forms $c_{1} F_{1} - c_{i} F_{i}$ are nonzero and pairwise non associate, by the linear independence of $F_{1}, \dots, F_{a}$.
    Further, each element is irreducible as $\smin{F_i,F_j}\geq 1$.
    Recall that $B_1'$ is absolutely irreducible over $Y_\alpha$ and $B_1'\not \in (Y_\alpha)$. Therefore, by \cref{cor:strongprimebound}, there is a $j$ such that $\ideal{c_{1} F_{1} - c_{j} F_{j}, B'_{1}}$ is prime. Since $B'_{1}, B'_{j}$ are irreducible in $R'_\alpha$, which is a UFD, the ideal has height $2$. Therefore we must have an equality $\ideal{c_{1} F_{1} - c_{j} F_{j}, B'_{1}} = \ideal{B'_{1}, B'_{j}}$. This is a contradiction.

    In the case when $M_{a}^{V} \geq 6 \varepsilon \abs{\cA}$ the exact argument holds with all occurrences of $\cA, \cB$ swapped.
    Now suppose we are in the case when $M_{c}^{V} \geq 6 \varepsilon \abs{\cC}$ and $\abs{\cC} \geq \abs{\cB} / 2$.
    We swap the roles of $\cC$ and $\cB$.
    Note that in this case, $\abs{\cC \cap \bK\bs{V}} \geq \varepsilon \abs{\cC}$ is also a valid breaking condition, since \cref{lem: ek algebra control} can be applied with $\cC$ by assumption.
    The only other place in the arguments where the relative sizes of $\cA, \cB, \cC$ play a role is to ensure that the sets $\cP_{i}$ have a common intersection.
    Each set in this case will have size $\varepsilon \abs{\cC} / d$, which is at least $\varepsilon m_{b}^{V} / 2d$ by assumption.
    The rest of the argument remains the  same.
\end{proof}

\begin{lemma} \label{lem:EK control one to all}
    Suppose $V$ is a $H(\strong, \primebound{d} + \irredprimebound{d} + 1, 2)$-lifted strong space such that one of $m_{a}^{V}, m_{b}^{V}, m_{c}^{V}$ is zero.
    Then there is a $\strong$-lifted strong space $W$ obtained by applying a $(\primebound{d} + \irredprimebound{d} + 1, 2)$ process to $V$ such that $m_{a}^{W} = m_{b}^{W} = m_{c}^{W} = 0$.
\end{lemma}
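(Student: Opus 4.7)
Without loss of generality, I assume $m_c^V = 0$. The plan is to execute the $(\primebound{d}+\irredprimebound{d}+1, 2)$-process in two rounds: Round 1 drives one of $m_a^V$ or $m_b^V$ to zero while preserving $m_c = 0$, and Round 2 handles the remaining one. Monotonicity of the controlledness condition (a form in $(V)$ or absolutely reducible over $V$ remains so when $V$ is enlarged) will ensure $m_c^W = 0$ at the end, and the strength bound on $W$ follows from \cref{lem: iterated lifted strength} given the strength assumption on $V$.

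The structural observation driving both rounds is obtained by applying the graded quotient $\varphi_{V,\alpha}$ for a general $\alpha$. By \cref{prop: graded quot of ek}, the images form an EK configuration in $R' := R[y]/I_{V,\alpha}$. Since $m_c^V = 0$, every degree-$d$ form in $\cC$ either lies in $(V)$ or is absolutely reducible over $V$; by \cref{proposition: general quotient general} its image is therefore in $(y)$ or is reducible, so no degree-$d$ irreducible factor appears in $\cC'$. Hence $\cC'_d = \emptyset$. For any $A' \in \cA'_d$ and $B' \in \cB'_d$, the EK relation reads $y \cdot \prod_{C' \in \cC'} C' \in \radideal{A', B'}$; every factor on the left has degree strictly less than $d$, whereas the minimum degree of a nonzero element of $\ideal{A', B'}$ is $d$. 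Therefore $\ideal{A', B'}$ must be non-prime for every such pair.

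To realize Round 1 I fix any $B \in \cB_d^V$ (skipping directly to Round 2 if $\cB_d^V$ is empty) and inductively build a strong vector space $Z \subseteq R'$ by adjoining images of elements of $\cA_d^V$ one at a time via the AH process, maintaining the invariant that $B' := \varphi_{V,\alpha}(B)$ remains absolutely irreducible over $Z$. Each successful adjoining contributes a non-associate irreducible form of $\bK[Z]$ whose ideal with $B'$ is non-prime, so \cref{cor:strongprimebound} caps the number of such adjoinings at $\primebound{d}$. After at most $\primebound{d}+1$ iterations one of two cases occurs: either (i) $\cA_d^V$ has been exhausted, yielding $|\cA_d^V| \leq \primebound{d}+1$ and allowing Round 1 to adjoin $\cA_d^V$ entirely to $V$; or (ii) $B'$ becomes absolutely reducible over the current $Z$. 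In case (ii) the compatibility of graded quotients (\cref{prop: composition of quotient}) shows that the preimage of $Z$ in $R$ controls $B$, and a further application of \cref{cor:absredspan} inside the twisted quotient from \cref{prop: composition of quotient} identifies at most $\irredprimebound{d}$ additional adjoinings that extend this control to every remaining element of $\cB_d^V$. In either case Round 1 adjoins at most $\primebound{d}+\irredprimebound{d}+1$ forms to $V$ to produce a strong $V_1$ with $m_a^{V_1}\cdot m_b^{V_1} = 0$ and $m_c^{V_1} = 0$. Round 2 then replays the identical argument inside $V_1$ with the roles of the two remaining sets interchanged, producing the required $W$.

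The main obstacle is case (ii): when $B'$ becomes absolutely reducible partway through the iteration, I need the additional $\irredprimebound{d}$ adjoinings to control \emph{every} form of $\cB_d^V$ at once, not merely the specific $B$ that triggered the reducibility. Making this extension work without exceeding the budget of $\primebound{d}+\irredprimebound{d}+1$ per round requires carefully tracking absolute reducibility across compatible graded quotients via \cref{prop: composition of quotient} and leveraging the quantitative span bound for absolute irreducibility in \cref{cor:absredspan}, which is precisely what governs the constants in the final rank estimate.
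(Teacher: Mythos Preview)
Your structural observation---that after quotienting by $V$ with $m_c^V=0$, every ideal $\ideal{A',B'}$ with $A'\in\cA'_d$, $B'\in\cB'_d$ must be non-prime---is correct and insightful. However, the way you exploit it has a real gap, precisely the one you flag at the end. In case~(ii), once the fixed $B'$ becomes absolutely reducible over $Z_r$, you have controlled \emph{that one} $B$, but nothing in the argument forces the remaining elements of $\cB_d^V$ to become controlled after only $\irredprimebound{d}$ further adjoinings. Different $B$'s may require entirely different subsets of $\cA_d^V$ to force their reducibility, so iterating your per-$B$ argument could cost $\abs{\cB_d^V}$ rounds, not a constant. Your invocation of \cref{cor:absredspan} here is unexplained: that corollary bounds how many $Q_i$ can span an absolutely reducible form with a \emph{fixed} absolutely irreducible $P$, but you have no fixed $P$ playing that role once $B'$ has become reducible.

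The paper's approach sidesteps this entirely by reversing the direction of control. Starting from $m_a^V=0$ (your $m_c^V=0$ after relabelling), it does not fix a target form and try to make it reducible. Instead it picks $\primebound{d}+\irredprimebound{d}+1$ \emph{linearly independent} forms from $\cB_d^V$ (or a basis if the span is smaller) and adds them all at once to get $Y$. Then for \emph{any} $C\in\cC_d$ still absolutely irreducible over $Y$, at least one $B_i$ satisfies both that $\ideal{C,B_i}$ is prime and that $\Kspan{C,B_i}\setminus\{B_i\}$ contains no absolutely reducible form; the EK image lands in $\cA_d$ and must be absolutely irreducible over $Y\supset V$, contradicting $m_a^V=0$. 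This argument is uniform in $C$, so one batch of $\primebound{d}+\irredprimebound{d}+1$ forms kills an entire set at once. The second round repeats the trick with forms from $\cA_d\cup\cB_d$ to kill the third set. The key difference: rather than using non-primality to force one form reducible, the paper uses primality plus the span bound to show the EK image would have to be absolutely irreducible, which is forbidden.
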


\begin{proof}
    In this proof, the roles of $\cA, \cB, \cC$ can be freely permuted.
    Therefore without loss of generality assume $m_{a}^{V} = 0$.
    Let $\sumprimebound{d} := \primebound{d} + \irredprimebound{d}$.

    Let $B_{1}, \dots, B_{r}$ be the forms in $\cB_{d}$ that are absolutely irreducible over $V$.
    If the span of $B_{1}, \dots, B_{r}$ is less than $\sumprimebound{d} + 1$, then we add a basis of this set to $V$, and call this resulting space $Y$.
    We have $m_{b}^{Y} = m_{a}^{Y} = 0$, and we move to the next step.
    Suppose the span is larger than $\sumprimebound{d} + 1$.
    Without loss of generality, assume $B_{1}, \dots, B_{\sumprimebound{d}+ 1}$ are linearly independent.
    Let $Y$ be the space $AH(V, B_{1}, \dots, B_{\sumprimebound{d} + 1})$.
    Now let $C$ be a form in $\cC_{d}$ that is absolutely irreducible over $Y$, if such a form exists.
    There is at least one form among $B_{1}, \dots, B_{\sumprimebound{d} + 1}$ say $B_{i}$ such that $\ideal{C, B_{i}}$ is prime, and does not span anything absolutely reducible.
    However, the EK image of $B_{i}, C$ has to be a form in $\cA_{d}$ that is absolutely irreducible over $V$, but no such form exists.
    Therefore, we have $m_{c}^{Y} = 0$.
    This completes the first step of our process.

    After rearranging again, we can assume now that $m_{a}^{Y} = m_{b}^{Y} = 0$.
    Among the forms in $\cA_{d}, \cB_{d}$, pick a set of forms $F_{1}, \dots, F_{\sumprimebound{d} + 1}$ that are linearly independent.
    If such forms exist, set $W = AH(Y, F_{1}, \dots, F_{\sumprimebound{d} + 1})$.
    Each form $C \in \cC_d$ that is absolutely irreducible over $W$ has at least one $F_{i}$ such that $\ideal{F_{i}, C}$ is prime and does not span anything absolutely reducible, but the EK image of this ideal is in $\cA_{d} \cup \cB_{d}$ which only consists of absolutely reducible forms.
    
    Suppose $F_{1}, \dots, F_{\sumprimebound{d} + 1}$ cannot be picked.
    Then we instead define $F_{1}, \dots, F_{\sumprimebound{d} + 1}$ to be a set of linearly independent forms in $\cA \cup \cB$ that consist of a basis of $\cA_{d} \cup \cB_{d}$ (the forms are now allowed to be of lower degree), if possible.
    Again let $W = AH(Y, F_{1}, \dots, F_{\sumprimebound{d} + 1})$.
    Again if $C \in \cC_{d}$ is absolutely irreducible over $W$ then we have $\ideal{C, F_{i}}$ is prime for some $i$, and therefore $G = \alpha C + F_{i} H_{i}$ for some degree $d$ form $G \in \cA_{d} \cup \cB_{d}$.
    This implies that $C$ is in the ideal generated by $W$, since by assumption every form in $\cA_{d} \cup \cB_{d}$ is in the span of $W$.
    This is again a contradiction.

    The only remaining case is when $\cA \cup \cB$ has span at most $\sumprimebound{d} + 1$.
    In this case, we just pick $F_{1}, \dots, F_{a}$ to be a basis of $\cA \cup \cB$, and set $W = AH(V, F_{1}, \dots, F_{a})$.
    Every element of $\cA \cup \cB$ is in the algebra $\bK\bs{W}$.
    Now for any $C \in \cC_{d}$ that is not in the ideal, and for any element say $A \in \cA$, the radical ideal $\radideal{A, C}$ has to contain $\prod_{B \in \cB} B$.
    The latter element is in $\bK\bs{W}$, and by the elimination theorem \cite[Lemma~4.26]{OS24}, the only element in $\radideal{A, C} \cap \bK\bs{W}$ is $A$, which is a contradiction.

    Therefore in all cases, we have $m_{a}^{W} = m_{b}^{W} = m_{c}^{W} = 0$.
    The claimed bound on the strength of $W$ follows from \cref{lem: iterated lifted strength}.
\end{proof}

The following lemma combines the above results and allows us to fully control the highest degree forms in an EK-configuration.

\begin{lemma}
    \label{lem:induction step main}
    Let $c := 3 (\primebound{d} + \irredprimebound{d} + 1)$ and $\delta := 8 \cdot \varepsilon \cdot (\primebound{d} + \irredprimebound{d} + 1)^{-1}$.
    Let $k := 6 \cdot d \cdot (\primebound{d} + \irredprimebound{d} + 1)^{3} \cdot \varepsilon^{-1} + c + c_{\mathrm{ek}} \cdot \delta^{-1}$ and $t := 8 \cdot d \cdot \varepsilon^{-1} + 3$.
    Let $\strong$ be a function, and suppose $U$ is $H(\strong, k, t)$-strong.
    There is a $(k, t)$ process starting with $0$ such that the resulting space $W \subset R$ is $\strong$-lifted strong and satisfies $m_{a}^{W} = 0$ and $m_{b}^{W} = 0$ and $m_{c}^{W} = 0$.
    Further, $\dim W_{i} \leq D_{i}(\strong, k, t, \delta_{U})$, where $\delta_{U}$ is the dimension vector of $U$.
\end{lemma}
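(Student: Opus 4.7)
The proof combines the three sub-lemmas \cref{lem: ek algebra control}, \cref{lem: EK low degree control}, and \cref{lem:EK control one to all} through a case analysis driven by $\abs{\cC}$ and by the structural behavior of the configuration with respect to the current vector space. Starting from $V_0 = 0$, we either absorb a constant-size set of forms into $V$ in a single round of the AH-process, or run one of the two structural sub-lemmas as a full $(k, t)$-process. This leaves us with $V$ satisfying $m_a^V \cdot m_b^V \cdot m_c^V = 0$, after which \cref{lem:EK control one to all}, a $(\primebound{d}+\irredprimebound{d}+1, 2)$-process, yields the desired $W$ with $m_a^W = m_b^W = m_c^W = 0$.

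The first case is $\abs{\cC} < 2(\primebound{d}+\irredprimebound{d})\varepsilon^{-1}$: we simply set $V := AH_R(0, \cC_d)$ in a single round of a $(c_0, 1)$-process with $c_0 \leq 2(\primebound{d}+\irredprimebound{d})\varepsilon^{-1}$, yielding $\cC_d \subset (V)$ and hence $m_c^V = 0$. Otherwise, if any hypothesis of \cref{lem: ek algebra control} or \cref{lem: EK low degree control} is satisfied, we apply the corresponding sub-lemma as a $(k', t')$-process with $t' \leq 8d\varepsilon^{-1}$. The remaining case is when $\abs{\cC}$ is large but none of those structural hypotheses holds: this forces $\abs{\bK[V]\cap\cX} < \varepsilon\abs{\cX}$ and $M_x^V < 6\varepsilon\abs{\cX}$ for $\cX\in\{\cA,\cB\}$ (and analogously for $\cC$ when $\abs{\cC}\geq\abs{\cB}/2$), so that all but an $O(\varepsilon)$-fraction of $\cA$ lies in $\cA_d^V$ and similarly for $\cB$. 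Combining \cref{cor:strongprimebound} with the EK relations $z\prod C\in\radideal{A,B}$ and the two symmetric variants, one shows that the degree-$d$ forms $\cA_d, \cB_d, \cC_d$ form a $\lrek{c}{\delta}$-configuration in the sense of \cref{def:linearek} for $c = 3(\primebound{d}+\irredprimebound{d}+1)$ and $\delta = 8\varepsilon/(\primebound{d}+\irredprimebound{d}+1)$, so that \cref{prop:linearek} yields $\dim\Kspan{\cA_d \cup \cB_d \cup \cC_d}\leq c + c_{\mathrm{ek}}\delta^{-1}\log(\delta^{-1})$; absorbing a basis of this span into $V$ in one round makes every degree-$d$ form an element of $(V)$, so $m_a^V = m_b^V = m_c^V = 0$ directly without needing \cref{lem:EK control one to all}.

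The composed procedure is a $(k, t)$-process with $t \leq 1 + 8d\varepsilon^{-1} + 2 = 8d\varepsilon^{-1} + 3$, matching the statement. The fan-in $k = 6d(\primebound{d}+\irredprimebound{d}+1)^3\varepsilon^{-1} + c + c_{\mathrm{ek}}\delta^{-1}$ is chosen to subsume the fan-ins of \cref{lem: EK low degree control}, the linear-EK basis absorption, and the final cleanup. Since $U$ is $H(\strong, k, t)$-strong, \cref{lem: iterated lifted strength} guarantees that the output $W$ is $\strong$-lifted strong with $\dim W_i \leq D_i(\strong, k, t, \delta_U)$. The main technical obstacle lies in verifying the third partial-EK condition on $\cF_3 = \cC_d$ in the last case: since neither form of a pair $(A, C)\in\cA_d^V\times\cC_d$ need lie in $\bK[V]$, \cref{cor:strongprimebound} does not immediately bound non-prime pairs, and one must separately treat the regimes $\abs{\cC}\geq\abs{\cB}/2$ (where $\cC_d^V$ dominates and the bound follows symmetrically) and $\abs{\cC}<\abs{\cB}/2$ (where one invokes the EK relation $z\prod B\in\radideal{A,C}$ together with \cref{cor:strongprimebound} applied in a suitably enlarged or quotiented vector space).
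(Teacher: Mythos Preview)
Your proposal has a genuine gap in the ``remaining case'' where $\abs{\cC}$ is large and none of the hypotheses of \cref{lem: ek algebra control} or \cref{lem: EK low degree control} hold at $V=0$. You assert that \cref{cor:strongprimebound} together with the EK relations forces $\cA_d,\cB_d,\cC_d$ to be a $\lrek{c}{\delta}$ configuration, but \cref{cor:strongprimebound} only bounds the number of non-prime ideals $\ideal{P,Q_i}$ when the $Q_i$ lie in $\bK[V]$. At $V=0$ we have $\bK[V]=\bK$, so no non-constant form lies in $\bK[V]$, and the corollary says nothing about pairs $(A,B)\in\cA_d\times\cB_d$. There is simply no reason why $\ideal{A,B}$ should be prime for a $\delta$-fraction of such pairs, and hence no reason the linear EK conditions should hold. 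You half-acknowledge this difficulty for the $(A,C)$ pairs at the end, but the same problem already arises for $(A,B)$ pairs, and your proposed fix (``apply \cref{cor:strongprimebound} in a suitably enlarged or quotiented vector space'') is not a workable argument as stated.

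The paper handles this case differently. It does \emph{not} prove that $\cA_d\cup\cB_d\cup\cC_d$ must be a $(c,\delta)$-linear EK configuration; it treats this as a dichotomy. If it is, \cref{prop:linearek} bounds the span and one finishes as you describe. If it is \emph{not}, then by definition more than $c$ forms have span sets of fractional size below $\delta$, and by pigeonhole at least $c/3=\primebound{d}+\irredprimebound{d}+1$ such forms $F_1,\dots,F_{c/3}$ lie in a single one of $\cA_d,\cB_d,\cC_d$. One then sets $Y:=AH_R(0,F_1,\dots,F_{c/3})$; now the $F_i$ \emph{do} lie in $\bK[Y]$, so \cref{cor:strongprimebound} applies and shows that all but a $\tfrac{c}{3}\delta\le 4\varepsilon$ fraction of the forms in the target set are absolutely reducible over $Y$ or in $(Y)$. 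This yields $M_x^{Y}\ge (1-10\varepsilon)\abs{\cX}\ge 6\varepsilon\abs{\cX}$ for the appropriate $\cX$, and one is back in a position to invoke \cref{lem: EK low degree control} starting from $Y$ (with a further case split according to whether $\abs{\cC}\ge\abs{\cB}/2$, to ensure the correct hypothesis of that lemma is triggered). This extra seeding step is the missing idea in your plan, and it also explains why the total number of rounds is $1+8d\varepsilon^{-1}+2$ rather than just $8d\varepsilon^{-1}+2$.
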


\begin{proof}
    The first step is to find a space $V$ such that one of $m_{a}^{V}, m_{b}^{V}, m_{c}^{V}$ is zero.
    
    If $\abs{\cC} < 2 \primebound{d} \varepsilon^{-1}$ then we pick $V = AH(0, \cC)$ and we are done.
    If $M_{b}^{(0)} \geq 6 \varepsilon \abs{\cB}$ or $M_{a}^{(0)} \geq 6 \varepsilon \abs{\cA}$ then we apply \cref{lem: EK low degree control} starting with $0$ and we are done.

    Now consider the sets $\cA_{d}, \cB_{d}, \cC_{d}$.
    Even though we have $\abs{\cA} \geq \abs{\cB} \geq \abs{\cC}$, we cannot assume $\abs{\cA_{d}} \geq \abs{\cB_{d}} \geq \abs{\cC_{d}}$.
    For each form $F \in \cF_{d}$ we define a subset $\Fspan[F] $ as follows.
    If $F \in \cA_{d}$ and $\abs{\cB_{d}} \geq \abs{\cC_{d}}$ then $\Fspan[F] \subset \cB_{d}$ is the set of forms $B \in \cB_{d}$ such that $\abs{\Kspan{F, B} \cap \cC_{d}} \geq 1$.
    If instead $\abs{\cC_{d}} \geq \abs{\cB_{d}}$ then $\Fspan[F] \subset \cC_{d}$ is the set of forms $C \in \cC_{d}$ such that $\abs{\Kspan{F, C} \cap \cB_{d}} \geq 1$.
    We can similarly define $\Fspan[F]$ for forms $F \in \cB_{d}, \cC_{d}$.
    In each case, $\Fspan[F]$ will be a subset of the bigger of the two degree $d$ parts of the remaining sets.
    The fractional size of $\Fspan[F]$ is defined to be the ratio of $\abs{\Fspan[F]}$ and the size of whichever among $\cA_{d}, \cB_{d}, \cC_{d}$ that $\Fspan[F]$ is a subset of.

    For a form $F \in \cA_{d} \cup \cB_{d}$, we have $\Fspan[F] \subset \cC_{d}$ only if $\abs{\cC_{d}} \geq \abs{\cB_{d}}$ or $\abs{\cC_{d}} \geq \abs{\cA_{d}}$.
    We are in the case where $M_{b}^{(0)} < 6 \varepsilon \abs{\cB} < \abs{\cB} / 2$ and $M_{a}^{(0)} < 6 \varepsilon \abs{\cA} < \abs{\cA} / 2$, equivalently in the case where $\abs{\cB_{d}} \geq \abs{\cB} / 2$ and $\abs{\cA_{d}} \geq \abs{\cA} / 2$.
    Therefore if $\Fspan[F] \subset \cC_{d}$ then we have $\abs{\cC} \geq \abs{\cB} / 2$.
    This observation will be useful later in the proof.
    
    If the set $\cA_{d} \cup \cB_{d} \cup \cC_{d}$ is a $(c, \delta)$-linear EK configuration (as defined in \cref{def:linearek}), then by \cref{prop:linearek} we can find a basis of $\cA_{d} \cup \cB_{d} \cup \cC_{d}$ of size at most $c + c_{\mathrm{ek}} \cdot \delta^{-1}$, and we are done.
    Suppose they do not form such a configuration.
    In particular, there are at least $c$ forms in $\cF_{d}$ with fractional $\Fspan[F]$ smaller than $\delta$.
    By the pigeonhole principle, there are at least $c/3 = \primebound{d} + \irredprimebound{d} + 1$ such forms within the same set.
    We now do a case analysis.

    \paragraph{Case $\abs{\cC} \leq \abs{\cB} / 2$:}

    The key observation in this case is that none of the $\Fspan[F]$ are subsets of $\cC_{d}$.
    We analyse the subcase when there are $c/3$ forms in $\cC_{d}$ with $\Fspan[C_{i}] \subset \cA_{d}$ having fractional size less than $\delta$.
    The other subcases have the exact same proof (with the roles of $\cA, \cB$ swapped if $\abs{\cA_{d}} \leq \abs{\cB_{d}}$).
    Suppose forms $C_{1}, \dots, C_{c/3} \in \cC_{d}$ have fractional size less than $\delta$, and suppose $\Fspan[C_{i}] \subset \cA_{d}$.
    Let $Y := AH(0, C_{1}, \dots, C_{c/3})$.
    If there is some $A \in \cA_{d}$ such that $\ideal{C_{i}, A}$ is prime, then $A \in \Fspan[C_{i}]$.
    Therefore each $C_{i}$ is prime with at most $\delta \abs{\cA_{d}}$ elements of $\cA_{d}$.
    By a union bound, there are are at most $(c/3) \delta \abs{\cA_{d}} \leq 4 \varepsilon \abs{\cA}$ elements in $\cA_{d}$ that are prime with some element among $C_{1}, \dots, C_{c/3}$.
    The remaining $(1 - 10 \varepsilon) \abs{\cA}$ elements of $\cA$ are either absolutely reducible over $Y$ or in the algebra.
    Therefore, we have $M_{a}^{Y} \geq (1 - 10 \varepsilon) \abs{\cA} \geq 6 \varepsilon \abs{\cA}$.
    Now we can apply \cref{lem: EK low degree control} starting with $Y$ and the resulting space $V$ is such that one of $m_{a}^{V}, m_{b}^{V}, m_{c}^{V}$ is zero.

    \paragraph{Case $\abs{\cC} \geq \abs{\cB} / 2$:}

    In this case, it is possible that some forms $F \in \cF_{d}$ are such that $\Fspan[F] \subset \cC_{d}$.
    If we are in this case and we have $M_{c}^{(0)} \geq 6 \varepsilon \abs{\cC}$ then we can \cref{lem: EK low degree control} starting with $0$ and we are done.
    Therefore we have to deal with the subcase when $M_{c}^{(0)} \leq 6 \varepsilon \abs{\cC}$.

    If there are $c/3$ forms $F_{1}, \dots, F_{c/3}$ all in $\cA_{d}$ or $\cB_{d}$ with $\Fspan[F] \subset \cC_{d}$ and fractional size less than $\delta$, then we can set $Y := AH(0, F_{1}, \dots, F_{c/3})$, and we have $M_{c}^{Y} \geq (1 - 10 \varepsilon) \abs{\cC} \geq 6 \varepsilon \abs{\cC}$.
    Therefore we can invoke \cref{lem: EK low degree control}.
    In the other cases, for example if the $c/3$ forms are in $\cC_{d}$, then the analysis from the previous case applies.
    Therefore, we have $V$ such that one of $m_{a}^{V}, m_{b}^{V}, m_{c}^{V}$ is zero.

    Now that we have the space $V$, we apply the process of \cref{lem:EK control one to all} to obtain $W$.
    The fact that this entire process is a $(k, t)$ process for the claimed $k, t$ follows simply by adding up the number of iterations in the intermediate processes, and taking a max for the number of forms added at each step of the intermediate processes (for incomparable terms we use the sum to bound the max).
    The claimed bounds on the strength and dimension of $W$ follow from \cref{lem: iterated lifted strength} and the observation that since $U$ is $H(\strong, k, t)$-strong, the $0$ vector space in $R$ is $H(\strong, k, t)$-lifted strong.
\end{proof}

\subsection{Putting it all together}

We can now prove our main theorem for EK-configurations, which we restate here for convenience.

\generalekmain*

\begin{proof}
    For each fixed $e$, we prove the result by induction on $d$.
    Therefore, for the rest of this proof we fix $e$.
    Fix $\strong: \bN^{e} \to \bN^{e}$ such that $\strong_{i}(\delta) = A(\eta, i) + 3 \norm{\delta}_{1}$, where $A$ is the function defined in \cite[Theorem~A]{AH20}.
    We will define the functions $\Lambda_{d, e}$ and $\lambda_{d, e}$ inductively, and simultaneously prove that they have the claimed properties.

    The base case is when $d = 1$.
    Define $\Lambda_{1, e} := h_{2 \strong} \circ t_{2}$.
    If $U$ is $\Lambda_{1, e}$-strong for this choice, then any two linear forms in $R$ that are non associate form a prime ideal by \cite[Proposition~5.10]{OS24}.
    In particular, for any $A \in \cA, B \in \cB$ the ideal $\ideal{A, B}$ is prime, and $\radideal{A, B} = \ideal{A, B}$.
    The second condition for $(1, z, R)$-EK configurations reduces to the condition that the linear span of $A, B$ contains some element in $\cC \cup \bc{z}$.
    The third and fourth conditions reduce to similar symmetric statements.
    This shows that the set $\cF$ is a $(1, 1)$-partial linear EK configuration, where the error subset $\cG$ is the singleton $\bc{z}$, and the partitions $\cA, \cB, \cC$ witness the required partition in the definition of $(1, 1)$-partial linear EK configurations.
    Therefore we can bound the dimension of $\cF$ by $\lambda_{1, e} := 1 + c_{\mathrm{ek}}$.
    This completes the base case.

    Inductively, suppose we have proved the theorem for $d-1$, and in particular suppose the functions $\Lambda_{d-1, e}, \lambda_{d-1, e}$ are defined.
    Let $c := 3 (\primebound{d} + \irredprimebound{d} + 1)$ and $\delta := 8 \cdot \varepsilon \cdot (\primebound{d} + \irredprimebound{d} + 1)^{-1}$.
    Let $k := 6 \cdot d \cdot (\primebound{d} + \irredprimebound{d} + 1)^{3} \cdot \varepsilon^{-1} + c + c_{\mathrm{ek}} \cdot \delta^{-1} \cdot \log{\delta^{-1}}$ and $t := 8 \cdot d \cdot \varepsilon^{-1} + 3$.
    Set $\Lambda_{d, e} := H(\Lambda_{d-1, e}, k, t)$.

    Suppose $U$ is $\Lambda_{d, e}$-strong, and $\cA, \cB, \cC$ is an $(d, R, z)$-EK configuration.
    By \cref{lem:induction step main}, there exists a vector space $W \subset R$ that is $\Lambda_{d-1, e}$-lifted strong such that every degree $d$ form in $\cF$ is either in the ideal generated by $W$, or absolutely reducible with respect to $W$.
    For a general choice of $\alpha$, consider the graded quotient $R\bs{y} \to R\bs{y} / I_{\alpha}$ where $I_{\alpha}$ is the ideal is the ideal corresponding to $W$ and $\alpha$ in $R\bs{y}$.
    By \cref{prop: graded quot of ek}, the image of the configuration $\cA, \cB, \cC$ is itself an $(d, y, R\bs{y} / I_{\alpha})$-EK configuration.
    Further, since every degree $d$ form in $\cF$ is either in the ideal generated by $W$, or absolutely reducible with respect to $W$, we can deduce by \cref{proposition: general quotient general} that the image is in fact a $(d-1, y, R\bs{y} / I_{\alpha})$-EK configuration.
    Note that $R\bs{y} / I_{\alpha} = S\bs{y} / U + I_{\alpha}$.
    Further, since $W$ is $\Lambda_{d-1, e}$-lifted strong, we can deduce that $U + I_{\alpha}$ is generated by a $\Lambda_{d-1, e}$-strong vector space of dimension at most $\dim U + \dim W$.
    We can apply the inductive hypothesis to deduce that the image of the configuration under the graded quotient has dimension at most $\lambda_{d-1, e}(\dim W + \dim U)$.
    By \cref{proposition: lifting general quotient} we can deduce that the dimension of the original configuration is bounded by $\lambda_{d-1, e}(\dim W + \dim U) \cdot (1 + d)^{3 \dim W + 4} \cdot e^{\dim U}$.
    We also have $\dim W \leq \sum_{i=1}^{d} D_{i}(\Lambda_{d-1, e}, k, t, \delta_{U})$.
    Therefore, if we set
    $$\lambda_{d, e}(u) := \max_{\substack{\delta \in \bN^{e} \\ \norm{\delta}_{1} = u}} \lambda_{d-1, e}\br{\sum_{i=1}^{d} D_{i}(\Lambda_{d-1, e}, k, t, \delta) + u} \cdot (1 + d)^{3 \sum_{i=1}^{d} D_{i}(\Lambda_{d-1, e}, k, t, \delta) + 4} \cdot e^{u},$$
    then we are done.
\end{proof}

A straightforward corollary is our main theorem on EK-configurations.

\ekmain*

\begin{proof}
    The vector space $(0)$ is $\strong$-strong for any function $\strong$.
    Therefore we can invoke \cref{theorem: general ek main} with $U = (0)$, and it suffices to pick $\lambda(d) := \lambda_{d, d}(0)$.
\end{proof}

\section{Rank bounds and PIT for depth four identities} \label{sec: circuits PIT}

We start this section by formally establishing the relationship between EK-configurations and $\tspsp$ circuits.
In particular, we show how to define an EK configuration from a $\tspsp$ circuit, and how the ranks of the two objects are related.
Our approach here is to create a configuration by collecting together all irreducible factors of the polynomials computed at each gate, and then homogenising them.

\begin{definition}[Configuration corresponding to $\tspsp$ circuits.]\label{def:circuit to config}
    Given a $\tspsp$ circuit $T = \sum_{i=1}^{3} \prod_{j} P_{ij}$, we define a configuration of forms as follows.
    Let $\cA'$ be the set of irreducible factors of the polynomials $\bc{P_{1j}}_{j}$.
    Define $\cA$ to be the set of forms obtained by homogenising the forms in $\cA'$, where we homogenise using a new variable $z$.
    If $\cA$ contains a subset of forms that are pairwise associate, then we discard all except one form from this subset from $\cA$, so the final set $\cA$ does not have any pair of forms that are associate to each other.
    Similarly, the sets $\cB, \cC$ are defined using the polynomials $\bc{P_{2j}}_{j}$ and $\bc{P_{3j}}_{j}$ respectively.

    The sets $\cA, \cB, \cC$ consist of irreducible homogeneous forms of degree at most $d$ from the ring $S\bs{z}$.
    Further, neither of the three sets contains the form $z$.
\end{definition}

\begin{lemma}
    \label{lem:tspsps is EK}
    Suppose $T = \sum_{i=1}^{3} \prod_{j} P_{ij} = \sum_{i=1}^{3} T_{i}$ is a simple minimal circuit that computes $0$.
    The sets $\cA, \cB, \cC$ as defined in \cref{def:circuit to config} are a $(d, z, S\bs{z})$-EK configuration.
    Further, $\Rank{T} \leq \lambda^{d}$ where $\lambda$ is the rank of the EK configuration.
\end{lemma}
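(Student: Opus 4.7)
The plan is to verify the two assertions separately. For the EK-configuration properties of \cref{def:ek}, irreducibility, the degree bound, and pairwise non-associativity within each set follow directly from \cref{def:circuit to config}. For pairwise non-associativity across the three sets, I observe that if some form in $\cA$ were associate to a form in $\cB$, then its dehomogenisation would be an irreducible polynomial dividing both $T_1$ and $T_2$, hence also dividing $-T_3 = T_1 + T_2$; this contradicts simplicity of $T$. Non-associativity with $z$ is automatic, since homogenising an irreducible polynomial of $S$ to its own degree places the top-degree part of the polynomial as the $z^0$-coefficient, so $z$ cannot divide the resulting form.

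For the three ideal conditions, I would set $D := \max_i \deg T_i$ and homogenise the identity $T_1 + T_2 + T_3 = 0$ to total degree $D$, yielding $\tilde T_1 + \tilde T_2 + \tilde T_3 = 0$ in $S[z]$ with $\tilde T_i := z^{D - \deg T_i} \prod_j \tilde P_{ij}$, where each $\tilde P_{ij}$ factors in $S[z]$ into irreducibles of the appropriate set among $\cA, \cB, \cC$. Since $A \mid \tilde T_1$ and $B \mid \tilde T_2$ for every $A \in \cA$ and $B \in \cB$, the relation $\tilde T_3 = -\tilde T_1 - \tilde T_2 \in (A, B)$, combined with the decomposition $\tilde T_3 = z^{D - \deg T_3} \prod_{C \in \cC} C^{m_C}$ with each $m_C \geq 1$, gives $z \cdot \prod_{C \in \cC} C \in \radideal{A, B}$ after passing to radicals. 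The remaining two containments follow by permuting the roles of $\cA, \cB, \cC$; the auxiliary variable $z$ in \cref{def:ek} is exactly what absorbs the $z^{D - \deg T_i}$ factors introduced by the possibly differing degrees of the three gates.

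For the rank bound, let $\lambda = \dim \Kspan{\cA \cup \cB \cup \cC}$, fix a homogeneous basis $F_1, \ldots, F_\lambda$ of this span, and set $F'_k := F_k|_{z = 1} \in S$. Every irreducible factor of every $P_{ij}$ lies in the linear span of $F'_1, \ldots, F'_\lambda$. Since each $P_{ij}$ is a product of its irreducible factors with total degree at most $d$, substituting in terms of the $F'_k$ expresses $P_{ij}$ as a $\bK$-linear combination of monomials in $F'_1, \ldots, F'_\lambda$ whose total degree in the variables of $S$ is at most $d$. The number of such monomials is at most $\binom{\lambda + d}{d} \leq \lambda^d$, which yields the desired bound on $\Rank{T}$ under its standard interpretation for $\tspsp$ circuits as the dimension of the span of the middle-gate polynomials $P_{ij}$. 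The only delicate point in the proof is the bookkeeping of the $z$-powers in the ideal containments; the rank bound itself is a direct degree-counting consequence of the structural bound on $\Kspan{\cA \cup \cB \cup \cC}$.
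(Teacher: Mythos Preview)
Your proof is correct and follows essentially the same strategy as the paper's. The one noteworthy difference is in how you establish the radical ideal conditions: the paper first works in $S$ with the dehomogenised factors (obtaining $\prod_{C'\in\cC'} C' \in \radideal{A',B'}$ from $T_3 \in (A',B')$) and only then homogenises this membership, whereas you homogenise the entire identity $T_1+T_2+T_3=0$ to a common degree $D$ at the outset and argue directly in $S[z]$. Your ordering is arguably cleaner, since it makes the origin of the extra $z$ in $z\cdot\prod_{C}C$ completely explicit (it is the padding factor $z^{D-\deg T_3}$), while the paper's ``upon homogenisation'' step compresses the same bookkeeping. The non-associativity argument and the rank bound are handled identically in both proofs; your inequality $\binom{\lambda+d}{d}\le \lambda^d$ is not literally true for very small $\lambda$, but the paper asserts the same $\lambda^d$ bound and only a function of $(\lambda,d)$ is needed downstream.
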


\begin{proof}
    Note that $T$ continues to compute zero, and continues to be simple and minimal when we consider it as a polynomial in $S\bs{z}$.
    For a circuit computing zero with top fan-in three, minimality is equivalent to the pairwise linear independence of $T_{1}, T_{2}, T_{3}$.
    In particular, none of $T_{1} + T_{2}, T_{2} + T_{3}, T_{1} + T_{3}$ are zero.
    This will be used implicitly throughout the rest of the argument.
    
    By construction, the sets $\cA, \cB, \cC$ each consist of non-associate forms.
    Suppose a form $A \in \cA$ and a form $B \in \cB$ are associate.
    Suppose $A, B$ are obtained by homogenising $A', B'$ respectively.
    Since $A, B$ are associate, the polynomials $A', B'$ are scalar multiples of each other.
    Since $A', B'$ are factors of $T_{1}, T_{2}$ respectively, and since $T = 0$, it must be that $A'$ is also a factor of $T_{3}$.
    This implies that $A' | \gcd\br{T_{1}, T_{2}, T_{3}}$, contradicting simplicity.
    Therefore $\cA$ and $\cB$ are disjoint, and the union $\cA \cup \cB$ consists of non-associate forms.
    A symmetric argument with the other pairs of sets shows that $\cA, \cB, \cC$ are pairwise disjoint, and that the union consists of non-associate forms.
    Since none of these sets contain $z$, the union $\bc{z}\cup \cA \cup \cB \cup \cC$ consists of pairwise non-associate forms.
    This shows the first condition.

    Now suppose $A \in \cA, B \in \cB$ are forms that are obtained by homogenising $A', B'$ respectively.
    We have $T_{3} \in \ideal{A', B'} \subset \radideal{A', B'}$.
    Let $\cC'$ be the set of irreducible factors of $\bc{P_{3j}}_{j}$, we have $\prod_{C' \in \cC'} C' \in \radideal{T_{3}} \subset \radideal{A', B'}$.
    Upon homogenisation, we obtain $\prod_{C \in \cC} C \in \radideal{zA, zB} \subset \radideal{A, B}$.
    Therefore $z \cdot \prod_{C \in \cC} C \in \radideal{A, B}$.
    This shows that the second condition for the configuration to be an EK configuration holds, and by symmetric arguments so do the third and fourth condition.
    
    We now show the final statement.
    The homogenisation of $P_{ij}$ is a polynomial of degree at most $d$ in a basis for the EK configuration.
    Therefore, these homogenisations span a vector space of dimension at most $\lambda^{d}$.
    Picking a basis and setting $z = 1$ gives a basis for the forms $P_{ij}$.
\end{proof}

Our results on rank bounds for $\tspsp$ circuits and deterministic PIT algorithm follows easily from our bounds on $EK$-configuration.
We restate the corollaries for convenience.

\rankbound*

\begin{proof}

    By \cref{def:circuit to config} we can define an EK configuration from the circuit $\Phi$.
    By \cref{theorem: EK main}, the rank of the configuration is bounded by $\lambda(d)$ for the function $\lambda$ defined in the theorem.
    By \cref{lem:tspsps is EK}, the rank of the circuit is bounded by $\rho(d) := \lambda(d)^{d}$.
\end{proof}

\mainPIT*

\begin{proof}
    By \cref{theorem: rankbound for identities}, simple minimal $\tspsp$ identities have rank bounded by $\rho(d)$.
    Therefore by \cite[Theorem~2]{beecken2013algebraic}, there is an algorithm for identity testing of such circuits that runs in time $\br{d s \rho(d) n}^{\bigO{d^{2} \rho(d)}}$, where $s$ is the size of the circuit.
\end{proof}

\section{Conclusion}\label{section: conclusion}

We prove rank bounds and therefore give the first polynomial time identity testing algorithm for the class of $\tspsp$ circuits.
We do so by showing that higher degree generalisations of Edelstein-Kelley configurations have bounded rank.
Our work builds upon the framework introduced in \cite{OS24} and refined in \cite{GOS24}.

The main open problem left by our work is to show rank bounds for $\spsp$ circuits, with $k > 3$.
This is the main conjecture in the work of \cite{gupta2014algebraic} and \cite{beecken2013algebraic} in the bounded bottom fan-in regime.
The main problem in this more general settings is that in such circuits, the same form can occur in more than one gate, even if the circuit is simple and minimal.
This increases the difficulty of the combinatorial steps in our proof.

\printbibliography

\end{document}